\setlist[itemize]{leftmargin=*}
\definecolor{rv1}{rgb}{1.0, 0.44, 0.37}
\definecolor{rv2}{rgb}{0.4, 1.0, 0.0}
\definecolor{rv3}{rgb}{0.0, 0.75, 1.0}
\definecolor{rvt}{rgb}{0.75, 0.75, 0.75}
\definecolor{my-blue}{cmyk}{0.1, 0.0, 0.0, 0.0, 1.00}
\newtheoremstyle{exampstyle}
  {7pt} % Space above
  {7pt} % Space below
  {\itshape} % Body font
  {} % Indent amount
  {\bfseries} % Theorem head font
  {.} % Punctuation after theorem head
  {.5em} % Space after theorem head
  {} % Theorem head spec (can be left empty, meaning `normal')
\theoremstyle{exampstyle}
\newtheorem{theorem}{Theorem}
\newtheorem{prepos}{Proposition}
\newtheorem{lemma}{Lemma}
\newtheorem{definition}{Definition}
\DeclareMathOperator*{\argmin}{arg\,min~}
\newcommand{\algrule}[1][.2pt]{\par\vskip.5\baselineskip\hrule height #1\par\vskip.5\baselineskip}
\newcommand{\Ab}{{\bf A}}
\newcommand{\Mb}{{\bf M}}
\newcommand{\xb}{{\bf x}}
\newcommand{\yb}{{\bf y}}
\newcommand{\zb}{{\bf z}}
\newcommand{\Db}{{\bf D}}
\newcommand{\Bb}{{\bf B}}
\newcommand{\bb}{{\bf b}}
\newcommand{\Xb}{{\bf X}}
\newcommand{\Wb}{{\bf W}}
\newcommand{\thetab}{{\mbox{\boldmath $\theta$}}}
\newcommand{\Ub}{{\mathbf U}}
\newcommand{\vb}{{\mathbf v}}
\newcommand{\Vb}{{\mathbf V}}
\newsavebox\mybox
\begin{document}

\title{Matrix Completion via Nonsmooth Regularization of Fully Connected Neural Networks}

\author{Sajad Faramarzi, Farzan Haddadi, Sajjad Amini, Masoud Ahookhosh
%\author{Sajad Faramarzi}
%\thanks{This paragraph of the first footnote will contain the date on which you submitted your paper for review. It will also contain support information, including sponsor and financial support acknowledgment. For example, ``This work was supported in part by the U.S. Department of Commerce under Grant BS123456.'' }
\thanks{S. Faramarzi (e-mail: sajad.faramarzi1397@gmail.com) and F. Haddadi (e-mail: farzanhaddadi@iust.ac.ir) are with the School of Electrical Engineering, Iran University of Science $\&$ Technology, Tehran, Iran. S. Amini (e-mail: s\_amini@sharif.edu) is affiliated with the Electronics Research Institute (ERI) at Sharif University of Technology, Tehran, Iran. Additionally, S. Amini is a Visiting Faculty member at the College of Information and Computer Sciences, University of Massachusetts Amherst, Amherst, MA, USA. M. Ahookhosh (e-mail: masoud.ahookhosh@uantwerp.be) is with Department of Mathematics at the University of Antwerp, Antwerp, Belgium.}}

%\markboth{IEEE Transactions on Neural Networks and Learning Systems}%
%{Shell \MakeLowercase{\textit{et al.}}: Bare Demo of IEEEtran.cls for IEEE Journals}

\maketitle

\begin{abstract}
    Conventional matrix completion methods approximate the missing values by assuming the matrix to be low-rank, which leads to a linear approximation of missing values. It has been shown that enhanced performance could be attained by using nonlinear estimators such as deep neural networks. Deep fully connected neural networks (FCNNs), one of the most suitable architectures for matrix completion, suffer from over-fitting due to their high capacity, which leads to low generalizability. In this paper, we control over-fitting by regularizing the FCNN model in terms of  
    \textcolor{black}{the}
    $\ell_{1}$ norm of intermediate representations and nuclear norm of weight matrices. As such, the resulting regularized objective function becomes nonsmooth and nonconvex, i.e.,  
    \textcolor{black}{existing}
    gradient-based methods \textcolor{black}{cannot be applied to our model}. We propose a variant of \textcolor{black}{the proximal gradient method} and investigate its convergence to \textcolor{black}{a} critical point.  \textcolor{black}{In the initial epochs of FCNN training, the regularization terms are ignored, and through epochs, the effect of that increases. The gradual addition of nonsmooth regularization terms is the main reason for the better performance of the deep neural network with nonsmooth regularization terms (DNN-NSR) algorithm.} Our simulations indicate the superiority of the proposed algorithm in comparison with existing linear and nonlinear algorithms.
\end{abstract}
\vspace*{-0.2em}
% Note that keywords are not normally used for peerreview papers.
\begin{IEEEkeywords}
Matrix completion, nonsmooth regularization terms, over-fitting, proximal operator, DNN-NSR algorithm.
\end{IEEEkeywords}

\vspace*{-1em}

\IEEEpeerreviewmaketitle

\section{Introduction}
\vspace*{-0.8em}
\IEEEPARstart{M}{atrix} completion  \bstctlcite{IEEEexample:BSTcontrol} aims at estimating the empty entries of a matrix with partial observations \cite{exact_matrix_completion, power_of_convex_relaxation, high_coherence_matrix_completion, non_convex_relaxation_adaptive, non_linear_matrix_completion, deep_learning_auto_encoder}, which is widely used in image inpainting \cite{image_inpainting_overview, light_image_inpainting}, wireless sensor networks localization \cite{wireless_sensor_network, localization}, recommender systems \cite{recommender_system_not_random, collaborative_filtering_factorization}, etc.
There are two significant categories of matrix completion methods: (i) linear estimators using low-rank matrix completion (LRMC) and (ii) nonlinear matrix completion (NLMC). 

Rank minimization and matrix factorization (MF) are two examples of LRMC methods. Given an incomplete data matrix $\Xb \in \mathbb{R}^{m \times n}$, matrix completion problem can be formulated as
\begin{equation}\label{rank_minimization}
\min_{\Mb} \: \:  \textrm{rank}(\Mb) \quad 
\textrm{s.t.} \quad  \Mb_{ij} = \Xb_{ij}, \: (i,j) \in \Omega,
\end{equation}
where $\Mb \in \mathbb{R}^{m \times n}$, $\Omega$ is the set of positions of observed entries, and $\Xb_{ij}$ and $\Mb_{ij}$ are the $(i, j)$-th entry of $\Xb$ and $\Mb$.
Due to its nonconvexity and discontinuous nature, solving the rank minimization problem given in \eqref{rank_minimization} is NP-hard \cite{exact_matrix_completion}. Alternatively, the nuclear norm is used as a convex approximation of the rank function \cite{convex_envelop}, and problem \eqref{rank_minimization} can be translated to
\begin{equation}\label{nuclear_norm_minimization}
\min_{\Mb} \: \: \lVert\Mb \rVert_{*} \quad
\textrm{s.t.} \quad  \Mb_{ij} = \Xb_{ij}, \: (i,j) \in \Omega,
\end{equation}
where $\lVert \cdot \rVert_{*}$ stands for nuclear norm.  %Theoretical studies show that the nuclear norm is the best convex surrogate of the rank function on the matrices whose singular values are all bounded by one \cite{convex_envelop}. 

%Approaches based on Singular Value Thresholding  (SVT) \cite{svt1} and Inexact Augmented Lagrange Multiplier (IALM) \cite{IALM} can be used to solve \eqref{nuclear_norm_minimization} because the nuclear norm is not a good approximation of the rank function, and good performance is not obtained.
%In order to have better approximations of the rank function, new extensions have been provided for it \cite{TNN, TNNWRE}.
%The nuclear norm
%subtracted by the sum of the most significant $r$ singular values; is a better approximation than the nuclear norm, called Truncated Nuclear Norm (TNN) method for matrix completion \cite{TNN}. To improve the convergence rate and estimate $r$ of the
%TNN method, a Truncated Nuclear Norm Regularization Method based on Weighted Residual Error (TNNWRE) is proposed \cite{TNNWRE}. 

Approaches based on Singular Value Thresholding  (SVT) \cite{svt1} and Inexact Augmented Lagrange Multiplier (IALM) \cite{IALM} were suggested to deal with \eqref{nuclear_norm_minimization}.
In order to have better approximations of the rank function, new extensions were studied in \cite{TNN, TNNWRE}. In addition, the Truncated Nuclear Norm (TNN) method for matrix completion was proposed in \cite{TNN}. To improve the convergence rate of the TNN method, a Truncated Nuclear Norm Regularization Method based on Weighted Residual Error (TNNWRE) was given in \cite{TNNWRE}. 

\vspace*{-0.15em}
Rank function and nuclear norm are special cases of \textcolor{black}{the} Schatten $p$-norm with $p=0$ and $p=1$, respectively, i.e., smaller values of $p$ are better approximations of the rank function and improve performance in matrix completion \cite{Schatten}. Although,  the Schatten $p$-norm  is not convex when $0 < p < 1$, thus finding the global optimizer is not generally guaranteed. On the other hand, common algorithms for minimizing the nuclear \textcolor{black}{norm} or the Schatten $p$-norm cannot scale to large matrices because they \textcolor{black}{need} the singular value decomposition (SVD) in every
iteration of the optimization \cite{complexity_of_nuclear_norm1, complexity_of_nuclear_norm2}.
Factored nuclear norm (F-nuclear norm) is a SVD-free method \cite{factor_group1, factor_group2}. Adaptive corRelation Learning (NCARL) for the LRMC problem \cite{NCARL} applies a nonconvex surrogate instead of the rank function, which has a faster convergence rate and is solved by closed-form intermediate solutions.

\textcolor{black}{Alternatively, one can apply} MF approaches to \textcolor{black}{handle} the LRMC problem. Approximating a low-rank matrix $\Mb \in \mathbb{R}^{m \times n}$ with a matrix product form $\Mb = \Ab \Bb$ where $\Ab \in \mathbb{R}^{m \times r}$ and $\Bb \in \mathbb{R}^{r \times n}$ with rank $r < \textrm{min}(m,n)$ is a more straightforward way than minimizing the nuclear norm. Therefore, the following nonconvex model was proposed in \cite{MF1} 
\vspace*{-0.5em}
\begin{equation}\label{MF1}
\min_{\Ab,\Bb,\Mb} \: \: \lVert\Ab \Bb - \Mb \rVert_{F}^{2} \quad
\textrm{s.t.} \quad  \Mb_{ij} = \Xb_{ij}, \: (i,j) \in \Omega,
\end{equation}
where $\Vert\cdot\Vert_F$ stands for \textcolor{black}{the} Frobenius norm. Let us emphasize that linearity is the fundamental assumption in \textcolor{black}{the} LRMC methods \cite{linear_MC}, which covers rank approximation and MF approaches. However, it is notable that \textcolor{black}{the performance of such methods is poor} if data comes from a nonlinear nature \cite{NLMC_image_inpainting, NLMC_Recommender_system}. 

%Consider the nonlinear model for matrix completion as $\Xb = f(\Qb)$ where $f(\cdot)$ is a nonlinear function, $\Xb \in \mathbb{R}^{m \times n}$, and $\Qb \in \mathbb{R}^{r \times n}$.

In recent years, there has been an interest in applying nonlinear models for matrix completion. As an example, kernelized probabilistic matrix factorization (KPMF) was proposed in \cite{KPMC}, where the matrix is assumed to be the product of two latent matrices sampled from two different zero-mean Gaussian Processes (GP). Covariance functions of the GPs are derived from side information and create the nonlinear part of the model. The NLMC methods map data into some feature space with linear structures via a nonlinear function. The resulting matrix is low-rank, and minimizing its Schatten $p$-norm can recover the missing entries~\cite{NLMC}.

Artificial Neural Network (ANN) is another powerful tool to deal with NLMC, where the missing entries of the matrix are estimated by a nonlinear function that results from a nonlinear activation function. 
%Recall that the relationship between input and output for ANNs with a nonlinear activation function creates a flexible nonlinear model.
Restricted Boltzmann Machine (RBM), AutoEncoders (AEs), and recurrent neural networks (RNNs) are examples of ANN used for NLMC \cite{RBM, AECF, recurrent_NLMC}. Feed-forward neural networks and the multi-layer perceptron (MLP) are of great interest for solving NLMC due to their particular structure \cite{AECF, AEMC_DLMC, CSRMC, MCKR, bi_branch}. The first AE-based method for NLMC was AutoEncoder-based
Collaborative Filtering (AECF) \cite{AECF}. Missing entries of the matrix are replaced by pre-defined constant and taken by AECF as input. The biases introduced by the pre-defined
constants affect the performance of AECF. The modification of AECF with the help of defining the loss function that only considers the outputs corresponding to the input led to the introduction of the AutoEncoder-based Matrix Completion (AEMC) method \cite{AEMC_DLMC}. Deep Learning-based Matrix Completion (DLMC) is the deep version of AEMC and can outperform shallow methods of matrix completion \cite{AEMC_DLMC}. 

Over-fitting is a problem with ANNs that originates from the model's high-representation capacity provided for limited training data. This issue becomes more severe in matrix completion problems in which the training data is an incomplete matrix with minimal available entries. One of the ways to deal with over-fitting is to apply some regularizations to the underlying nonlinear model. In \cite{AEMC_DLMC}, the Frobenius norm of the weight matrices was suggested for regularization. In \cite{CSRMC}, two AEs were considered, one for row vectors and the other for column vectors. The resulting architecture is regularized using the cosine similarity criterion between the outputs of the code layers. The idea of using Matrix Completion Kernel Regularization (MCKR) comes from a classical perspective, where kernels transfer data points into high-dimensional space that a nonlinear problem is transferred into a linear one \cite{MCKR}.

Bi-branch neural networks (BiBNNs) with two independent fully connected neural networks (FCNNs) were introduced for matrix completion \cite{bi_branch} using the MF technique. For $\Mb^{m \times n} = \Ab^{m \times r} \Bb^{r \times n}$,  $\Mb_{ij}$ is created using the $i$-th row of $\Ab$ and $j$-th column of $\Bb$ as inputs into branches, and $\widehat{\Mb}_{ij}$ is estimated using nonlinear transforms. %The authors have also provided the convergence guarantee of model parameters.
For a nonlinear function $f(\cdot)$, the model $\Mb^{m \times n} = f(\Ab) + \Ab^{m \times r} \Bb^{r \times n} $ was proposed in \cite{mehrdad}, applying a neural network with two separate branches. Both branches work to recover the partially observed matrix in multi-task learning with regularization terms.

This paper introduces novel nonsmooth regularizers to deal with over-fitting issues arising in NLMC models, i.e., conventional smooth optimization techniques such as Stochastic Gradient Descent (SGD) can not be used for training. To deal with this issue, the pretraining of AEs with one regularization term for a two-layer FCNN was suggested using proximal methods in \cite{amini2019}. The approach is extended for training feed-forward neural networks with one nonsmooth regularization term in~\cite{amini2022}. Extrapolation of parameters is used to increase the algorithm's convergence rate. We here propose an extrapolated proximal gradient method for training deep neural networks. Our main contributions are summarized as follows:
\begin{itemize}
    \item We deal with the over-fitting issue of the ANNs by taking advantage of the $\ell_{1}$ norm of the outputs of all hidden layers and the nuclear norm of the weight matrices. These regularization terms control the ANNs parameters which increase the generalizability of the model.
    %\item One of the essential elements in the training of any ANN is the suitable dataset. The matrix completion problem dataset is a matrix with partial observations, hence training ANN with this dataset, can cause over-fitting. The introduced regularization terms control the parameters of the ANNs and increase the generalizability of it.
    \item Our extrapolated proximal gradient method is designed to use the well-known penalty method, i.e., our method gradually adds the effect of the regularization terms to the loss function, improving the ANNs training performance. This is called \textbf{gradual learning}.
    \item We show that conditions needed for any general algorithm in \cite{lipschitz_gradient} to converge to a critical point are met in our deep neural network model with nonsmooth regularization terms.
    \item Comparison of the simulation results of the proposed algorithm with two classical algorithms, IALM and NCARl, and three algorithms based on deep neural networks (DNNs), AEMC, DLMC, and BiBNN, show its superiority.
\end{itemize}

The rest of the paper is organized as follows. The notation and preliminaries are presented in section II. The formulation of the matrix completion problem and previously proposed regularizers are given in Section III. Our proposed method is given in section IV. The experimental results and conclusion are presented in section V and section VI, respectively.
\FloatBarrier

\vspace*{-1.2em}
\section{Notations and preliminaries}
\vspace*{-0.7em}
We use bold uppercase and lowercase letters throughout the paper to show matrices and vectors, respectively. The $i$-th element of vector $\xb$ is denoted by $x_{i}$, and $w_{ij}$ shows the element of matrix $\Wb$ at row $i$ and column $j$. $\textrm{vec}(\cdot)$ indicates a vectorizing operator that produces column-wise concatenation of the input matrix as output, and $\mathrm{vec}^{-1}(\cdot)$ is the inverse operator. Column-wise and row-wise concatenation of $m$ vectors are denoted by \textcolor{black}{$\xb = [\xb_{1}, \dots, \xb_{m}]$} and $\yb = [\yb_{1}; \dots; \yb_{m}]$, respectively. $\lVert \cdot \rVert_{p}$ and $\lVert \cdot \rVert_{F}$ indicate the $\ell_{p}$ and Frobenius norms, respectively. $\lVert \cdot \rVert_{1}$ is the $\ell_{1}$ vector norm that outputs the sum of the absolute values. $\lVert \cdot \rVert_{*}$ is the nuclear norm of a matrix equal to the sum of its singular values.

\begin{definition}(Proximal Operator \cite{proximal_operator}) \label{def_prox_operator}:
    \textcolor{black}{Let $g: \mathbb{R}^{n} \rightarrow \mathbb{R} \cup \{+\infty\}$ be a proper and lower semi-continuous function. For $\alpha \mu > 0$, the proximal operator $\mathrm{Prox}_{\alpha \mu f} : \mathbb{R}^{n} \rightarrow \mathbb{R}^{n}$ of $g(\xb) = \alpha \mu f(\xb)$ is defined as: $\mathrm{Prox}_{g}(\xb) = \argmin _{\yb} \big(\alpha f(\yb) + \frac{1}{2\mu} \lVert \yb - \xb \rVert^{2}\big)$.}
\end{definition}

We can calculate the proximal operator for different functions based on the above definition.

\begin{lemma}(Proximal operator for $\ell_{1}$ norm \cite{amini2018})\label{lemma1}: The proximal operator of $g(\xb) = \alpha \lVert \xb \rVert_{1}$ is the soft-thresholding function defined as.
	\small
    \begin{align}\label{eq1}
    \big[\mathrm{Prox}_{g}(\xb)\big]_{i} &=\left\lbrace \begin{array}{lc}
    x_{i} -  \alpha, & \mathrm{if} \: \: \: \: x_{i} >  \alpha ,\\
    0,&\mathrm{if} \: \: |x_{i}| < \alpha ,\\
    x_{i} +  \alpha, & \mathrm{if} \: x_{i} <  -\alpha .
    \end{array}\right. 
    \end{align}
    \normalsize
\end{lemma}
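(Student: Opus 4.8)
The plan is to exploit the \textbf{separability} of the objective defining the proximal operator. Applying Definition~\ref{def_prox_operator} with $f(\cdot) = \no{\cdot}$ and $\mu = 1$ so that $g(\xb) = \alpha\no{\xb}$, the quantity to be characterised is
\begin{equation*}
\prox_g(\xb) = \argmin_{\yb} \; \alpha \no{\yb} + \tfrac{1}{2}\nt{\yb - \xb}^{2}.
\end{equation*}
Since $\no{\yb} = \sum_{i} |y_{i}|$ and $\nt{\yb - \xb}^{2} = \sum_{i}(y_{i} - x_{i})^{2}$ both decompose coordinate-wise, the whole objective splits as $\sum_{i} \phi_{i}(y_{i})$ with $\phi_{i}(t) = \alpha |t| + \tfrac{1}{2}(t - x_{i})^{2}$. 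Hence the vector minimization reduces to $n$ independent scalar problems $\min_{t}\phi_{i}(t)$, and it suffices to solve one of them.

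For the scalar problem I would first observe that $\phi_{i}$ is strictly convex, being the sum of the strictly convex quadratic $\tfrac{1}{2}(t - x_{i})^{2}$ and the convex term $\alpha|t|$; this guarantees a unique minimizer and hence the well-definedness of $\prox_g$. Because $\phi_{i}$ is nonsmooth at the origin, I would characterise its minimizer through the first-order optimality condition $0 \in \partial \phi_{i}(t)$, using the subdifferential $\partial|t| = \{\sgn(t)\}$ for $t \neq 0$ and $\partial|t| = [-1,1]$ for $t = 0$.

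The result then follows from a short case analysis on the sign of the optimal $t$. For $t > 0$ the inclusion reads $\alpha + (t - x_{i}) = 0$, giving $t = x_{i} - \alpha$, which is consistent with $t > 0$ exactly when $x_{i} > \alpha$; symmetrically $t < 0$ yields $t = x_{i} + \alpha$ precisely when $x_{i} < -\alpha$; and $t = 0$ is optimal iff $0 \in [-\alpha, \alpha] - x_{i}$, that is, iff $|x_{i}| \le \alpha$. Assembling the three branches reproduces the soft-thresholding map of \eqref{eq1}.

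I do not expect a genuine obstacle: the only delicate point is the kink at zero, which the subdifferential inclusion handles cleanly, and the boundary values $|x_{i}| = \alpha$ (where the minimizer is $0$) glue consistently onto the strict-inequality branches. Once separability is invoked the argument is entirely elementary.
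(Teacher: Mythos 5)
Your proof is correct. Note, however, that the paper itself offers no proof of this lemma: it imports the soft-thresholding formula by citation to \cite{amini2018}, so there is no internal argument to compare yours against. Your derivation --- coordinate-wise separability of $\alpha\lVert\yb\rVert_{1}+\tfrac{1}{2}\lVert\yb-\xb\rVert^{2}$, strict convexity of each scalar piece, and the subdifferential inclusion $0\in\partial\phi_{i}(t)$ with the three-way sign analysis --- is the standard and complete way to establish the result, and it also cleanly handles the boundary case $|x_{i}|=\alpha$ that the lemma's statement leaves implicit.
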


\begin{definition}\label{def2}(Singular value shrinkage operator \cite{svt1}): 
    Consider the SVD of a matrix $\Xb \in \mathbb{R}^{m \times n}$, i.e., \\
    \begin{equation}\label{svd}
    \Xb = \Ub \mathbf{\Sigma} \Vb^{T}, \: \mathbf{\Sigma} = \mathrm{diag}\big(\{\sigma_{i}\}_{1 \leq i \leq \mathrm{min}(m,n)}\big).
    \end{equation}
    Define the singular value shrinkage operator $\Db_{\tau}$ as
    \begin{equation}\label{svshrinkage}
    \Db_{\tau}(\Xb) = \Ub \mathbf{\Sigma}_{\tau} \Vb^{T}, \: \mathbf{\Sigma}_{\tau} = \mathrm{diag} \big(\mathrm{\max} \{\sigma_{i}-\tau, 0\}\big).
    \end{equation}
\end{definition}

\begin{lemma}(Proximal operator for nuclear-norm \cite{svt1})\label{lemma3}: The proximal operator of $g(\Xb) = \beta \lVert \Xb \rVert_{*}$ is given by
    \begin{equation}
    \mathrm{Prox}_{g}(\Xb) = \Db_{\beta}(\Xb).
    \end{equation}
\end{lemma}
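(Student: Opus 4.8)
The plan is to exploit the variational characterization of the proximal operator and reduce the claim to a first-order optimality condition. By Definition~\ref{def_prox_operator} (after clearing the factor $\mu$ from the objective), $\prox_{g}(\Xb)$ is the unique minimizer of
\begin{equation*}
h(\Yb) = \beta \lVert \Yb \rVert_{*} + \tfrac{1}{2} \nf{\Yb - \Xb}^{2},
\end{equation*}
where uniqueness follows because the quadratic term makes $h$ strictly convex while the nuclear norm is convex. Since $h$ is convex, its global minimizer $\hat\Yb$ is characterized by the inclusion $0 \in \partial h(\hat\Yb)$, i.e. $\Xb - \hat\Yb \in \beta \, \partial \lVert \hat\Yb \rVert_{*}$. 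The whole proof then amounts to verifying that the candidate $\hat\Yb = \Db_{\beta}(\Xb)$ satisfies this inclusion; strict convexity upgrades ``a minimizer'' to ``the minimizer'', so no separate uniqueness argument is needed.

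The key ingredient I would invoke is the subdifferential of the nuclear norm. Writing the compact SVD of $\hat\Yb$ as $\hat\Yb = \Ub_{0}\mathbf{\Sigma}_{0}\Vb_{0}^{T}$ with $\mathbf{\Sigma}_{0} \succ 0$, one has
\begin{equation*}
\partial \lVert \hat\Yb \rVert_{*} = \big\{ \Ub_{0}\Vb_{0}^{T} + \Wb : \Ub_{0}^{T}\Wb = 0,\ \Wb\Vb_{0} = 0,\ \nt{\Wb} \leq 1 \big\},
\end{equation*}
which is standard and follows from the duality between the nuclear norm and the spectral norm $\nt{\cdot}$. I would either cite this or sketch it from the fact that the spectral norm is the dual norm of the nuclear norm.

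For the verification I would split the SVD of $\Xb$ according to the threshold. Partition $\Xb = \Ub \mathbf{\Sigma} \Vb^{T}$ as $\Ub = [\Ub_{0}, \Ub_{1}]$, $\Vb = [\Vb_{0}, \Vb_{1}]$, $\mathbf{\Sigma} = \mathrm{diag}(\mathbf{\Sigma}_{0}, \mathbf{\Sigma}_{1})$, where $\mathbf{\Sigma}_{0}$ collects the singular values exceeding $\beta$ and $\mathbf{\Sigma}_{1}$ those at most $\beta$. Then $\hat\Yb = \Db_{\beta}(\Xb) = \Ub_{0}(\mathbf{\Sigma}_{0} - \beta \Ib)\Vb_{0}^{T}$, so its left/right singular vectors are exactly $\Ub_{0}, \Vb_{0}$, and a direct computation gives $\Xb - \hat\Yb = \beta \Ub_{0}\Vb_{0}^{T} + \Ub_{1}\mathbf{\Sigma}_{1}\Vb_{1}^{T}$. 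Dividing by $\beta$, I would set $\Wb = \tfrac{1}{\beta}\Ub_{1}\mathbf{\Sigma}_{1}\Vb_{1}^{T}$ and check the three membership conditions: $\Ub_{0}^{T}\Wb = 0$ and $\Wb\Vb_{0} = 0$ follow from orthonormality of the blocks ($\Ub_{0}^{T}\Ub_{1} = 0$, $\Vb_{1}^{T}\Vb_{0} = 0$), while $\nt{\Wb} = \tfrac{1}{\beta}\nt{\mathbf{\Sigma}_{1}} \leq 1$ because every entry of $\mathbf{\Sigma}_{1}$ is at most $\beta$. This places $\tfrac{1}{\beta}(\Xb - \hat\Yb)$ in $\partial \lVert \hat\Yb \rVert_{*}$ and closes the argument.

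The main obstacle is establishing (or correctly citing) the subdifferential formula for the nuclear norm, since the orthogonality constraints $\Ub_{0}^{T}\Wb = 0$, $\Wb\Vb_{0}=0$ are precisely what make the spectral-norm bound $\nt{\Wb}\le 1$ the right condition; getting these constraints wrong would break the verification. A minor technical care point is the boundary case of singular values equal to $\beta$ (they may be assigned to either block, but must be placed so that $\mathbf{\Sigma}_{0} - \beta\Ib$ is strictly positive, keeping the SVD of $\hat\Yb$ well-defined), together with the bookkeeping of the factor $\mu$ when translating Definition~\ref{def_prox_operator} into the objective $h$.
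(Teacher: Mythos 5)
Your proof is correct and follows essentially the same route as the proof in the cited reference \cite{svt1} (the paper itself states Lemma~3 by citation only and gives no proof of its own): reduce to the first-order optimality inclusion $\Xb - \hat{\Yb} \in \beta\,\partial \lVert \hat{\Yb} \rVert_{*}$ and verify it for $\hat{\Yb} = \Db_{\beta}(\Xb)$ using the nuclear-norm subdifferential formula, with the block split of the SVD at the threshold $\beta$. Your handling of the boundary case (singular values equal to $\beta$ assigned to the discarded block so that $\mathbf{\Sigma}_{0} - \beta \Ib$ stays positive definite) and of the strict-convexity/uniqueness point matches the standard argument, so there is nothing further to compare.
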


\begin{lemma}[Proximal operator for $\ell_\infty$ norm \cite{amini2019}]\label{lemma:infnorm}
    \textcolor{black}{
    The proximal operator of $r_{\theta}(\thetab)$  is denoted by $\mathbf{u} = \mathrm{Prox}_{ \mu_{\theta,k}, r_{\theta}}(\thetab)$ for $\mu_{\theta, k} > 0$. $r_{\theta}(\thetab)$ is defined as
\begin{align}\label{eq29}
r_{\theta}(\thetab)&=\left\lbrace \begin{array}{lc}
0, & \lVert \thetab \rVert_{\infty} \le M, \\
\infty, & \lVert \thetab \rVert_{\infty} \ge M,
\end{array}\right. 
\end{align}
where $\lVert \thetab \rVert_{\infty} = \mathrm{max}_{i} \: \: \lvert \theta_{i} \rvert$ and $M$ is a large but bounded value.
    \begin{align}\label{eq47}
    u_{i}&=\left\lbrace \begin{array}{lc}
    \theta_{i}, & \mathrm{if} \: \theta_{i} \le M , \\
    M \mathrm{sign}(\theta_{i}), & \mathrm{if} \: \theta_{i} > M .
    \end{array}\right. 
    \end{align}}
\end{lemma}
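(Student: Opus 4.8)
The plan is to observe that $r_{\theta}$ is, up to its behavior on the boundary, the indicator function of the $\ell_\infty$-ball $C = \{\thetab \in \mathbb{R}^{n} : \ninf{\thetab} \le M\}$: it equals $0$ on $C$ and $+\infty$ off $C$. First I would invoke Definition~\ref{def_prox_operator} to write
\begin{equation*}
\mathbf{u} = \mathrm{Prox}_{\mu_{\theta,k} r_{\theta}}(\thetab) = \argmin_{\mathbf{v}} \Big( r_{\theta}(\mathbf{v}) + \frac{1}{2\mu_{\theta,k}} \nt{\mathbf{v} - \thetab}^{2} \Big).
\end{equation*}
Since $r_{\theta}(\mathbf{v})$ equals $0$ for $\mathbf{v} \in C$ and $+\infty$ otherwise, any minimizer must lie in $C$, and on $C$ the first term vanishes. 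Because $\mu_{\theta,k} > 0$, the positive factor $1/(2\mu_{\theta,k})$ does not affect the argmin, so the prox reduces to the Euclidean projection $\mathbf{u} = \argmin_{\mathbf{v} \in C} \nt{\mathbf{v} - \thetab}^{2}$ onto $C$.

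Next I would exploit the product structure of the constraint set. The condition $\ninf{\mathbf{v}} \le M$ is equivalent to the $n$ decoupled box constraints $-M \le v_{i} \le M$, and the squared Euclidean distance $\nt{\mathbf{v} - \thetab}^{2} = \sum_{i} (v_{i} - \theta_{i})^{2}$ is separable across coordinates. Hence the projection splits into $n$ independent scalar problems $u_{i} = \argmin_{-M \le v_{i} \le M} (v_{i} - \theta_{i})^{2}$, each of which is the projection of the real number $\theta_{i}$ onto the interval $[-M, M]$. The solution is the clipping map: $u_{i} = \theta_{i}$ when $|\theta_{i}| \le M$, and $u_{i} = M\,\mathrm{sign}(\theta_{i})$ when $|\theta_{i}| > M$, which matches the stated formula \eqref{eq47} (the positive branch being $\theta_{i} > M \Rightarrow u_{i} = M$, with the negative branch $\theta_{i} < -M \Rightarrow u_{i} = -M$ obtained symmetrically).

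There is no genuine analytic obstacle here; the argument is short and self-contained once the indicator-function viewpoint is adopted. The only points requiring care are bookkeeping rather than depth: first, making the reduction from prox to projection rigorous requires noting that the objective is $+\infty$ outside $C$, so the infimum is attained inside the closed, convex, nonempty set $C$, which guarantees existence and uniqueness of the projection; and second, stating the scalar clipping so that both signs are handled, i.e.\ writing it as $u_{i} = \mathrm{sign}(\theta_{i}) \min\{|\theta_{i}|, M\}$ to cover $\theta_{i} < -M$ as well as $\theta_{i} > M$. I would close by remarking that this separability is precisely what makes the $\ell_\infty$ prox trivial to evaluate in $O(n)$ time, in contrast to the nuclear-norm prox of Lemma~\ref{lemma3}, which requires a full SVD.
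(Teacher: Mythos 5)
Your proof is correct. The paper itself gives no proof of this lemma---it is stated with a citation to \cite{amini2019}---and your argument (prox of an indicator function reduces to Euclidean projection onto the closed convex set $C=\{\thetab : \lVert\thetab\rVert_\infty \le M\}$, which separates into coordinate-wise clipping onto $[-M,M]$) is exactly the standard route that the cited reference takes, so there is nothing to compare against beyond noting agreement. One genuinely useful point in your write-up: you correctly observe that the lemma's displayed formula, read literally, mishandles the case $\theta_i < -M$ (the condition ``$\theta_i \le M$'' should be ``$\lvert\theta_i\rvert \le M$''), and your form $u_i = \mathrm{sign}(\theta_i)\min\{\lvert\theta_i\rvert, M\}$ is the correct statement of the clipping map.
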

\begin{definition}\label{def_lip}
    ($L$-smooth functions \cite{lipschitz_gradient})\label{def2}: A function $f:\mathbb{R}^{n} \rightarrow \mathbb{R}$ is called $L$-smooth if there exists $L > 0$ such that
    \begin{align*}
        \lVert \nabla f(\xb) - \nabla f(\yb) \rVert \leq L \lVert \xb - \yb \rVert, \quad \forall x,y\in\mathbb{R}^n.
    \end{align*}
\end{definition}

\begin{lemma}\label{lem:LSmoothness}
    Let $f: \mathbb{R}^{n} \rightarrow \mathbb{R}$ be a twice continuously differentiable function ($f\in\mathcal{C}^2$) and let $\mathcal{C}$ be a convex compact set. Then, $f$ is $L$-smooth for some $L>0$.
\end{lemma}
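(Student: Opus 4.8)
The plan is to exploit the continuity of the Hessian together with compactness of $\mathcal{C}$ to obtain a uniform bound on the spectral norm of $\nabla^2 f$, and then to upgrade this pointwise bound to a Lipschitz estimate on the gradient by integrating along line segments, which remain inside $\mathcal{C}$ thanks to convexity. One point deserves attention at the outset: since the hypothesis supplies only compactness of $\mathcal{C}$, the conclusion should be read as $L$-smoothness of $f$ restricted to $\mathcal{C}$, i.e., the inequality in Definition \ref{def_lip} holding for all $\xb, \yb \in \mathcal{C}$. This is precisely what the training analysis needs, because the $\ell_\infty$ constraint of Lemma \ref{lemma:infnorm} confines the parameters to exactly such a set.

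First I would observe that, because $f \in \mathcal{C}^2$, the Hessian map $\xb \mapsto \nabla^2 f(\xb)$ is continuous, and hence so is the real-valued function $\xb \mapsto \lVert \nabla^2 f(\xb) \rVert$, where $\lVert \cdot \rVert$ denotes the spectral (operator) norm. Applying the extreme value theorem to this continuous function over the compact set $\mathcal{C}$ yields a finite maximum, so I would set
\begin{equation*}
L := \max_{\xb \in \mathcal{C}} \lVert \nabla^2 f(\xb) \rVert < \infty.
\end{equation*}
Should this maximum vanish (for instance when $f$ is affine), any strictly positive constant still satisfies the desired bound, so one may replace $L$ by $\max\{L, 1\}$ to guarantee $L > 0$.

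Next, fixing arbitrary $\xb, \yb \in \mathcal{C}$, I would use convexity to ensure that the segment $\yb + t(\xb - \yb)$, $t \in [0,1]$, lies entirely in $\mathcal{C}$, and then apply the fundamental theorem of calculus to the gradient along this segment:
\begin{equation*}
\nabla f(\xb) - \nabla f(\yb) = \int_{0}^{1} \nabla^2 f\big(\yb + t(\xb - \yb)\big)\,(\xb - \yb)\, \ud t.
\end{equation*}
Taking norms, moving the norm inside the integral, and bounding the Hessian term by $L$ on the segment gives
\begin{equation*}
\lVert \nabla f(\xb) - \nabla f(\yb) \rVert \leq \int_{0}^{1} \lVert \nabla^2 f(\yb + t(\xb-\yb)) \rVert \, \lVert \xb - \yb \rVert \, \ud t \leq L \lVert \xb - \yb \rVert,
\end{equation*}
which is exactly the $L$-smoothness inequality.

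The norm manipulations in the final display are routine; the only genuine point requiring care is the one flagged above, namely reconciling the global wording of Definition \ref{def_lip} with the compact-set hypothesis, which I resolve by reading the claim as Lipschitz continuity of $\nabla f$ on $\mathcal{C}$. I expect the main (though still mild) obstacle to be justifying the integral representation of $\nabla f(\xb) - \nabla f(\yb)$ and the interchange of norm and integral, both of which follow from $f \in \mathcal{C}^2$ and the continuity of the integrand on the compact segment.
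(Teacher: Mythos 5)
Your proposal is correct and follows essentially the same route as the paper's proof: bound $\lVert \nabla^2 f \rVert$ uniformly on the compact set $\mathcal{C}$, use convexity so the segment between $\xb$ and $\yb$ stays in $\mathcal{C}$, and integrate the Hessian along that segment to obtain the Lipschitz bound on the gradient. Your additions --- explicitly invoking the extreme value theorem, handling the degenerate case $L=0$, and noting that the conclusion should be read as $L$-smoothness restricted to $\mathcal{C}$ rather than on all of $\mathbb{R}^n$ --- are minor but welcome refinements of the paper's argument, the last of which correctly identifies a sloppiness in the lemma's wording.
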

\begin{proof}
    Since $f\in\mathcal{C}^2$, there exists $L>0$ such that $\|\nabla^2 f(\mathbf{z})\|\leq L$ for any $\mathbf{z}\in \mathcal{C}$. For all $\mathbf{x}, \mathbf{y} \in \mathcal{C}$ and $\tau \in [0,1]$, the convexity of $\mathcal{C}$ implies $\tau \mathbf{x} + (1-\tau)\mathbf{y} \in \mathcal{C}$. Then, it follows from the mean value theorem that
    \begin{equation}
    \begin{split}
       \lVert \nabla f(\mathbf{x}) - \nabla f(\mathbf{y}) \rVert    & \leq \int_{0}^{1} \lVert \nabla ^{2}     f(\tau \mathbf{x} + (1-\tau)\mathbf{y}) \rVert \lVert \mathbf{x}-\mathbf{y} \rVert d\tau\\
            &\leq L \|\mathbf{x}-\mathbf{y}\|,
    \end{split}
    \end{equation}
    which is our desired result.
\end{proof}

\begin{lemma}(Descent Lemma \cite{Descent_Lemma_A})\label{lemma4}: Let $f:\mathbb{R}^{n} \rightarrow \mathbb{R}$ be a function and let $\mathcal{C} \subseteq \mathbb{R}^{n}$ be convex with nonempty interior. Assume that $f$ is continuously differentiable on a neighborhood of each point in $\mathcal{C}$ and $L$-smooth on $\mathcal{C}$ with $L>0$. Then, for every two points $\mathbf{x}$ and $\mathbf{y}$ in $\mathcal{C}$, it holds that
    %\begin{equation}
    %f(\mathbf{y}) \leq f(\mathbf{x}) + <\nabla f(\mathbf{x}), %\mathbf{y} - \mathbf{x}> + \frac{L}{2} \lVert \yb - \xb %\rVert^{2}.
    %\end{equation}
    \begin{equation}
    f(\mathbf{y}) \leq f(\mathbf{x}) + \nabla^{T} f(\mathbf{x})(\mathbf{y} - \mathbf{x}) + \frac{L}{2} \lVert \yb - \xb \rVert^{2}.
    \end{equation}
\end{lemma}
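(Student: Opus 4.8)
The plan is to reduce the multivariate inequality to a one-dimensional statement along the segment joining $\mathbf{x}$ and $\mathbf{y}$, and then exploit $L$-smoothness through the fundamental theorem of calculus. First I would fix $\mathbf{x}, \mathbf{y} \in \mathcal{C}$ and define the scalar function $g(t) = f(\mathbf{x} + t(\mathbf{y} - \mathbf{x}))$ for $t \in [0,1]$. Convexity of $\mathcal{C}$ guarantees that the entire segment $\mathbf{x} + t(\mathbf{y}-\mathbf{x})$ lies in $\mathcal{C}$, and since $f$ is continuously differentiable on a neighborhood of each such point, $g$ is continuously differentiable on $[0,1]$ with $g'(t) = \nabla^{T} f(\mathbf{x} + t(\mathbf{y}-\mathbf{x}))(\mathbf{y}-\mathbf{x})$ by the chain rule.

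Next I would apply the fundamental theorem of calculus to write $f(\mathbf{y}) - f(\mathbf{x}) = g(1) - g(0) = \int_0^1 g'(t)\, dt$, and then subtract the linear term $\nabla^{T} f(\mathbf{x})(\mathbf{y}-\mathbf{x})$ from both sides, which yields
\[
f(\mathbf{y}) - f(\mathbf{x}) - \nabla^{T} f(\mathbf{x})(\mathbf{y}-\mathbf{x}) = \int_0^1 \big[\nabla f(\mathbf{x} + t(\mathbf{y}-\mathbf{x})) - \nabla f(\mathbf{x})\big]^{T}(\mathbf{y}-\mathbf{x})\, dt.
\]
The final step bounds the integrand: applying the Cauchy--Schwarz inequality bounds it by $\lVert \nabla f(\mathbf{x}+t(\mathbf{y}-\mathbf{x})) - \nabla f(\mathbf{x})\rVert \,\lVert \mathbf{y}-\mathbf{x}\rVert$, and then invoking Definition~\ref{def_lip} with the two points $\mathbf{x}+t(\mathbf{y}-\mathbf{x})$ and $\mathbf{x}$ bounds the first factor by $L\,\lVert t(\mathbf{y}-\mathbf{x})\rVert = Lt\,\lVert \mathbf{y}-\mathbf{x}\rVert$. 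Substituting and computing $\int_0^1 Lt\,\lVert\mathbf{y}-\mathbf{x}\rVert^2\, dt = \tfrac{L}{2}\lVert\mathbf{y}-\mathbf{x}\rVert^2$ delivers the claimed inequality after rearrangement.

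I do not anticipate a genuinely hard step; the argument is elementary once the integral representation is in place. The only point requiring care is justifying that $g$ is differentiable on all of $[0,1]$, including the endpoints, which is precisely why the hypothesis assumes continuous differentiability on a \emph{neighborhood} of each point of $\mathcal{C}$ rather than merely on its interior; the nonempty-interior assumption is what makes such neighborhoods meaningful and ensures the directional derivatives along the segment are well defined. A subtle alternative obstacle would be working directly from the Definition~\ref{def_lip} bound without the integral identity, but routing through $g$ and the fundamental theorem of calculus sidesteps any need to estimate second-order information.
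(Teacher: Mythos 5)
Your proof is correct, and it is the standard argument: parametrize the segment, apply the fundamental theorem of calculus, and bound the integrand via Cauchy--Schwarz and the Lipschitz property of the gradient. The paper itself offers no proof of this lemma---it is imported by citation---so there is no in-paper argument to diverge from; your route is essentially the same segment-integration technique the paper does spell out for Lemma~\ref{lem:LSmoothness}, except that there the integrand is bounded by a Hessian norm rather than by the gradient-Lipschitz constant. One small correction to your closing remark: the nonempty-interior hypothesis plays no role in your proof and is not what makes the neighborhoods ``meaningful''---neighborhoods of points of $\mathcal{C}$ in $\mathbb{R}^n$ are well defined regardless; what your argument actually uses is convexity of $\mathcal{C}$ (so the segment stays in $\mathcal{C}$, where the Lipschitz bound of Definition~\ref{def_lip} applies) together with continuous differentiability near each point of the segment (so $g$ is $C^1$ up to the endpoints). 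The interior assumption is an artifact of the cited formulation, not an ingredient of the estimate.
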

\begin{lemma}\label{Lemma4}
    Let $\mathbf{x}$, $\mathbf{y}$ and $\mathbf{z}$ be arbitrary vectors in $\mathbb{R}^{n}$. Then
    \begin{equation}\label{eq69}
    \lVert \mathbf{x}-\mathbf{z} \rVert^{2} - \lVert \mathbf{y}-\mathbf{z} \rVert^{2} = 2(\mathbf{x} - \mathbf{y})^{T} (\mathbf{y} - \mathbf{z}) +
    \lVert \mathbf{x} - \mathbf{y} \rVert^{2}.
    \end{equation} 
\end{lemma}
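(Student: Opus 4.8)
The plan is to treat this as a routine algebraic (polarization-type) identity and verify it by direct expansion of inner products. The cleanest route, which avoids error-prone bookkeeping of cross terms, is a change of variables: I would introduce $\mathbf{a} = \mathbf{x} - \mathbf{y}$ and $\mathbf{b} = \mathbf{y} - \mathbf{z}$, and observe that $\mathbf{x} - \mathbf{z} = \mathbf{a} + \mathbf{b}$. This reduces the whole identity to a single application of the expansion $\lVert \mathbf{a} + \mathbf{b} \rVert^{2} = \lVert \mathbf{a} \rVert^{2} + 2\mathbf{a}^{T}\mathbf{b} + \lVert \mathbf{b} \rVert^{2}$.

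With this substitution the left-hand side becomes
\begin{equation*}
\lVert \mathbf{x}-\mathbf{z} \rVert^{2} - \lVert \mathbf{y}-\mathbf{z} \rVert^{2}
= \lVert \mathbf{a}+\mathbf{b} \rVert^{2} - \lVert \mathbf{b} \rVert^{2},
\end{equation*}
and expanding the first term cancels the $\lVert \mathbf{b} \rVert^{2}$ contribution, leaving $\lVert \mathbf{a} \rVert^{2} + 2\mathbf{a}^{T}\mathbf{b}$. Substituting back $\mathbf{a} = \mathbf{x}-\mathbf{y}$ and $\mathbf{b} = \mathbf{y}-\mathbf{z}$ recovers exactly the right-hand side $2(\mathbf{x}-\mathbf{y})^{T}(\mathbf{y}-\mathbf{z}) + \lVert \mathbf{x}-\mathbf{y} \rVert^{2}$, which completes the argument. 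As a sanity check, one can instead expand both sides in terms of $\lVert \mathbf{x} \rVert^{2}$, $\lVert \mathbf{y} \rVert^{2}$, $\lVert \mathbf{z} \rVert^{2}$ and the pairwise products; both collapse to $\lVert \mathbf{x} \rVert^{2} - \lVert \mathbf{y} \rVert^{2} - 2(\mathbf{x}-\mathbf{y})^{T}\mathbf{z}$, confirming equality.

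There is no genuine obstacle here; the only risk is sign or cross-term errors during the brute-force expansion, and the $\mathbf{a},\mathbf{b}$ substitution is precisely what eliminates that risk by isolating the single cross term $2\mathbf{a}^{T}\mathbf{b}$. The identity holds for arbitrary vectors in $\mathbb{R}^{n}$ and relies only on bilinearity and symmetry of the standard inner product, so no convexity, smoothness, or dimensional assumptions are needed. This lemma is purely a technical tool, and I expect it to be invoked later to rewrite the squared-distance differences that arise when analyzing the per-iteration decrease of the proposed extrapolated proximal gradient scheme.
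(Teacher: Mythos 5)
Your proof is correct. The paper actually states this lemma without any proof at all (it is treated as an immediate algebraic identity), so there is no authorial argument to compare against; your change-of-variables expansion of $\lVert \mathbf{a}+\mathbf{b}\rVert^{2}$ with $\mathbf{a}=\mathbf{x}-\mathbf{y}$, $\mathbf{b}=\mathbf{y}-\mathbf{z}$ is a clean and complete verification. Your guess about its role is also accurate: the paper invokes it (as Lemma 6) in Appendix~B to rewrite the difference of squared distances $\frac{1}{2\mu_{\theta,k}}\lVert \thetab_{k}-\widehat{\thetab}_{k-1}\rVert^{2}-\frac{1}{2\mu_{\theta,k}}\lVert \thetab_{k-1}-\widehat{\thetab}_{k-1}\rVert^{2}$ appearing in the per-epoch descent estimate for the extrapolated proximal step.
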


\begin{lemma}\label{Young_lemma}
    (Young's inequality \cite{young_inequality}): For $\epsilon > 0$ we have
    \begin{equation}
    ab \le \frac{a^{2}}{2 \epsilon} + \frac{\epsilon b^{2}}{2}.
    \end{equation}
\end{lemma}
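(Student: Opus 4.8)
The plan is to reduce Young's inequality to the single elementary fact that the square of a real number is nonnegative. The whole statement follows from the quadratic inequality $(x-y)^2 \ge 0$ after a suitable rescaling of the variables, so no machinery beyond arithmetic is required.

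First, I would record the auxiliary bound obtained from $(x-y)^2 \ge 0$ for arbitrary reals $x$ and $y$. Expanding the square yields $x^2 - 2xy + y^2 \ge 0$, hence
\begin{equation*}
xy \le \frac{1}{2}\left(x^2 + y^2\right).
\end{equation*}
This is the only content of the argument, and it holds for all reals $x,y$ with no positivity assumption.

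Next, I would apply the rescaling that converts this symmetric bound into the asymmetric form of the statement. Since $\epsilon > 0$, the quantities $\sqrt{\epsilon}$ and $1/\sqrt{\epsilon}$ are well defined, so I would set $x = a/\sqrt{\epsilon}$ and $y = \sqrt{\epsilon}\, b$. With this choice one has $xy = ab$, while $\tfrac{1}{2}x^2 = a^2/(2\epsilon)$ and $\tfrac{1}{2}y^2 = \epsilon b^2/2$. Substituting these three identities into the auxiliary bound gives
\begin{equation*}
ab \le \frac{a^2}{2\epsilon} + \frac{\epsilon b^2}{2},
\end{equation*}
which is exactly the claim.

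There is essentially no obstacle to overcome: the only point demanding a moment's care is that the rescaling requires $\epsilon > 0$ (guaranteed by hypothesis) so that $\sqrt{\epsilon}$ exists and the substitution produces genuine real numbers $x,y$. Because the auxiliary inequality is valid for every sign of $x$ and $y$, the displayed conclusion holds for all real $a$ and $b$; as a sanity check, note that when $ab \le 0$ the bound is immediate, since the right-hand side is a sum of nonnegative terms.
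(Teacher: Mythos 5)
Your proof is correct. The paper itself offers no proof of this lemma at all---it is stated as a known fact with only a citation to the literature---so there is nothing to compare against; your argument, reducing the weighted inequality to $(x-y)^2 \ge 0$ via the rescaling $x = a/\sqrt{\epsilon}$, $y = \sqrt{\epsilon}\,b$, is the standard elementary derivation, it is complete, and it correctly isolates the only place where the hypothesis $\epsilon > 0$ is used.
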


\begin{definition}
    (Fermat rule and critical points \cite{critical_point}): A point $\xb^{*}$ is called a critical point of $f$ if $\mathbf{0} \in \partial f(\xb^{*})$, where $\partial f(\xb^{*})$ is the limiting subdifferential at $\xb^{*}$.%; see, e.g., \cite{rockafellar2009variational}.
\end{definition}

The subsequence convergence analysis in the proposed algorithm for iterations can be extended to the entire sequence using Kurdyka-\L{}ojasiewicz property.
\begin{definition}(Kurduka-\L{ojasiewicz (KL) property \cite{KL_property}}): 
    A function $f(\xb)$ satisfies the KL property at point $\bar{\xb} \in \textrm{dom}(\partial f)$ if there exist $\eta > 0$, a neighborhood $\mathcal{B}_{\rho}(\bar{\xb}) \triangleq \bigl\{\xb : \lVert \xb - \bar{\xb} \rVert < \rho\bigr\}$, and a concave function $g(a) = ca^{1-\theta}$ for some $c>0$ and $\theta \in [0,1)$ such that the KL inequality holds\\
    \begin{align}\label{eq8}
    \begin{array}{lc}
    g(|f(\xb)-f(\bar{\xb})|)\textrm{dist}(\mathbf{0},\partial f(\xb)) \geq 1, \vspace{0.4cm}\\
    \forall \xb \in \mathcal{B}_{\rho}(\bar{\xb}) \: \cap \textrm{dom}(\partial f) \: \textrm{and} \: f(\bar{\xb}) < f(\xb) < f(\bar{\xb}) + \eta, 
    \end{array}
    \end{align}
    where $\textrm{dom}(\partial f) = \bigl\{\xb | \partial f(\xb) \not = \emptyset\bigr\}$ and $\textrm{dist}(\mathbf{0}, \partial f(\xb)) = \textrm{min} \bigl\{\lVert \vb \rVert_{2} | \vb \in \partial f(\xb)\bigr\}$ shows the vector in $\partial f(\xb)$ with the minimum $\ell_{2}$ norm.
\end{definition}

\section{Prior Art}
Deep AE, an FCNN with $p-1$ hidden layers, can be employed to deal with matrix completion problem~\cite{AEMC_DLMC}. 
Let $\Xb= [\xb_{1},\dots,\xb_{n}] \in \mathbb{R}^{m \times n}$ be the partially observed matrix with column $\xb_i$ ($i=1,\ldots,n$), which is treated as an input in this architecture.
The number of input and output neurons is $m$ (number of rows in matrix $\Xb$). For input $\xb_{q}$, output is given by
\small
\begin{equation}\label{eq_9}
\begin{split}
    \widehat{\xb}_{q} =& g^{(p)} \Biggl( \mathbf{W}^{(p)} \biggl( \dots \Bigl(g^{(1)} \bigl(\mathbf{W}^{(1)} \mathbf{x}_{q} + \mathbf{b}^{(1)}\bigr) \Bigr) \dots \biggr) + \mathbf{b}^{(p)} \Biggr),
\end{split}
\end{equation}
\normalsize
where $q=1,\dots,n$, while $g^{(i)}$, $\bb^{(i)}$, and $\Wb^{(i)}$ for $i = 1,\dots,p$ is the activation function, bias vectors, and weight matrices, respectively~\cite{AEMC_DLMC}. The vector $\widehat{\xb}_{q}$ is a function of network parameters, i.e.,
\begin{equation}
    \widehat{\xb}_{q} = F_{1} \big (\Wb^{(1)}, \dots, \Wb^{(p)}, \mathbf{b}^{(1)},\ldots, \mathbf{b}^{(p)}; \xb_{q} \big ).
\end{equation}
To simplify the notation, vector parameter $\thetab$ is defined as
\begin{equation}\label{eq_10}
\thetab = [\mathrm{vec}(\mathbf{W}^{(1)}),\dots,\mathrm{vec}(\mathbf{W}^{(p)}),\mathbf{b}^{(1)},\dots,\mathbf{b}^{(p)}].
\end{equation}
The mean square error (MSE) is usually used as the loss function for the matrix completion problem, i.e., 
\begin{equation}\label{eq_11}
    \ell (\thetab) =  \frac{1}{n} \sum_{q = 1}^{n} \Big \lVert \xb_{q} - \widehat{\xb}_{q} \Big \rVert^{2}.
\end{equation}
\normalsize
Since the data matrix $\mathbf{X}$ is partially observed, the minimization of \eqref{eq_11} is intractable. The binary matrix $\mathbf{N}\in\{0,1\}^{m\times n}$ is defined as
\begin{align}\label{eq12}
    n_{ij}&=\left\lbrace \begin{array}{lc}
        1, & (i,j) \in \Omega, \\
        0, & (i,j) \not \in \Omega,
    \end{array}\right. 
\end{align}
Using a column of $\mathbf{N}$, $\mathbf{n}_{i} (1 \leq i \leq n)$, in  \eqref{eq_11}, a loss function is obtained, which only includes observed elements \cite{AECF}, i.e., 
\begin{equation}\label{eq_13}
    \ell (\thetab) =  \frac{1}{n} \sum_{q = 1}^{n} \Big \lVert \mathbf{n}_{q} \odot \big(\xb_{q} - \widehat{\xb}_{q}\big) \Big \rVert^{2},
\end{equation}
where $\odot$ denotes the Hadamard product. For a regularization hyper-parameter $\lambda > 0$, the over-fitting of FCNNs can be controlled by a regularized objective 
\begin{equation}\label{eq_14}
    \ell (\thetab) =  \frac{1}{n} \sum_{q = 1}^{n} \Big \lVert \mathbf{n}_{q} \odot \big (\xb_{q} - \widehat{\xb}_{q} \big) \Big \rVert^{2} + \lambda \sum_{i=1}^{p} \Big \lVert \Wb^{(i)} \Big \rVert_{F}^{2},
\end{equation}
with matrix form
\begin{equation}\label{eq_15}
    \mathcal{L} (\thetab) =   \big \lVert \mathbf{N} \odot \big(\Xb - \widehat{\Xb}\big) \big \rVert_{F}^{2} + \lambda \sum_{i=1}^{p} \Big \lVert \Wb^{(i)} \Big \rVert_{F}^{2},
\end{equation}
in which $\widehat{\Xb} = [\widehat{\xb}_{1}, \widehat{\xb}_{2}, \dots, \widehat{\xb}_{n}]$. Next, this regularized loss function needs to be optimized with respect to $\thetab$ as
\begin{equation}\label{eq_16}
    \underset{\thetab}{\mathrm{min}} \: \: \mathcal{L}(\thetab),
\end{equation}
where the missing entries of $\Xb$ are obtained by
\begin{equation}\label{eq17}
    x_{ij} = \hat{x}_{ij} , \quad (i,j) \not \in \Omega,
\end{equation}
see \cite{AEMC_DLMC} for more details. 

It is known that applying a regularization term using the $\ell_{0}$ norm on the hidden layers of AE can reduce the generalization error \cite{amini2018}. Moreover, empirical observations have shown that low-rank weight matrices can speed up learning \cite{low_rank1} and improve the accuracy, robustness \cite{low_rank2}, and computational efficiency \cite{low_rank3} of DNNs.

\vspace*{-1em}
\section{Proposed model and method}
One can add the rank function of weight matrices in FCNNs and the $\ell_{0}$ pseudo-norm of the outputs of hidden layers to the loss function mentioned in \eqref{eq_15}. Due to the NP-hardness of the rank minimization and discontinuous nature of the $\ell_{0}$ pseudo-norm, we suggest generating a new model by replacing them with nuclear and $\ell_{1}$ norms, respectively. The resulting model involves convex nonsmooth regularizes \cite{TNN,amini2019}, and the corresponding optimization problem can be handled efficiently by alternative minimization techniques. The proposed algorithm is a penalty method, where we first train the FCNN with gradually increasing regularization terms.

Our proposed loss function for training $l$-layers FCNNs involves a simultaneous regularization of the rank of weight matrices and sparsity of the hidden layers, i.e.,
 \begin{equation}\label{eq_20}
    \underset{\thetab}{\mathrm{min}} \quad \mathcal{L}(\thetab) + \sum_{q=1}^{n}  \sum_{i=1}^{l}  \alpha_{i} \Big \lVert \mathbf{z}^{(i)}_{q} \Big \rVert_{0} + \sum_{j=1}^{l + 1} \beta_{j} \mathrm{rank}(\mathbf{W}^{(j)}),
\end{equation}
where $\zb_{q}^{(i)} = F_{2} \big (\Wb^{(1)}, \mathbf{b}^{(1)} \dots, \Wb^{(i)}, \mathbf{b}^{(i)}; \xb_{q} \big )$ is the output of the $i$-th hidden layer  for the $q$-th training data, i.e.,
\small
\begin{equation}\label{eq_20_prime}
    \begin{split}
    \zb_{q}^{(i)} &= g^{(i)} \Biggl( \mathbf{W}^{(i)} \biggl( \dots \Bigl(g^{(1)} \bigl(\mathbf{W}^{(1)} \mathbf{x}_{q} + \mathbf{b}^{(1)}\bigr) \Bigr) \dots \biggr) + \mathbf{b}^{(i)} \Biggr).
    \end{split}
\end{equation}
\normalsize
By approximating the $\ell_{0}$ pseudo-norm with the $\ell_{1}$ norm and the rank function with the nuclear norm in \eqref{eq_20_prime}, we arrive to
 \begin{equation}\label{eq23}
    \mathcal{P} \: : \: \underset{\thetab}{\mathrm{min}} \quad \mathcal{L}(\thetab) + \sum_{q=1}^{n} \sum_{i=1}^{l}  \alpha_{i} \Big \lVert \mathbf{z}^{(i)}_{q} \Big \rVert_{1} + \sum_{j=1}^{l + 1} \beta_{j} \Big \lVert\mathbf{W}^{(j)} \Big \rVert_{*}.
\end{equation}
At the first step, we add new slack variables $\{  {\mathbf{h}}_{q}^{(i)}\}|_{q=1}^{n}|_{i=1}^{l}$ and $\{ \Vb^{(j)}\}_{j=1}^{l+1}$ to the problem as
\begin{subequations}
    \label{eq26}
\begin{align}
        \underset{\mathcal{X}}{\mathrm{min}} \: \: \: & \mathcal{L}(\thetab) + \sum_{q=1}^{n} \sum_{i=1}^{l} \alpha_{i} \Big \lVert \mathbf{h}_{q}^{(i)} \Big \rVert_{1} + \sum_{j=1}^{l+1} \beta_{j} \Big \lVert \mathbf{V}^{(j)} \Big \rVert_{*},  \label{eq26_1}\\
        \mathrm{s.t.} \quad & C1: \mathbf{h}_{q}^{(i)} = \mathbf{z}_{q}^{(i)},  \quad i=1,\dots,l; q=1,\dots,n, \vspace{0.2cm} \label{eq26_2}\\
        & C2: \mathbf{V}^{(j)} = \mathbf{W}^{(j)},  \quad j=1,\dots,l+1, \label{eq26_3}
\end{align}
\end{subequations}
where $    \mathcal{X} = \big \lbrace \thetab, \{  {\mathbf{h}}_{q}^{(i)}\}|_{q=1}^{n}|_{i=1}^{l}, \lbrace \mathbf{V}^{(j)} \rbrace_{j=1}^{l+1} \big \rbrace$.
An approximated solution of \eqref{eq26} can be computed via the penalty model
\begin{align}\label{eq27}
    \begin{split}
        \underset{\mathcal{X}}{\mathrm{min}} \: \: \:   \mathcal{L}(\thetab) & + \sum_{q=1}^{n} \sum_{i=1}^{l} \Big[\alpha_{i} \lVert \mathbf{h}_{q}^{(i)}\rVert_{1}  + 
        \frac{1}{2\mu_{ h^{(i)}}} 
        \lVert \mathbf{z}_{q}^{(i)} -  \mathbf{h}_{q}^{(i)} \rVert^{2}\Big]  \\
        & + \sum_{j=1}^{l+1} \beta_{j} \Big[\lVert \mathbf{V}^{(j)}\rVert_{*} + \frac{1}{2 \mu_{ v^{(j)}  }} \lVert \mathbf{W}^{(j)}-\mathbf{V}^{(j)}\rVert_{F}^{2} \Big],
    \end{split}
\end{align}
where $\mu_{h^{(i)}} > 0,\: i=1,\dots,l$ and $\mu_{v^{(j)}} > 0, \: j=1,\dots,l+1$ are penalty parameters. If we represent the solution to \eqref{eq27} by $\thetab^{(\mu_{h^{(i)}},\mu_{v^{(j)}})}$ and the solution to $\mathcal{P}$ by  $\thetab^{*}$, then
\vspace*{-0.5em}
\begin{align}\label{eq28}
    \begin{array}{lc}
        \lim_{\lbrace \mu_{h^{(i)}} \rbrace _{i=1}^{l}, \lbrace \mu_{v^{(j)}} \rbrace_{j=1}^{l+1} \rightarrow 0} \thetab^{(\mu_{h^{(i)}},\mu_{v^{(j)}})} = \thetab^{*}.
    \end{array}
\end{align}

\vspace*{-1em}
For convergence purposes, we constrain $\thetab$ in \eqref{eq27} using $r_{\theta}(\thetab)$ (defined in \eqref{eq29}).
%\begin{align}\label{eq29}
%    r_{\theta}(\thetab)&=\left\lbrace \begin{array}{lc}
%        0, & \lVert \thetab \rVert_{\infty} \le M, \\
%        \infty, & \lVert \thetab \rVert_{\infty} \ge M,
%    \end{array}\right. 
%\end{align}
%where $\lVert \thetab \rVert_{\infty} = \mathrm{max}_{i} \: \: \lvert \theta_{i} \rvert$ and $M$ is a large but bounded value. 
This consequently leads to the minimization problem
\begin{equation}\label{eq_30}
    \begin{split}
        \small \underset{\mathcal{X}}{\mathrm{min}}  \ \ & \underbrace{\mathcal{L}(\thetab) + \sum_{q=1}^{n} \sum_{i=1}^{l}   \frac{1}{2\mu_{ h^{(i)}}} \lVert \mathbf{z}_{q}^{(i)} - \mathbf{h}_{q}^{(i)} \rVert^{2} + \sum_{j=1}^{l+1} \frac{1}{2 \mu_{ v^{(j)}}} \lVert \mathbf{W}^{(j)} }_\mathrm{I} \\
        & \small \underbrace{- \mathbf{V}^{(j)} \rVert_{F}^{2}}_\mathrm{I} \underbrace{+ \sum_{q=1}^{n} \sum_{i=1}^{l} \alpha_{i} \lVert \mathbf{h}_{q}^{(i)} \rVert_{1} + \sum_{j=1}^{l+1} \beta_{j} 
        \lVert \mathbf{V}^{(j)} \rVert_{*} + r_{\theta}(\thetab)}_\mathrm{II}  .
    \end{split}
\end{equation}
Part $\mathrm{I}$ of the objective is smooth and can be written as 
%$     g(\mathbf{h}_{1:n}^{(1:l)},\mathbf{V}^{(1:l+1)}, \mathbf{\thetab})$.
\begin{equation}\label{eq30}
    \begin{split}
       g(\mathbf{h}_{1:n}^{(1:l)},\mathbf{V}^{(1:l+1)}, \mathbf{\thetab})  = \  &  \mathcal{L}(\thetab) + \sum_{q=1}^{n} \sum_{i=1}^{l}  \frac{1}{2\mu_{ h^{(i)}}} \lVert \mathbf{z}_{q}^{(i)} - \mathbf{h}_{q}^{(i)} \rVert^{2} \\
        & + \sum_{j=1}^{l+1} \frac{1}{2 \mu_{ v^{(j)}}} 
        \lVert \mathbf{W}^{(j)}-\mathbf{V}^{(j)} \rVert_{F}^{2},
    \end{split}
\end{equation}
where $\mathbf{h}_{1:n}^{(1:l)} =  \lbrace \mathbf{h}_{1}^{(1)}, \dots, \mathbf{h}_{1}^{(l)}, \dots, \mathbf{h}_{n}^{(1)}, \dots, \mathbf{h}_{n}^{(l)} \rbrace$ and $\mathbf{V}^{(1:l+1)} = \lbrace \mathbf{V}^{(1)}, \dots, \mathbf{V}^{(l+1)} \rbrace$ , respectively. We also employ the following definitions for part $\mathrm{II}$
\begin{equation}\label{eq31}
    \begin{split}
        & r_{h}(\mathbf{h}_{q}^{(i)}) = \alpha_{i} \lVert \mathbf{h}_{q}^{(i)} \rVert_{1}, \quad q = 1, \dots, n; i=1, \dots, l, \\
        &     r_{v}(\mathbf{V}^{(j)}) = \beta_{j} \lVert \mathbf{V}^{(j)} \rVert_{*} , \quad j = 1, \dots, l+1.
    \end{split}
\end{equation}
%As result, the problem \eqref{eq_30} can be rewritten as
%\begin{equation}\label{eq_31}
%    \begin{split}
%        \underset{\mathcal{X}}{\mathrm{min}} \: \: 
%        & g(\mathbf{h}_{1:n}^{(1:l)},\mathbf{V}^{(1:l+1)}, \mathbf{\theta}) + \sum_{q=1}^{n} \sum_{i=1}^{l}  r_{h}(\mathbf{h}_{q}^{(i)}) \\
%        & + \sum_{j=1}^{l+1} r_{v}(\mathbf{V}^{(j)}) 
%        + r_{\theta}(\thetab).
%    \end{split}
%\end{equation}
Now, we define the optimization problem
\begin{equation}\label{eq_32}
    \begin{split}
        \mathcal{P}_{\mu}: \: \: 
        &         \mathrm{arg} \ \underset{\mathcal{X}}{\mathrm{min}} \: \: Q(\mathbf{h}_{1:n}^{(1:l)},\mathbf{V}^{(1:l+1)}, \thetab) = g(\mathbf{h}_{1:n}^{(1:l)},\mathbf{V}^{(1:l+1)}, \thetab) \\
        &  + \sum_{q=1}^{n} \sum_{i=1}^{l}  r_{h}(\mathbf{h}_{q}^{(i)}) 
        + \sum_{j=1}^{l+1} r_{v}(\mathbf{V}^{(j)}) + r_{\theta}(\thetab).
    \end{split}
\end{equation}

In order to deal with $\mathcal{P}_{\mu}$, we propose an alternating minimization method by updating all variables $\lbrace \mathbf{h}^{(i)}_{q} \rbrace |_{q=1}^{n}|_{i=1}^{l} $, $\lbrace \mathbf{V}^{(j)} \rbrace_{j=1}^{l+1}$, and $\thetab$, where at each iteration (epoch) we minimize a second-order upper bound of the objective function.
%At the end of each epoch of DNN training, all optimization variables are updated once. The meaning of each epoch is the complete passage of all training data through the DNN. 
In order to update the parameters of the DNN, $\thetab$, we need to construct the loss function presented in \eqref{eq_32} in each epoch by passing all training data through the DNN once. 
Before any training data is entered into the DNN, $\thetab_{0} = [\mathrm{vec}(\mathbf{W}_{0}^{(1)}),\dots,\mathrm{vec}(\mathbf{W}_{0}^{(l+1)}),\mathbf{b}_{0}^{(1)},\dots,\mathbf{b}_{0}^{(l+1)}]$ initializes the parameters of the DNN.

Assuming being in the $k$-th epoch, the training data $1$ to $n$ are entered into the DNN and $\mathbf{h}_{1,k}^{(1)}, \dots,\mathbf{h}_{1,k}^{(l)}, \dots, \mathbf{h}_{n,k}^{(1)}, \dots,\mathbf{h}_{n,k}^{(l)}, \mathbf{V}^{(1)}, \dots, \mathbf{V}^{(l)}$ are calculated, and $\thetab_{k}$ is updated.
%Assuming that epochs start from $k=1$ and end at $K$. 
%It is assumed that all the optimization variables $\mathcal{X}$ are updated once in each epoch. By entering the first training data, $\mathbf{h}_{1}^{(1)}$ to $\mathbf{h}_{1}^{(l)}$ is calculated and continues until the $n$-th training data to obtain $\mathbf{h}_{n}^{(1)}$ to $\mathbf{h}_{n}^{(l)}$ and then $\mathbf{V}^{(1)}$ to $\mathbf{V}^{(l+1)}$ is calculated and at the end of $\thetab$ is updated. 
The general update for $\mathbf{h}_{q}^{(i)}$, $\mathbf{V}^{(j)}$ and $\thetab$ $\textbf{after}$ the $k$-th epoch are as follows

\begin{subequations}\label{eq5}
	\begin{align}
    \mathbf{h}^{(i)}_{q,k} = & \: \mathrm{arg} \ \underset{\mathbf{h}_{q}^{(i)}}{\mathrm{min}} \: \:
    g\big(\mathbf{h}^{(1:l)}_{1:q-1,k}, \mathbf{h}_{q,k}^{(<i)}, \mathbf{h}^{(i)}_{q}, \mathbf{h}^{(>i)}_{q, k-1}, \mathbf{h}^{(1:l)}_{q+1:n, k-1}, \nonumber\\
    &  \mathbf{V}^{(1:l+1)}_{k-1}, \thetab_{k-1}\big) + r_{h}(\mathbf{h}_{q}^{(i)}), \ \  k = 1, \dots, K, \label{eq36} \\
    \mathbf{V}^{(j)}_{k} =   & \: \mathrm{arg} \ \underset{\mathbf{V}^{(j)}}{\mathrm{min}} \: \:
    g\big(\mathbf{h}^{(1:l)}_{1:n,k}, \mathbf{V}^{(<j)}_{k}, \mathbf{V}^{(j)}, \mathbf{V}^{(>j)}_{k-1}, \thetab_{k-1}\big) \nonumber \\ 
    &  + r_{v}(\mathbf{V}^{(j)}), \ \ k = 1, \dots, K, \label{eq37}\\
    \thetab_{k} =  & \: \mathrm{arg} \ \underset{\thetab}{\mathrm{min}} \: \: \: 
    g\big(\mathbf{h}^{(1:l)}_{1:n,k}, \mathbf{V}^{(1:l+1)}_{k}, \thetab\big) + r_{\theta}(\thetab), \nonumber\\
    & \ \ k = 1, \dots, K. \label{eq37p}
	\end{align}
\end{subequations}

%\begin{equation}\label{eq36}
%\begin{split}
%\mathbf{h}^{(i)}_{q,k} = \small & \mathrm{arg} \ \underset{\mathbf{h}_{q}^{(i)}}{\mathrm{min}} \: \:
%g\big(\mathbf{h}^{(1:l)}_{1:q-1,k}, \mathbf{h}_{q,k}^{(<i)}, \mathbf{h}^{(i)}_{q}, \mathbf{h}^{(>i)}_{q, k-1}, \mathbf{h}^{(1:l)}_{q+1:n, k-1}, \\
%& \small \mathbf{V}^{(1:l+1)}_{k-1}, \thetab_{k-1}\big) %+ r_{h}(\mathbf{h}_{q}^{(i)}), \ \  k = 1, \dots, K, \\
%\end{split}
%\end{equation}
%\begin{equation}\label{eq37}
%\begin{split}
%\mathbf{V}^{(j)}_{k} =  \small & \mathrm{arg} \ \underset{\mathbf{V}^{(j)}}{\mathrm{min}} \: \:
%g\big(\mathbf{h}^{(1:l)}_{1:n,k}, \mathbf{V}^{(<j)}_{k}, \mathbf{V}^{(j)}, \mathbf{V}^{(>j)}_{k-1}, \thetab_{k-1}\big)\\ 
%& \small + r_{v}(\mathbf{V}^{(j)}), \ \ k = 1, \dots, K,\\ 
%\end{split}
%\end{equation}
%\begin{equation}\label{eq37p}
%\begin{split}
%\thetab_{k} = \small & \mathrm{arg} \ \underset{\thetab}{\mathrm{min}} \: \: \: 
%g\big(\mathbf{h}^{(1:l)}_{1:n,k}, \mathbf{V}^{(1:l+1)}_{k}, \thetab\big) + r_{\theta}(\thetab),\\
%& \ \ k = 1, \dots, K,
%\end{split}
%\end{equation}
\normalsize
%where $\mathbf{h}^{(1:l)}_{1:q-1,k} = \lbrace \mathbf{h}_{1,k}^{(1)}, \dots, \mathbf{h}_{1,k}^{(1)}, \dots, \mathbf{h}_{q-1,k}^{(1)}, \dots,$ $\mathbf{h}_{q-1,k}^{(l)} \rbrace$, $\mathbf{h}^{(<i)}_{q,k} = \lbrace \mathbf{h}_{q,k}^{(1)}, \dots, \mathbf{h}_{q,k}^{(i-1)} \rbrace$, $\mathbf{h}^{(>i)}_{q,k-1} = \lbrace \mathbf{h}_{q,k-1}^{(i+1)}, \dots, \mathbf{h}_{q,k-1}^{(l)} \rbrace$, and $\mathbf{h}_{q+1:n,k-1}^{(1:l)} = \mathbf{h}_{q+1,k-1}^{(1)}, \dots, \mathbf{h}_{q+1,k-1}^{(l)}, \dots, \mathbf{h}_{n,k-1}^{(1)}, \dots, \mathbf{h}_{n,k-1}^{(l)} \rbrace$, meanwhile $\mathbf{V}^{(<j)}_{k} = \lbrace \mathbf{V}_{k}^{(1)}, \dots, \mathbf{V}_{k}^{(j-1)} \rbrace$ and $\mathbf{V}^{(>j)}_{k-1} = \lbrace \mathbf{V}_{k-1}^{(j+1)}, \dots, \mathbf{V}_{k-1}^{(l+1)} \rbrace$, respectively.

We consider the order of optimization as $\thetab_{0}$(initialization), $\mathbf{h}_{1,1}^{(1)} \rightarrow \dots \rightarrow \mathbf{h}_{1,1}^{(l)} \rightarrow \dots \rightarrow \mathbf{h}_{n,1}^{(1)} \rightarrow \dots \mathbf{h}_{n,1}^{(l)} \rightarrow \mathbf{V}_{1}^{(1)} \rightarrow \dots \rightarrow \mathbf{V}_{1}^{(l+1)} \rightarrow \thetab_{1} \rightarrow \mathbf{h}_{1,2}^{(1)} \rightarrow \dots \rightarrow \mathbf{h}_{1,2}^{(l)} \rightarrow \dots \rightarrow \mathbf{h}_{n,2}^{(1)} \rightarrow \mathbf{h}_{n,2}^{(l)} \rightarrow \mathbf{V}_{2}^{(1)} \rightarrow \dots \rightarrow  \mathbf{V}_{2}^{(l+1)} \rightarrow \thetab_{2} \rightarrow 
\mathbf{h}_{1,K}^{(1)} \rightarrow \dots \rightarrow \mathbf{h}_{1,K}^{(l)}  \rightarrow \dots \rightarrow \mathbf{h}_{n,K}^{(1)} \rightarrow \dots \mathbf{h}_{n,K}^{(l)} \rightarrow \mathbf{V}_{K}^{(1)} \rightarrow \dots \rightarrow \mathbf{V}_{K}^{(l+1)} \rightarrow \thetab_{K}
$.

Problems \eqref{eq36}, \eqref{eq37}, and \eqref{eq37p} are nonsmooth, and we follow the same procedure as in \cite{whole_sequence} to provide closed-form solutions for the updates. We write the derivation for $\thetab$. Let us approximate the objective with a quadratic form around $\thetab_{k-1}$, i.e.,
\begin{equation}\label{eq_36}
    \begin{split}
        & \widehat{g} \big ( \mathbf{h}^{(1:l)}_{1:n,k},\mathbf{V}^{(1:l+1)}_{k}, \thetab\big)  = g \big ( \mathbf{h}^{(1:l)}_{1:n,k},\mathbf{V}^{(1:l+1)}_{k}, \thetab_{k-1}\big)  \\
        & +  \nabla_{\theta}^{T}{g \big ( \mathbf{h}^{(1:l)}_{1:n,k},\mathbf{V}^{(1:l+1)}_{k}, \thetab_{k-1}\big)} \big(\thetab - \thetab_{k-1}\big)  \\
        & + \cfrac{1}{2} \big(\thetab - \thetab_{k-1}\big)^{T} \mathbf{H}_{\theta} \big ( \mathbf{h}^{(1:l)}_{1:n,k},\mathbf{V}^{(1:l+1)}_{k}, \thetab_{k-1}\big) \big(\thetab - \thetab_{k-1}\big),
    \end{split}
\end{equation}
where $\nabla_{\theta}^{T}{g(\cdot)}$ and $\mathbf{H}_{\theta}(\cdot)$ stand for gradient and Hessian of $g \big (\mathbf{h}^{(1:l)}_{1:n,k},\mathbf{V}^{(1:l+1)}_{k}, \thetab_{k-1}\big)$ with respect to $\thetab$ with $\mathbf{h}^{(1:l)}_{1:n} = \mathbf{h}^{(1:1)}_{1:n,k}$, $\mathbf{V}^{(1:l+1)} = \mathbf{V}^{(1:l+1)}_{k}$.
% For gradient Lipschitz continuous functions, we have \cite{lipschitz_gradient}
% \begin{equation}\label{GLC}
%     L_{\theta,k} \mathbf{I} - \mathbf{H}_{\theta} \big ( \mathbf{h}^{(1:l)}_{k+1},\mathbf{V}^{(1:l+1)}_{k+1}, \thetab_{k}\big) \succeq 0 , 
% \end{equation}
% where $L_{\theta,k}$ is the Lipschitz constant for $\nabla_{\theta}{g \big ( \mathbf{h}^{(1:l)}_{k+1},\mathbf{V}^{(1:l+1)}_{k+1}, \thetab_{k}\big)} $, and $\mathbf{I}$ is the identity matrix with proper dimension. 
In Proposition \ref{proposition_1}, we will show that the gradient of $ g \big ( \mathbf{h}^{(1:l)}_{1:n},\mathbf{V}^{(1:l+1)}, \thetab \big)$ is Lipschitz, and an upper bound for the second-order approximation as
\small
\begin{equation}\label{eq_38}
    \begin{split}
        & \widehat{g} \big ( \mathbf{h}^{(1:l)}_{1:n,k},\mathbf{V}^{(1:l+1)}_{k}, \thetab\Big)  \leq  g \Big ( \mathbf{h}^{(1:l)}_{1:n,k},\mathbf{V}^{(1:l+1)}_{k}, \thetab_{k-1}\big) \\
        & \hspace*{1cm} + \nabla_{\theta}^{T}{g \big ( \mathbf{h}^{(1:l)}_{1:n,k},\mathbf{V}^{(1:l+1)}_{k}, \thetab_{k-1}\big)} \big(\thetab - \thetab_{k-1}\big)  \\
        & \hspace*{1cm} + \cfrac{1}{2 \mu_{\theta,k}} \lVert \thetab - \thetab_{k-1} \rVert^{2} =: \tilde{g} \big ( \mathbf{h}^{(1:l)}_{1:n,k},\mathbf{V}^{(1:l+1)}_{k}, \thetab\big),
    \end{split}
\end{equation}
\normalsize
for $\mu_{\theta, k} \in (0, \frac{1}{L_{\theta, k}}]$.
Using \eqref{eq_38} in \eqref{eq37p}, we have
\begin{equation}\label{eq_39}
    \begin{split}
        & \thetab_{k} = \mathrm{arg} \ \underset{\thetab}{\mathrm{min}} \: \: \:
        g\big(\mathbf{h}^{(1:l)}_{1:n,k}, \mathbf{V}^{(1:l+1)}_{k}, \thetab_{k-1}\big) +  \nabla_{\theta}^{T} g \big (\mathbf{h}^{(1:l)}_{1:n,k}, \\
        &  \mathbf{V}^{(1:l+1)}_{k}, \thetab_{k-1}\big) \big(\thetab - \thetab_{k-1}\big) 
        + \cfrac{1}{2 \mu_{\theta,k}} \lVert \thetab - \thetab_{k-1} \rVert^{2} + r_{\theta}(\thetab) . 
    \end{split}
\end{equation}
\vspace*{-0.2em}
Since the convergence rate of the proximal gradient algorithm is slow, it has been suggested to use the extrapolated version of the last two estimates to update the next one \cite{nesterov1983}. Hence, we suggest the extrapolated version as
\begin{equation}\label{eq_39_edited}
\begin{split}
\thetab_{k} = & \mathrm{arg} \ \underset{\thetab}{\mathrm{min}} \: \: \:
g\big(\mathbf{h}^{(1:l)}_{1:n,k}, \mathbf{V}^{(1:l+1)}_{k}, \widehat{\thetab}_{k-1}\big) + \nabla_{\theta}^{T} g \big ( \mathbf{h}^{(1:l)}_{1:n,k},\\
& \small \mathbf{V}^{(1:l+1)}_{k}, \widehat{\thetab}_{k-1}\big) \big(\thetab - \widehat{\thetab}_{k-1}\big) + \cfrac{1}{2 \mu_{\theta,k}} \lVert \thetab - \widehat{\thetab}_{k-1} \rVert^{2} + r_{\theta}(\thetab),
\end{split}
\end{equation}
in which $\widehat{\thetab}_{k-1}$ is defined by
\begin{equation}\label{eq_40}
\widehat{\thetab}_{k-1} = \thetab_{k-1} + \omega_{\theta,k}\big(\thetab_{k-1}-\thetab_{k-2}\big),
\end{equation}
where $\omega_{\theta,k} > 0$ is the extrapolation weight. 
% The first term in \eqref{eq_39} is constant, so we replace it with 
% $\frac{1}{2\mu_{\theta,k}} \lVert \mu_{\theta,k} \nabla_{\theta} g\big(\mathbf{h}^{(1:l)}_{k+1}, \mathbf{V}^{(1:l+1)}_{k+1}, \widehat{\thetab}_{k}\big) \rVert^{2} $.
% This leads to
% \begin{equation}\label{eq_41}
%     \begin{split}
%         \thetab_{k+1} = \mathrm{arg} \ \underset{\thetab}{\mathrm{min}} \: \: \:
%         & \frac{1}{2\mu_{\theta,k}} \big \lVert \mu_{\theta,k} \nabla_{\theta} g\big(\mathbf{h}^{(1:l)}_{k+1}, \mathbf{V}^{(1:l+1)}_{k+1}, \widehat{\thetab}_{k}\big) \big \rVert^{2}\\
%         & + \nabla_{\theta}^{T}{g \big ( \mathbf{H}^{(1:l)}_{k+1},\mathbf{V}^{(1:l+1)}_{k+1}, \widehat{\thetab}_{k}\big)} \big(\thetab - \widehat{\thetab}_{k}\big) \\
%         & + \cfrac{1}{2 \mu_{\theta,k}} \lVert \thetab - \widehat{\thetab}_{k} \rVert^{2} + r_{\theta}(\thetab). 
%     \end{split}
% \end{equation}
Simple manipulations of \eqref{eq_39_edited} lead to
\begin{equation}\label{eq_42}
\begin{split}
& \thetab_{k}  =  \mathrm{arg} \ \underset{\thetab}{\mathrm{min}} \: \: \frac{1}{2\mu_{\theta, k}}  \\
& \times \Big \lVert \thetab - \Big( \widehat{\thetab}_{k-1} - \mu_{\theta, k} 
\nabla_{\theta} g\big(\mathbf{h}^{(1:l)}_{1:n,k}, \mathbf{V}^{(1:l+1)}_{k}, \widehat{\thetab}_{k-1}\big) \Big) \Big \rVert^{2} 
 +  r_{\theta}(\thetab) .
\end{split}
\end{equation}
which is equivalent to the proximal iteration
\begin{equation}\label{eq_43_edited}
\begin{split}
\thetab_{k}  = & \mathrm{Prox}_{ \mu_{\theta, k}, r_{\theta}} \bigg ( \widehat{\thetab}_{k-1} - \mu_{\theta, k} 
\nabla_{\theta} 
{g \Big ( \mathbf{h}^{(1:l)}_{1:n,k},\mathbf{V}^{(1:l+1)}_{k},\widehat{\thetab}_{k-1}}\Big)\bigg) .
\end{split}
\end{equation}
%Note that we do not use extrapolation for $\lbrace \mathbf{h}_{q}^{(i)} \rbrace|_{q=1}^{n}|_{i=1}^{l}$ and $\lbrace \mathbf{V}^{(j)} \rbrace_{j=1}^{l+1}$, and 
The same procedure can be written for  $\mathbf{h}^{(i)}_{q, k}$ and $\mathbf{V}^{(j)}_{k}$
\vspace*{-0.1em}
\begin{equation}\label{eq38}
        \begin{split}
            & \mathbf{h}^{(i)}_{q, k} = 
            \: \: \mathrm{Prox}_{ \mu_{h^{(i)},k},\alpha_{i} \lVert \mathbf{h}_{q}^{(i)}\rVert_{1}} 
            \bigg (\mathbf{h}^{(i)}_{q,k-1} - %\\[4pt]
            \mu_{h^{(i)},k}  
            \nabla_{h^{(i)}_{q}}\\
            & {g \Big ( \mathbf{h}^{(1:l)}_{1:q-1,k},\mathbf{h}^{(<i)}_{q,k},\mathbf{h}_{q,k-1}^{(i)},\mathbf{h}^{(>i)}_{q,k-1}, \mathbf{h}^{(1:l)}_{q+1:n,k-1}, \mathbf{V}^{(1:l+1)}_{k-1}, \thetab_{k-1}\Big)}\bigg)\\
                & \quad \: \: \: =\: \mathrm{Prox}_{ \mu_{h^{(i)},k},\alpha_{i} \lVert \mathbf{h}^{(i)}_{q}\rVert_{1}} \bigg(\mathbf{z}_{q,k-1}^{(i)}\bigg), \ \  k = 1, \dots, K.
        \end{split}
\end{equation}
where 
\small
\begin{equation}\label{zqk}
\begin{split}
& \zb_{q,k-1}^{(i)} = g^{(i)} \Biggl( \mathbf{W}_{k-2}^{(i)} \biggl( \dots \Bigl(g^{(1)} \bigl(\mathbf{W}_{k-2}^{(1)} \mathbf{x}_{q} + \mathbf{b}_{k-2}^{(1)}\bigr) \Bigr) \dots \biggr) + \mathbf{b}_{k-2}^{(i)} \Biggr).
\end{split}
\end{equation}
\normalsize
%for $k = 1, \dots, K$. After the first epoch, the DNN parameters are updated based on the initialization parameters $\thetab_{0}$ (random or Xavier initialization are two popular types of initialization in DNN training). On the other hand, at $k$-th epoch and before updating the parameters of DNN, $\mathbf{h}_{1,k}^{(1)}, \dots, \mathbf{h}_{1,k}^{(l)}, \dots, \mathbf{h}_{n,k}^{(1)}, \dots, \mathbf{h}_{n,k}^{(l)}$ and $\mathbf{V}_{k}^{(1)}, \dots, \mathbf{V}_{k}^{(l+1)}$ are updated based on the DNN parameters in $k-1$-th epoch, so for $k=1,2$ we have
We write $\eqref{eq38}$ for k=1,2
\begin{equation}\label{HHHH}
\begin{split}
&\mathbf{h}_{q,1}^{(i)} = \mathrm{Prox}_{ \mu_{h^{(i)},k},\alpha_{i} \lVert \mathbf{h}^{(i)}_{q}\rVert_{1}} \bigg(\mathbf{z}_{q,0}^{(i)}\bigg), \\
&\mathbf{h}_{q,2}^{(i)} = \mathrm{Prox}_{ \mu_{h^{(i)},k},\alpha_{i} \lVert \mathbf{h}^{(i)}_{q}\rVert_{1}} \bigg(\mathbf{z}_{q,1}^{(i)}\bigg).
\end{split}
\end{equation}
%where
%\begin{equation}\label{z0z1}
%\begin{split}
%& \mathbf{z}_{q,0}^{(i)} = g^{(i)} \Biggl( \mathbf{W}_{-1}^{(i)} \biggl( \dots \Bigl(g^{(1)} \bigl(\mathbf{W}_{-1}^{(1)} \mathbf{x}_{q} + \mathbf{b}_{-1}^{(1)}\bigr) \Bigr) \dots \biggr) + \mathbf{b}_{-1}^{(i)} \Biggr)\\
%& \mathbf{z}_{q,1}^{(i)} = g^{(i)} \Biggl( \mathbf{W}_{0}^{(i)} \biggl( \dots \Bigl(g^{(1)} \bigl(\mathbf{W}_{0}^{(1)} \mathbf{x}_{q} + \mathbf{b}_{0}^{(1)}\bigr) \Bigr) \dots \biggr) + \mathbf{b}_{0}^{(i)} \Biggr)
%\end{split}
%\end{equation}
In \eqref{HHHH}, we need $\thetab_{-1}$ and $\thetab_{0}$ to calculate $\mathbf{z}_{q,0}$ and $\mathbf{z}_{q,1}$. We initialize $\thetab_{0}$ randomly and set $\thetab_{-1} = \thetab_{0}$.
%where $\widehat{\mathbf{h}}$ is the extrapolated of $\mathbf{h}$ and defined as:
%\begin{equation}\label{eq39}
%    \widehat{\mathbf{h}}^{(i)}_{k} = \mathbf{h}^{(i)}_{k} + %\omega_{\mathbf{h}^{(i)},k} \big(\mathbf{h}^{(i)}_{k} - %\mathbf{h}^{(i)}_{k-1}\big)
%\end{equation}
% Applying \eqref{eq30}, the gradient in \eqref{eq38} is given
% \begin{align}
%     & \nabla_{h^{(i)}}
%     {g \Big ( \mathbf{h}^{(<i)}_{k+1},\mathbf{h}^{(i)}_{k},\mathbf{h}^{(>i)}_{k},\mathbf{V}^{(1:l)}_{k}, \thetab_{k}\Big)}\bigg) = -\frac{1}{\mu_{ h^{(i)} ,k}} (\mathbf{z}_{k}^{(i)} - \mathbf{h}_{k}^{(i)}).
% \end{align}
% Then, \eqref{eq38} can be written as

% \begin{equation}\label{eq_47}
%         \begin{split}
%             \mathbf{h}^{(i)}_{k+1} = &
%             \: \: \mathrm{Prox}_{ \mu_{h^{(i)},k},\alpha_{i} \lVert \mathbf{h}^{(i)}\rVert_{1}} \bigg(\mathbf{z}_{k}^{(i)}\bigg) . 
%         \end{split}
% \end{equation}

%\begin{tcolorbox}[colback = my-blue]
%\begin{equation}\label{eq40}
%    \mathbf{h}^{(i)}_{k+1} = \mathrm{Prox}_ { \mu_{\mathbf{h}^{(i)},k}, %\alpha_{i} \lVert \mathbf{z}^{(i)}\rVert_{1}}  \big ( \mathbf{z}^{(i)}_{k} %\big )
%\end{equation}
%\end{tcolorbox}
In the same way, we obtain
%For $\mathbf{V}^{(j)}_{k+1}$, we obtain
\begin{equation}\label{eq42}
    \begin{split}
        \mathbf{V}^{(j)}_{k} \:  = & \: \mathrm{Prox}_{ \mu_{v^{(j)},k},\beta_{j} \lVert \mathbf{V}^{(j)} \rVert_{*} } \bigg ( \mathbf{V}^{(j)}_{k-1} - \mu_{v^{(j)},k}  \\[4pt] &  \nabla_{v^{(j)}} 
        {g \Big ( \mathbf{h}^{(1:l)}_{1:n,k},  \mathbf{V}^{(<i)}_{k},\mathbf{V}^{(j)}_{k-1},\mathbf{V}^{(>i)}_{k-1}, \thetab_{k-1}\Big)}\bigg)\\
        = & \: \mathrm{Prox}_{ \mu_{v^{(j)},k},\beta_{j} \lVert \mathbf{V}^{(j)} \rVert_{*} }  \bigg ( \mathbf{W}^{(j)}_{k-1} \bigg).
    \end{split}
\end{equation}

%where  
%\begin{equation}\label{eq43}
%    \widehat{\mathbf{V}}^{(j)}_{k} = \mathbf{V}^{(j)}_{k} + \omega_{\mathbf{V}^{(j)},k} \big(\mathbf{V}^{(j)}_{k} - \mathbf{V}^{(j)}_{k-1}\big)
%\end{equation}

% By calculating the gradient in \eqref{eq42}, the update rule can be simplified as
% \begin{equation}\label{eq44}
%     \mathbf{V}^{(j)}_{k+1} \: = \mathrm{Prox}_{ \mu_{v^{(j)},k},\beta_{j} \lVert \mathbf{V}^{(j)} \rVert_{*} }  \bigg ( \mathbf{W}^{(j)}_{k} \bigg ) , 
% \end{equation}
% %\end{tcolorbox}
% while the definition for  the proximal operator of the nuclear norm is denoted in Lemma \ref{lemma3}.

Alternative updates of the above-mentioned parameters lead to the following algorithm for training the FCNNs in which we aim at computing the missing entries of the partially observed matrix satisfying
\vspace*{-1em}
%\begin{tcolorbox}[colback = my-blue]
\begin{equation}\label{eq48}
    x_{ij} = \hat{x}_{ij} , \: (i,j) \in \bar{\Omega} ,
\end{equation}
where $\bar{\Omega}$ shows the missing entries. 
%\end{tcolorbox}
\vspace*{-1em}

\begin{algorithm}[h!]
    \caption{Deep Neural Network with Nonsmooth Regularization terms (DNN-NSR) to solve NLMC problem.}
    \begin{algorithmic}[1]
           \State \textbf{Input:} $\Omega$, $\mathbf{X}$, $M$ in \eqref{eq29}, $\lbrace \alpha_{i} \rbrace_{i=1}^{l}$, $\lbrace \beta_{j} \rbrace_{j=1}^{l+1}$, number of epochs $K$, $n$ is the number of columns of $\mathbf{X}$, $\lambda$ in \eqref{eq_14},  $\mu_{h^{(i)}}^\mathrm{max}$, $\mu_{h^{(i)}}^\mathrm{min}$, $\mu_{v^{(i)}}^\mathrm{max}$, $\mu_{v^{(i)}}^\mathrm{min}$, $\mu_{\theta}^\mathrm{max}$, $\mu_{\theta}^\mathrm{min}$, $\gamma = 10^{3}$, $s_{1} = \lbrace 0.1, 0.2, 0.8, 0.9\rbrace$ (scale value for all $\mu$), $s_{2}=\lbrace 0.1,0.2, 0.5, 0.6 \rbrace$, $s_{3} = \lbrace 1.05, 1.1, 1.2 \rbrace$, termination threshold \Big($\lbrace\zeta_{i}\rbrace_{i=1}^{l}, \lbrace\zeta_{j}\rbrace_{j=1}^{l+1}$ \Big), $E$ is a specific epoch number (e.g. $E=200, 300$), $\delta_{min} = 0.01$, $\delta_{1} = \delta_{\max} = 0.99$.
        \State \textbf{Initialization:}
        $\thetab_{0} = \thetab_{-1}$.
        \State \textbf{Output:} $x_{ij}$ in \eqref{eq48}, Neural Network parameters.
        \algrule
        \State $\mu^{\max} = \mu_{h^{(i)}}^{\mathrm{max}}, \mu_{v^{(j)}}^{\mathrm{max}}, \mu_{\theta}^{\mathrm{max}}$ 
        %\State{Flag $\leftarrow$ True} 
        %\While{Flag}
        %\State{$k=0$}
        \For{$k=1,2,\dots, K$}
        \small
        \State{$\mu_{\theta, k} = \ \cfrac{1}{\gamma L_{\theta,k}}, \ \ \mu_{\theta, 0} = \mu_{\theta, 1} = \mu_{\theta}^{\max}$};
        \State{$\omega_{\theta, 1} = \cfrac{\gamma - 1}{2(\gamma + 1)} \sqrt{\delta_{1}} \approxeq 0.5, \ \ k = 1$;}
        \State{$\omega_{\theta, k} = \cfrac{\gamma - 1}{2(\gamma + 1)} \sqrt{\delta_{k} \cfrac{L_{\theta, k-1}}{\mu_{L, k}}} \approxeq 0.5 \sqrt{\delta_{k} \cfrac{L_{\theta, k-1}}{\mu_{L, k}}} , \ \ k \geq 2$;}
        \normalsize
        \If{equation (3) in the supplementary materials is not satisfied} 
        \State{$\mu^{max} \leftarrow s_{1}\cdot\mu^{\max}$;} 
        \Else
        \For{$q = 1, 2, \dots, n$}
        %\While{$k < R$}  
        \State Compute $\lbrace \mathbf{h}^{(i)}_{q,k} \rbrace_{i=1}^{l}$ using \eqref{eq38} ;
        \EndFor
        \State Compute $\lbrace \mathbf{V}^{(j)}_{k} \rbrace|_{j=1}^{l+1}$ using \eqref{eq42} ;
        \State $\widehat{\thetab}_{k-1} = \thetab_{k-1} + \omega_{\theta,k} \big(\thetab_{k-1} - \thetab_{k-2} \big)$ ;
        \State Compute $\thetab_{k}$ using \eqref{eq_43_edited} ;
        \State{\small Calculate $Q(\xi_{k}) = Q(\mathbf{h}^{(1:l)}_{1:n,k}, \mathbf{V}^{(1:l+1)}_{k}, \thetab_{k})$ using \eqref{eq_32};}
        \small
        \State{\small Calculate $Q(\xi_{k-1}) = Q(\mathbf{h}^{(1:l)}_{1:n,k-1}, \mathbf{V}^{(1:l+1)}_{k-1}, \thetab_{k-1})$ using \eqref{eq_32};}
        %\EndWhile
        \normalsize
        \If{$Q(\xi _{k}) - Q(\xi_{k-1}) > 0 \quad (k>E)$ }
        \State{$\delta_{k} = s_{2} \cdot \delta_{k-1}$;} 
        \If{$\delta_{k} < \delta_{\min}$} 
        \State{$\delta_{k} \leftarrow \delta_\text{min}$;}
        \EndIf
        \State{Repeat lines 4 to 26 with new $\mu^{\max}$.}
        \EndIf
        %\EndIf       
        %\ElsIf{$Q(\xi _{k+1}) - Q(\xi_{k}) < 0 \ (k %\geq 1)$ }
        \If{$Q(\xi _{k}) - Q(\xi_{k-1}) < 0 \quad (k>E) $ }
        \State{$\delta_{k} = s_{3} \cdot \delta_{k-1}$;} 
        \If{$\delta_{k} > \delta_{\max}$} 
        \State{$\delta_{k} \leftarrow \delta_\text{max}$;}
        \EndIf
        \State{Calculate $\text{C1}_{k} = \lVert \mathbf{h}_{1:n,k}^{(i)} - \mathbf{z}_{1:n,k}^{(i)} \rVert^{2} \leq \zeta_{i}$;}
        \State{Calculate $\text{C2}_{k} = \lVert \mathbf{V}_{k}^{(j)} - \mathbf{W}_{k}^{(j)} \rVert^{2} \leq \zeta_{j}$;}
        \If{$k = K$ or $\text{C1}_{k}$ and $\text{C2}_{k}$ satisfied}  
        \State{calculate $x_{ij}$ according to \eqref{eq48};} 
        \Else
        \small
        \State{continue until $k = K$ or satisfy $\text{C1}_{k}$ and $\text{C2}_{k}$;}
        \normalsize
        \EndIf 
        \EndIf
        %\EndWhile  
        \EndIf
        \EndFor
    \end{algorithmic}
\end{algorithm}
\vspace*{-0.5em}
\subsection*{A. Convergence Analysis}

Employing smooth enough activation functions in a DNN, most of the times, the loss function defined for training this network will be nonconvex. Without assuming the convexity of the loss function and by bounding the parameters of the network by adding the regularization terms $r_{\theta}(\thetab)$, we show that the sequence generated by the DNN-NSR algorithm converges to the critical point of the $\mathcal{P}_{\mu}$.
\vspace*{-1em}
%Let us begin by showing the Lipschitz continuity of the partial derivatives of $g\big(\mathbf{h}^{(1:l)}_{1:n}, \mathbf{V}^{(1:l+1)}, \thetab\big)$ defined in \eqref{eq30}.
\vspace*{0.5em}
\begin{prepos}\label{proposition_1}
    Partial derivatives of function $g\big(\mathbf{h}^{(1:l)}_{1:n}, \mathbf{V}^{(1:l+1)}, \thetab\big)$ (defined in \eqref{eq30}) with respect to $\mathbf{h}^{(i)}_{q}$ and $\mathbf{V}^{(j)}$ are Lipschitz continuous. Moreover, if all activation functions $g^{(i)}$ ($i\in\{1,\ldots,l\}$) used in \eqref{eq_9} are twice continuously differentiable and $\mathcal{C} = \left\{\thetab\in\mathbb{R}^d ~|~ \|\thetab\|_\infty\leq M\right\}$, then the partial derivative of $g\big(\mathbf{h}^{(1:l)}_{1:n}, \mathbf{V}^{(1:l+1)}, \thetab\big)$ with respect to $\thetab$ is Lipschitz continuous on $\mathcal{C}$.
\end{prepos}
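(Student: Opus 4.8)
The plan is to handle the three claims separately, disposing of the two $\mathbf{h}$- and $\mathbf{V}$-gradients by direct computation and then reducing the $\thetab$-case to Lemma~\ref{lem:LSmoothness}.

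For the partial derivative with respect to $\mathbf{h}_q^{(i)}$, I would observe that the only summand of $g$ in \eqref{eq30} involving $\mathbf{h}_q^{(i)}$ is $\frac{1}{2\mu_{h^{(i)}}}\lVert \mathbf{z}_q^{(i)} - \mathbf{h}_q^{(i)}\rVert^{2}$, since $\mathbf{z}_q^{(i)}$ depends only on $\thetab$ and the inputs. Differentiating yields $\nabla_{\mathbf{h}_q^{(i)}} g = \frac{1}{\mu_{h^{(i)}}}(\mathbf{h}_q^{(i)} - \mathbf{z}_q^{(i)})$, which is affine in $\mathbf{h}_q^{(i)}$ and therefore globally Lipschitz with constant $1/\mu_{h^{(i)}}$. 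The same argument applies to $\mathbf{V}^{(j)}$: the only relevant term is $\frac{1}{2\mu_{v^{(j)}}}\lVert \mathbf{W}^{(j)} - \mathbf{V}^{(j)}\rVert_F^{2}$, whose gradient $\frac{1}{\mu_{v^{(j)}}}(\mathbf{V}^{(j)} - \mathbf{W}^{(j)})$ is affine, hence Lipschitz with constant $1/\mu_{v^{(j)}}$.

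For the $\thetab$-case I would reduce the claim to Lemma~\ref{lem:LSmoothness} by showing that $g$, viewed as a function of $\thetab$ with $\mathbf{h}_{1:n}^{(1:l)}$ and $\mathbf{V}^{(1:l+1)}$ held fixed, belongs to $\mathcal{C}^2$, and that $\mathcal{C} = \{\thetab : \|\thetab\|_\infty \leq M\}$ is convex and compact. Convexity and compactness of $\mathcal{C}$ are immediate, as it is a closed box in $\mathbb{R}^d$. For the $\mathcal{C}^2$ regularity I would argue term by term: the network output $\widehat{\mathbf{x}}_q$ in \eqref{eq_9} and each hidden representation $\mathbf{z}_q^{(i)}$ in \eqref{eq_20_prime} are finite compositions of the affine maps $\thetab \mapsto \mathbf{W}^{(r)}(\cdot) + \mathbf{b}^{(r)}$ and the activations $g^{(r)}$; since every $g^{(r)}$ is assumed twice continuously differentiable and compositions, sums, and products of $\mathcal{C}^2$ maps are again $\mathcal{C}^2$, both $\widehat{\mathbf{x}}_q$ and $\mathbf{z}_q^{(i)}$ are $\mathcal{C}^2$ in $\thetab$. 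Consequently the data-fidelity term $\lVert \mathbf{N}\odot(\mathbf{X}-\widehat{\mathbf{X}})\rVert_F^{2}$, the coupling terms $\lVert \mathbf{z}_q^{(i)}-\mathbf{h}_q^{(i)}\rVert^{2}$ (with $\mathbf{h}$ fixed), and the purely polynomial terms $\lambda\sum_i\lVert\mathbf{W}^{(i)}\rVert_F^{2}$ and $\sum_j\frac{1}{2\mu_{v^{(j)}}}\lVert\mathbf{W}^{(j)}-\mathbf{V}^{(j)}\rVert_F^{2}$ (with $\mathbf{V}$ fixed) are each $\mathcal{C}^2$, so their finite sum $g$ is $\mathcal{C}^2$. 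Applying Lemma~\ref{lem:LSmoothness} to this $g$ on the compact convex set $\mathcal{C}$ then produces an $L_\theta > 0$ with $\lVert\nabla_\theta g(\thetab)-\nabla_\theta g(\thetab')\rVert \leq L_\theta \lVert \thetab-\thetab'\rVert$ for all $\thetab,\thetab'\in\mathcal{C}$, which is exactly the asserted Lipschitz continuity.

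The routine parts are the two affine gradients and the smoothness of the polynomial pieces. The main obstacle is the $\mathcal{C}^2$ regularity of the deep compositions $\thetab \mapsto \widehat{\mathbf{x}}_q$ and $\thetab \mapsto \mathbf{z}_q^{(i)}$: one must verify, via the chain rule applied across all $l{+}1$ layers, that every first and second partial derivative with respect to the entries of the weight matrices and biases exists and is continuous. This is precisely where the $\mathcal{C}^2$ hypothesis on the activations enters, and it also explains why the restriction to the bounded box $\mathcal{C}$ (enforced in the model by $r_\theta$) cannot be dropped: without compactness the Hessian norm $\lVert\nabla_\theta^2 g\rVert$ could grow without bound as the weights grow, since it involves products of weights and activation derivatives propagated through the layers, so $L$-smoothness would fail globally even though it holds on each such $\mathcal{C}$.
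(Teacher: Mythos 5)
Your proposal is correct and follows essentially the same route as the paper's own proof: the partial gradients with respect to $\mathbf{h}_q^{(i)}$ and $\mathbf{V}^{(j)}$ are computed explicitly and seen to be affine with Lipschitz moduli $1/\mu_{h^{(i)}}$ and $1/\mu_{v^{(j)}}$, while the $\thetab$-case is settled by observing that $g$ is twice continuously differentiable in $\thetab$ (since the activations are $\mathcal{C}^2$) and invoking Lemma~\ref{lem:LSmoothness} on the compact convex set $\mathcal{C}$. Your extra detail on the chain-rule verification of $\mathcal{C}^2$ regularity and on why the restriction to $\mathcal{C}$ cannot be dropped elaborates on points the paper states tersely, but the underlying argument is identical.
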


\begin{proof}
See Appendix A.
\end{proof}
\vspace*{-0.8em}
Let us emphasize that introducing the compact set $\mathcal{C}$ in Proposition~\ref{proposition_1} is essential to show that partial derivative of $g\big(\mathbf{h}^{(1:l)}_{1:n}, \mathbf{V}^{(1:l+1)}, \thetab\big)$ with respect to $\thetab$ is $L$-Lipschitz continuous. Indeed, this partial derivative is not Lipschitz continuous on $\mathbb{R}^d$. For example, if we assume the simplest case that all activation functions are identity, then it is shown in \cite{mukkamala2019bregman} that the resulting loss function is relatively smooth in the sense discussed in \cite{bauschke2016descent,lu2018relatively}, and non-Euclidean proximal operators in the sense of Bregman distances. The way to remove this constraint involves much more technical details, multi-block relative smoothness in the sense defined in \cite{ahookhosh2021multi,ahookhosh2021block,khanh2022block}, which is out of the scope of the current paper that we postpone to future studies.
\vspace*{-0.2em}
Invoking Proposition~\ref{proposition_1}, a step-size can be obtained for each optimization variable, for which it is guaranteed that the value of the objective function decreases. Note that step-sizes depend on these Lipschitz constants.
\begin{prepos}
    Let all assumptions of Proposition~\ref{proposition_1} be satisfied, and let 
    $\big \lbrace \lbrace \mathbf{h}^{(i)}_{q,k} \rbrace|_{q=1}^{n}|_{i=1}^{l}, \lbrace \mathbf{V}^{(j)}_{k} \rbrace_{j=1}^{l+1}, \thetab_{k} \big \rbrace$ 
    be the sequence generated by the
    update rules introduced in \eqref{eq38}, \eqref{eq42}, and \eqref{eq_43_edited}, respectively. Then, it holds that
    \small
    \begin{equation}\label{summable_property}
    \begin{split}
    & \sum_{k=1}^{\infty} \sum_{q=1}^{n} \sum_{i=1}^{l} \nu_{h^{(i)},k} \lVert \mathbf{h}^{(i)}_{q,k} - \mathbf{h}^{(i)}_{q,k-1} \rVert^{2} + \sum_{k=1}^{\infty} \sum_{j=1}^{l+1} \nu_{v^{(j)},k}  
    \lVert \mathbf{V}^{(j)}_{k}  \\
    & - \mathbf{V}^{(j)}_{k-1} \rVert_{F}^{2}
    + \sum_{k=1}^{\infty} \nu_{\theta, k} \lVert \thetab_{k} - \thetab_{k-1} \rVert^{2} \ < \infty,
    \end{split}
    \end{equation}
    \normalsize
with $\nu_{h^{(i)},k} = \frac{(\gamma - 1) L_{h^{(i)}, k}}{2}$, $\nu_{v^{(j)},k} = \frac{(\gamma - 1) L_{v^{(j)}, k}}{2}$, $\nu_{\theta, k} = \frac{(1-\delta_{k+1})(\gamma - 1) L_{\theta, K}}{2} $,  and $\omega_{\theta, k} \leq \frac{\gamma - 1}{2(\gamma + 1)} \sqrt{\delta_{k} \frac{L_{\theta, k-1}}{L_{\theta, k}}}$ for $\delta_{k} < 1$ and $\gamma > 1$ .
    %with $\nu_{h^{(i)},k} = \frac{(\gamma - 1) L_{h^{(i)}, k}}{2}$, $\nu_{v^{(j)},k} = \frac{(\gamma - 1) L_{v^{(j)}, k}}{2}$, $\nu_{\theta, k} = \frac{(1-\delta_{k+1})(\gamma - 1) L_{\theta, K}}{2} $, $\delta_{k} < 1$, $\mu_{\theta, k} = \cfrac{1}{\gamma L_{\theta, k}}$, $\mu_{h^{(i)}, k} = \cfrac{1}{\gamma L_{h^{(i)}, k}}$, $\mu_{v^{(j)}, k} = \cfrac{1}{\gamma L_{v^{(j)}, k}}$ and $\omega_{\theta, k} \leq \frac{\gamma - 1}{2(\gamma + 1)} \sqrt{\delta_{k} \frac{L_{\theta, k-1}}{L_{\theta, k}}}$ for $\gamma > 1$ .
\end{prepos}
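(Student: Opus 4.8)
The plan is to derive a per-block sufficient-decrease inequality for every update and then glue them together through a telescoping argument built on a carefully chosen potential (Lyapunov) function. Proposition~\ref{proposition_1} supplies the Lipschitz continuity of the partial gradients of $g$ with constants $L_{h^{(i)},k}$, $L_{v^{(j)},k}$, and $L_{\theta,k}$, so the Descent Lemma (Lemma~\ref{lemma4}) can be applied blockwise along the optimization order fixed in the text, and the nonnegativity of $\mathcal{L}$ together with $r_h,r_v,r_\theta\ge 0$ will provide a lower bound $Q\ge 0$ at the end.

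First I would treat the non-extrapolated blocks $\mathbf{h}^{(i)}_{q}$ and $\mathbf{V}^{(j)}$. Each proximal iterate minimizes the quadratic surrogate $\tilde g + r$, so comparing its surrogate value against the value at the previous iterate and invoking the Descent Lemma with the step-size $\mu=1/(\gamma L)$ yields a decrease of the form $Q(\text{after})\le Q(\text{before})-\tfrac{(\gamma-1)L}{2}\lVert\text{increment}\rVert^{2}$, which is exactly the coefficient $\nu_{h^{(i)},k}$ (resp.\ $\nu_{v^{(j)},k}$). Because every block update only lowers $Q$ evaluated at the most recent values of all blocks, chaining these inequalities over the full sweep of epoch $k$ accumulates all the $\mathbf{h}$- and $\mathbf{V}$-decrements cleanly.

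The delicate step is the extrapolated $\thetab$-update, whose surrogate is linearized at $\widehat{\thetab}_{k-1}=\thetab_{k-1}+\omega_{\theta,k}(\thetab_{k-1}-\thetab_{k-2})$ instead of at $\thetab_{k-1}$. I would combine the $1/\mu_{\theta,k}$-strong convexity of the surrogate, the three-point identity (Lemma~\ref{Lemma4}), and the two-sided estimate $|g(\yb)-g(\xb)-\nabla^{T}g(\xb)(\yb-\xb)|\le\tfrac{L}{2}\lVert\yb-\xb\rVert^{2}$ coming from $L$-smoothness. Substituting $\lVert\thetab_{k-1}-\widehat{\thetab}_{k-1}\rVert=\omega_{\theta,k}\lVert\thetab_{k-1}-\thetab_{k-2}\rVert$ gives
$$Q(\text{after }\thetab)\le Q(\text{before }\thetab)-\frac{\gamma L_{\theta,k}}{2}\lVert\thetab_{k}-\thetab_{k-1}\rVert^{2}+\frac{(\gamma+1)L_{\theta,k}}{2}\omega_{\theta,k}^{2}\lVert\thetab_{k-1}-\thetab_{k-2}\rVert^{2}.$$
The positive term involving the \emph{previous} increment is the obstruction: a naive summation would not telescope.

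To close the argument I would introduce the potential $\Psi_{k}=Q(\xi_{k})+\tfrac{(\gamma+1)L_{\theta,k+1}}{2}\omega_{\theta,k+1}^{2}\lVert\thetab_{k}-\thetab_{k-1}\rVert^{2}$, which absorbs the bad term of epoch $k$ into the carried-over term from epoch $k-1$; any leftover cross terms are controlled by Young's inequality (Lemma~\ref{Young_lemma}). The prescribed bound $\omega_{\theta,k}\le\tfrac{\gamma-1}{2(\gamma+1)}\sqrt{\delta_{k}L_{\theta,k-1}/L_{\theta,k}}$ is precisely what forces the residual coefficient $\tfrac{\gamma L_{\theta,k}}{2}-\tfrac{(\gamma+1)L_{\theta,k+1}}{2}\omega_{\theta,k+1}^{2}$ to stay bounded below by $\nu_{\theta,k}=\tfrac{(1-\delta_{k+1})(\gamma-1)L_{\theta,K}}{2}>0$ when $\delta_{k}<1$ and $\gamma>1$, yielding $\Psi_{k}\le\Psi_{k-1}-(\text{all }\nu\text{-weighted increments at epoch }k)$. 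Since $Q\ge 0$ implies $\Psi_{k}\ge 0$, telescoping from $k=1$ to $N$ and letting $N\to\infty$ bounds the total sum by $\Psi_{0}<\infty$, which is the claimed summability. I expect the main obstacle to be the precise constant bookkeeping that matches the residual $\thetab$-coefficient to $\nu_{\theta,k}$ and confirms that $\Psi_{k}$ remains a valid bounded-below, monotone Lyapunov function throughout the $\delta_{k}$-adaptation.
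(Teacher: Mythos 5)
Your proof is correct, and it shares the paper's overall skeleton: blockwise sufficient decrease from the proximal minimizer property plus the Descent Lemma (Lemma~\ref{lemma4}), control of the extrapolation through $\lVert\thetab_{k-1}-\widehat{\thetab}_{k-1}\rVert=\omega_{\theta,k}\lVert\thetab_{k-1}-\thetab_{k-2}\rVert$ together with the stated bound on $\omega_{\theta,k}$, and a telescoping sum bounded via the initial value of $Q$. Where you genuinely diverge is in the treatment of the $\thetab$-step and in the bookkeeping. The paper expands $g$ around $\thetab_{k-1}$ rather than around $\widehat{\thetab}_{k-1}$, so after adding the minimizer inequality it must control the gradient-difference term $\nabla_{\theta}g(\widehat{\thetab}_{k-1})-\nabla_{\theta}g(\thetab_{k-1})$ by Lipschitz continuity, Cauchy--Schwarz, and Young's inequality (Lemma~\ref{Young_lemma}); this produces the decrease coefficient $\frac{(\gamma-1)L_{\theta,k}}{4}$ and the penalty coefficient $\frac{(\gamma+1)^{2}L_{\theta,k}}{\gamma-1}\omega_{\theta,k}^{2}$ in \eqref{second_part_of_edition}, to which the stated $\omega_{\theta,k}$ bound is exactly calibrated so that the bad term becomes $\frac{\delta_{k}(\gamma-1)L_{\theta,k-1}}{4}\lVert\thetab_{k-1}-\thetab_{k-2}\rVert^{2}$ in \eqref{eq57_edited}. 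You instead linearize at $\widehat{\thetab}_{k-1}$ in both directions (two-sided smoothness) and invoke the $\tfrac{1}{\mu_{\theta,k}}$-strong convexity of the surrogate, which eliminates the gradient-difference term entirely and yields the tighter constants $\frac{\gamma L_{\theta,k}}{2}$ and $\frac{(\gamma+1)L_{\theta,k}}{2}\omega_{\theta,k}^{2}$; one can check the same $\omega_{\theta,k}$ bound then leaves the residual coefficient above $\nu_{\theta,k}$ with room to spare. Finally, your Lyapunov function $\Psi_{k}$ is an explicit repackaging of what the paper does implicitly when it sums \eqref{eq57_edited} over $k$, regroups the $\lVert\thetab_{k-1}-\thetab_{k-2}\rVert^{2}$ terms across consecutive epochs to create the $(1-\delta_{k+1})$ factors, and uses $\thetab_{0}=\thetab_{-1}$ to kill the boundary term in \eqref{firstpart-1}. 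What each approach buys: yours gives sharper constants and makes the lower-boundedness $Q\geq 0$ explicit (the paper leaves it implicit in the passage $K\to\infty$), while the paper's Cauchy--Schwarz/Young route is what explains the exact $\frac{\gamma-1}{2(\gamma+1)}$ calibration of the extrapolation weight appearing in the statement.
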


\begin{proof}
See Appendix B.
\end{proof}
\vspace*{-1em}

\begin{prepos}
    If the function $Q$ given in \eqref{eq_32} has bounded lower level sets, then $\Big\lbrace  \lbrace \mathbf{h}^{(i)}_{k} \rbrace_{i=1}^{l} , \lbrace \mathbf{V}^{(j)}_{k}\rbrace _{j=1}^{l+1},\thetab_{k} \Big\rbrace$ generated by Algorithm~1 is a bounded sequence.
\end{prepos}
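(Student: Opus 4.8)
The plan is to show boundedness as a direct consequence of the summable-decrease property established in Proposition~2 together with the assumption that $Q$ has bounded lower level sets. First I would invoke the descent structure implicit in Proposition~2: since the left-hand side of \eqref{summable_property} is a convergent series of nonnegative terms, the objective $Q$ is nonincreasing along the iterates (each proximal update in \eqref{eq38}, \eqref{eq42}, \eqref{eq_43_edited} minimizes an upper bound $\tilde{g}$ of the form \eqref{eq_38}, so by the Descent Lemma~\ref{lemma4} each step cannot increase $Q$ beyond a controlled amount, and the accumulated increase is finite). Concretely, I would argue that there exists a constant $C$ with $Q(\xi_k)\leq Q(\xi_0)+C$ for all $k$, where $\xi_k=\big(\mathbf{h}^{(1:l)}_{1:n,k},\mathbf{V}^{(1:l+1)}_{k},\thetab_k\big)$; in the regime $k>E$ the algorithm explicitly enforces monotone decrease (lines~20--26 reject increasing steps), so in fact $Q(\xi_k)\leq Q(\xi_E)$ eventually.

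The core step is then to translate this uniform upper bound on $Q(\xi_k)$ into boundedness of the sequence. Let $\alpha^\star=\sup_k Q(\xi_k)<\infty$ from the previous paragraph. By definition, the lower level set $\mathcal{S}_{\alpha^\star}=\big\{\xi : Q(\xi)\leq \alpha^\star\big\}$ contains every iterate $\xi_k$. By hypothesis $\mathcal{S}_{\alpha^\star}$ is bounded, hence the entire sequence $\big\{\xi_k\big\}$ lies in a bounded set. Projecting onto each block coordinate, this yields that $\big\{\thetab_k\big\}$, each $\big\{\mathbf{h}^{(i)}_{q,k}\big\}$, and each $\big\{\mathbf{V}^{(j)}_k\big\}$ are individually bounded, which is the claim.

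I would make one technical point about the constant term $r_\theta(\thetab)$ in $Q$: because $r_\theta$ (defined in \eqref{eq29}) takes the value $+\infty$ outside the box $\|\thetab\|_\infty\leq M$ and $0$ inside, finiteness of $Q(\xi_k)$ already forces $\|\thetab_k\|_\infty\leq M$, so the $\thetab$-block is trivially bounded regardless of the level-set assumption; the substantive content of the proposition concerns the slack blocks $\mathbf{h}$ and $\mathbf{V}$, whose boundedness is not guaranteed a priori and must come from the bounded-level-set hypothesis on $Q$.

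The main obstacle I anticipate is establishing the uniform bound $\sup_k Q(\xi_k)<\infty$ rigorously rather than merely asserting monotonicity. For $k\leq E$ the penalty parameters $\mu^{\max}$ are being scaled (line~11) and the objective is not yet guaranteed to decrease monotonically, so I would need to handle the initial finite phase separately — this is harmless since it involves only finitely many iterates, each giving a finite $Q$ value, so the supremum over $k\leq E$ is a finite maximum. The cleaner argument is to combine this finite initial bound with the eventual monotone-decrease guarantee for $k>E$, concluding $\sup_k Q(\xi_k)\leq\max\{\max_{k\leq E}Q(\xi_k),\,Q(\xi_E)\}<\infty$, and then apply the bounded-level-set assumption. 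I expect the write-up to be short, with the only care needed being the bookkeeping across the two phases $k\leq E$ and $k>E$.
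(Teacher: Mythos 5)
Your proposal is correct and takes essentially the same route as the paper: both arguments trap every iterate in a single sublevel set of $Q$ and then invoke the bounded-lower-level-set hypothesis. The paper's proof is shorter only because inequality \eqref{finalformula2}, already established in the proof of Proposition~2, gives $Q(\mathbf{h}_{1:n,K}^{(1:l)}, \mathbf{V}_{K}^{(1:l+1)}, \thetab_{K}) \leq Q(\mathbf{h}_{1:n,0}^{(1:l)}, \mathbf{V}_{0}^{(1:l+1)}, \thetab_{0})$ for every $K$, so your constant $C$, the supremum $\alpha^{\star}$, and the two-phase bookkeeping over $k \leq E$ and $k > E$ are all unnecessary.
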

\begin{proof}
See Appendix C.
\end{proof}

\begin{theorem}[Subsequence convergence:]\label{theorem1}
    Let $\Big\lbrace  \lbrace \mathbf{h}^{(i)}_{k} \rbrace_{i=1}^{l} , \lbrace \mathbf{V}^{(j)}_{k}\rbrace _{j=1}^{l+1},\thetab_{k} \Big\rbrace$ be a bounded sequence generated by Algorithm 1, then
    \begin{enumerate}
        \item[(a)] There exists a finite limit point $\big($say $\lbrace 
        \big \lbrace \lbrace \bar{\mathbf{h}}^{(i)} \rbrace_{i=1}^{l} , \lbrace \bar{\mathbf{V}}^{(j)}\rbrace _{j=1}^{l+1},\bar{\thetab} \big \rbrace$ $\big)$ for the sequence.
        \item[(b)] Regularizers satisfy:
        \begin{equation}\label{eq89}
            \begin{split}
            & \lim_{\mathcal{I} \ni k \rightarrow \infty} r_{h}(\mathbf{h}^{(i)}_{k}) = r_{h}(\bar{\mathbf{h}}^{(i)}), \ \ i=1,\dots, l \ , \\
            & \lim_{\mathcal{I} \ni k \rightarrow \infty} r_{v}(\mathbf{V}^{(j)}_{k}) = r_{v}(\bar{\mathbf{V}}^{(j)}), \ \ j=1,\dots, l+1 \ ,\\
            & \lim_{\mathcal{I} \ni k \rightarrow \infty} r_{\theta}(\thetab) = r_{\theta}(\bar{\thetab}) ,
           \end{split}
        \end{equation}
        where $\mathcal{I}$ stands for the subsequence index set.
       \item[(c)] Any limit point of the sequence is a critical point of the function $Q\big( \mathbf{h}^{(1:l)},\mathbf{V}^{(1:l+1)},\thetab\big)$ in \eqref{eq_32}.
       \item[(d)] If a subsequence $\Big\lbrace  \lbrace \mathbf{h}^{(i)}_{k} \rbrace_{i=1}^{l} , \lbrace \mathbf{V}^{(j)}_{k}\rbrace _{j=1}^{l+1},\thetab_{k} \Big\rbrace_{k \in \mathcal{I}}$ converges to $\big \lbrace \lbrace \bar{\mathbf{h}}^{(i)} \rbrace_{i=1}^{l} , \lbrace \bar{\mathbf{V}}^{(j)}\rbrace _{j=1}^{l+1},\bar{\thetab} \big \rbrace$, then we have
       \begin{equation}\label{eq90}
           \lim_{\mathcal{I} \ni k \rightarrow \infty} Q\big( \mathbf{h}^{(1:l)}_{k},\mathbf{V}^{(1:l+1)}_{k}, \thetab_{k} \big) = Q\big( \bar{\mathbf{h}}^{(1:l)},\bar{\mathbf{V}}^{(1:l+1)}, \bar{\thetab} \big) .
       \end{equation}
    \end{enumerate}
\end{theorem}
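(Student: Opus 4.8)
The plan is to follow the standard subsequence-convergence template for block proximal methods (as in \cite{whole_sequence}), extracting from the earlier propositions the two facts that drive everything: the sequence is bounded (boundedness result, Proposition~3) and all successive differences vanish. Indeed, because the step-sizes are built from the Lipschitz constants of Proposition~\ref{proposition_1}, which are finite on the compact set $\mathcal{C}$, each weight $\nu_{h^{(i)},k}$, $\nu_{v^{(j)},k}$, $\nu_{\theta,k}$ is bounded away from zero, so the summability result (Proposition~2) forces $\|\mathbf{h}^{(i)}_{q,k} - \mathbf{h}^{(i)}_{q,k-1}\| \to 0$, $\|\mathbf{V}^{(j)}_{k} - \mathbf{V}^{(j)}_{k-1}\|_F \to 0$, and $\|\thetab_k - \thetab_{k-1}\| \to 0$. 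Part~(a) is then immediate: a bounded sequence in a finite-dimensional space has a convergent subsequence by Bolzano--Weierstrass, yielding a limit point $\big\lbrace \lbrace \bar{\mathbf{h}}^{(i)} \rbrace_{i=1}^{l}, \lbrace \bar{\mathbf{V}}^{(j)} \rbrace_{j=1}^{l+1}, \bar{\thetab} \big\rbrace$ along some index set $\mathcal{I}$. I would also record that, since $\omega_{\theta,k}(\thetab_{k-1}-\thetab_{k-2}) \to \mathbf{0}$, the extrapolated point $\widehat{\thetab}_{k-1}$ shares the limit $\bar{\thetab}$ along $\mathcal{I}$.

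For part~(b), I would exploit the minimality built into each proximal step. For instance, $\mathbf{h}^{(i)}_{q,k}$ minimizes $r_h(\cdot) + \tfrac{1}{2\mu_{h^{(i)},k}}\|\cdot - \mathbf{z}^{(i)}_{q,k-1}\|^2$, so comparing its value with that of $\bar{\mathbf{h}}^{(i)}$ gives
\begin{equation*}
r_h(\mathbf{h}^{(i)}_{q,k}) + \tfrac{1}{2\mu_{h^{(i)},k}}\|\mathbf{h}^{(i)}_{q,k} - \mathbf{z}^{(i)}_{q,k-1}\|^2 \le r_h(\bar{\mathbf{h}}^{(i)}) + \tfrac{1}{2\mu_{h^{(i)},k}}\|\bar{\mathbf{h}}^{(i)} - \mathbf{z}^{(i)}_{q,k-1}\|^2 .
\end{equation*}
Passing to the limit along $\mathcal{I}$ and using that the quadratic terms vanish (their arguments converge to the same limit by Proposition~2) yields $\limsup_{\mathcal{I} \ni k \to \infty} r_h(\mathbf{h}^{(i)}_{q,k}) \le r_h(\bar{\mathbf{h}}^{(i)})$. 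Since $r_h$ is lower semicontinuous, the reverse inequality $\liminf r_h \ge r_h(\bar{\mathbf{h}}^{(i)})$ holds automatically, and the two combine to the claimed limit. The identical argument applies verbatim to $r_v$ and to $r_{\theta}$.

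For part~(c), I would write the first-order optimality condition of each proximal update. The $\thetab$-update gives $\mathbf{0} \in \partial r_{\theta}(\thetab_k) + \nabla_\theta g(\cdots,\widehat{\thetab}_{k-1}) + \tfrac{1}{\mu_{\theta,k}}(\thetab_k - \widehat{\thetab}_{k-1})$, with analogous inclusions for the $\mathbf{h}$- and $\mathbf{V}$-blocks. I would then pass to the limit along $\mathcal{I}$: the gradient terms converge to $\nabla g$ evaluated at the limit point by the continuity guaranteed in Proposition~\ref{proposition_1} (all mixed-index block arguments collapse to the same limit via the vanishing differences), the $\tfrac{1}{\mu}(\cdot)$ proximity terms vanish, and the remaining subdifferential terms are controlled by the closedness of the graph of the limiting subdifferential. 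This last step is exactly where part~(b) is indispensable: closedness of $\partial$ requires not only $\mathbf{h}^{(i)}_{q,k} \to \bar{\mathbf{h}}^{(i)}$ but also $r_h(\mathbf{h}^{(i)}_{q,k}) \to r_h(\bar{\mathbf{h}}^{(i)})$, which (b) supplies. Assembling the per-block inclusions produces $\mathbf{0} \in \partial Q$ at the limit point.

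Part~(d) then follows by decomposing $Q = g + \sum r_h + \sum r_v + r_{\theta}$: the smooth term $g$ is continuous, so it converges to its value at the limit point, and each regularizer converges by part~(b). I expect the main obstacle to lie in the careful bookkeeping underlying parts~(b) and~(c), caused by the Gauss--Seidel ordering together with extrapolation: the argument fed into each proximal operator mixes iterates at indices $k$ and $k-1$ across different blocks, so matching every such argument to the single limit point requires invoking the summability of Proposition~2 block by block to collapse all mixed-index quantities to the same limit before the subdifferential-closedness argument can be applied.
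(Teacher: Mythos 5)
Your proposal is correct, and its skeleton matches the paper's Appendix D almost block for block: Bolzano--Weierstrass for (a), proximal-step minimality plus lower semicontinuity for (b), first-order optimality for (c), and continuity of $g$ combined with (b) for (d). The one genuinely different mechanism is in part (c). The paper never passes subgradients to the limit: it keeps the comparison inequality \eqref{eq93} valid for \emph{every} test point $\thetab$, lets $\mathcal{I}\ni k\to\infty$ to obtain the limiting variational inequality \eqref{eq94}, concludes that $\bar{\thetab}$ is itself the minimizer of the limiting proximal model \eqref{eq98}, and only then applies Fermat's rule to that model to get \eqref{eq99}. You instead write the per-iteration optimality inclusion $\mathbf{0}\in\partial r_{\theta}(\thetab_{k})+\nabla_{\theta}g(\cdot,\widehat{\thetab}_{k-1})+\tfrac{1}{\mu_{\theta,k}}(\thetab_{k}-\widehat{\thetab}_{k-1})$ and pass it to the limit via outer semicontinuity (graph closedness) of the limiting subdifferential, which is exactly why you need the function-value convergence of (b). Both routes are sound: yours is the standard template of \cite{whole_sequence} and makes the role of (b) fully explicit, while the paper's variational-inequality route avoids subdifferential calculus along the sequence at the price of carrying an inequality valid for all comparison points. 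Your bookkeeping remarks (collapsing mixed-index block arguments via Proposition 2, boundedness of $1/\mu_{\theta,k}$, and the limit of $\widehat{\thetab}_{k-1}$) are precisely the points the paper also relies on.

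One wording in your part (b) needs repair, though the argument survives. The quadratic terms in your comparison inequality do \emph{not} vanish individually: $\mathbf{h}^{(i)}_{q,k}$ is the soft-thresholding of $\mathbf{z}^{(i)}_{q,k-1}$ at the fixed threshold $\alpha_{i}\mu_{h^{(i)},k}$ (the penalty parameters are held bounded away from zero in $\mathcal{P}_{\mu}$), so $\lVert\mathbf{h}^{(i)}_{q,k}-\mathbf{z}^{(i)}_{q,k-1}\rVert$ in general tends to the nonzero value $\lVert\bar{\mathbf{h}}^{(i)}-\bar{\mathbf{z}}^{(i)}_{q}\rVert$. What is true --- and what your parenthetical in fact suggests --- is that the two quadratic terms tend to the \emph{same} value, so their difference vanishes: by Lemma~\ref{Lemma4},
\begin{equation*}
\begin{split}
\lVert\bar{\mathbf{h}}^{(i)}-\mathbf{z}^{(i)}_{q,k-1}\rVert^{2}-\lVert\mathbf{h}^{(i)}_{q,k}-\mathbf{z}^{(i)}_{q,k-1}\rVert^{2}
&=2\big(\bar{\mathbf{h}}^{(i)}-\mathbf{h}^{(i)}_{q,k}\big)^{T}\big(\mathbf{h}^{(i)}_{q,k}-\mathbf{z}^{(i)}_{q,k-1}\big)\\
&\quad+\lVert\bar{\mathbf{h}}^{(i)}-\mathbf{h}^{(i)}_{q,k}\rVert^{2}\;\longrightarrow\;0
\end{split}
\end{equation*}
along $\mathcal{I}$, since $\mathbf{h}^{(i)}_{q,k}\to\bar{\mathbf{h}}^{(i)}$ and the residual factor stays bounded. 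This suffices for your limsup bound, so the slip is cosmetic rather than structural. (In the paper's own inequality \eqref{eq93} the analogous terms $\mathrm{I}$ and $\mathrm{II}$ do vanish, but only because there the comparison point is $\widehat{\thetab}_{k-1}$, whose distance to $\thetab_{k}$ is controlled by successive differences; for the $\mathbf{h}$ and $\mathbf{V}$ blocks, where the prox is anchored at $\mathbf{z}^{(i)}_{q,k-1}$ and $\mathbf{W}^{(j)}_{k-1}$, the difference argument above is the one that works.)
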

\begin{proof}  
	See Appendix D.
\end{proof}

Based on Theorem \ref{theorem1}, there exists a subsequence in the sequence generated by Algorithm 1 that converges to the critical point of $Q$. In order to prove the convergence of the whole sequence, we further assume that function $Q$ satisfies the KL property.

\begin{prepos}
    The objective function $Q\big( \mathbf{h}^{(1:l)},\mathbf{V}^{(1:l+1)},\thetab \big)$, defined in \eqref{eq_32}, is a KL function.
\end{prepos}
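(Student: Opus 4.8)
The plan is to establish that $Q$ belongs to a class of functions for which the KL property is known to hold at every point of its domain, namely the functions definable in an o-minimal structure. The key tool is the result of Bolte, Daniilidis, Lewis, and Shiota (together with its nonsmooth formulation used by Attouch, Bolte, Redont, and Soubeyran) stating that every proper lower semicontinuous function definable in an o-minimal structure is a KL function. Since definability is preserved under finite sums, it suffices to show that each summand of $Q$ in \eqref{eq_32}—the smooth part $g$, the $\ell_1$ regularizers $r_h$, the nuclear-norm regularizers $r_v$, and the box indicator $r_\theta$—is definable in one common o-minimal structure.

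First I would fix the structure $\mathbb{R}_{\mathrm{an},\exp}$ (globally subanalytic sets augmented with the exponential), which is o-minimal and contains polynomials, the exponential, and hence the smooth activation functions used in practice (e.g.\ sigmoid, $\tanh$, softplus). Under the standing assumption of Proposition~\ref{proposition_1} that each $g^{(i)}$ is twice continuously differentiable—strengthened here to the mild requirement that each $g^{(i)}$ be definable in this structure—the forward maps $\zb_q^{(i)}$ and $\widehat{\xb}_q$ are compositions of affine maps and definable activations, and therefore definable. Consequently $g$, being a finite sum of squared Euclidean and Frobenius norms of definable expressions together with $\mathcal{L}(\thetab)$, is itself definable.

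Next I would verify definability of the three nonsmooth terms, all of which are in fact semialgebraic and hence definable in every o-minimal structure. The map $\mathbf{h}\mapsto\|\mathbf{h}\|_1$ is semialgebraic because the absolute value has a semialgebraic graph; the nuclear norm $\mathbf{V}\mapsto\|\mathbf{V}\|_*$ is semialgebraic via its variational form $\|\mathbf{V}\|_* = \max\{\mathrm{tr}(\mathbf{Y}^T\mathbf{V}) : \mathbf{I}-\mathbf{Y}^T\mathbf{Y}\succeq 0\}$, whose feasible set is cut out by polynomial inequalities and whose projection onto $\mathbf{V}$ stays semialgebraic by Tarski--Seidenberg; and $r_\theta$ is the indicator of the box $\{\thetab:\|\thetab\|_\infty\le M\}$, described by the finitely many inequalities $-M\le\theta_i\le M$, so its indicator is semialgebraic. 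Summing these definable pieces shows that $Q$ is a proper lower semicontinuous definable function, and invoking the cited theorem yields that $Q$ is a KL function.

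The main obstacle is the treatment of the activation functions: $C^2$-smoothness alone does not guarantee the KL property, since one can construct smooth functions that fail it, so the argument genuinely requires that the $g^{(i)}$ be definable in a common o-minimal structure. I would therefore make this definability explicit as the operative hypothesis and note that it is satisfied by every smooth activation used in the experiments, all of which lie in $\mathbb{R}_{\mathrm{an},\exp}$. A secondary technical point to handle with care is the semialgebraicity of the nuclear norm, where one must argue that eliminating the auxiliary variable $\mathbf{Y}$ preserves the semialgebraic description.
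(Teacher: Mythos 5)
Your proposal is correct, but it follows a genuinely different—and in fact more careful—route than the paper. The paper's own proof is purely citation-based: it notes that $g$ and $r_{\theta}$ were shown to have the KL property in \cite{amini2019}, that the KL property of the $\ell_{1}$ and nuclear norms was studied in \cite{KL_Norm1}, and then immediately concludes that $Q$ in \eqref{eq_32} is KL. That last step silently uses a closure property that does not hold in general: the sum of KL functions need not be a KL function. Your argument repairs exactly this gap by working at the level of definability rather than at the level of the KL inequality itself—each summand (the smooth part $g$ built from definable activations, the semialgebraic $\ell_{1}$ terms, the semialgebraic nuclear-norm terms via the variational characterization and Tarski--Seidenberg elimination, and the box indicator $r_{\theta}$) is definable in the common o-minimal structure $\mathbb{R}_{\mathrm{an},\exp}$, definability \emph{is} preserved under finite sums, and the Bolte--Daniilidis--Lewis--Shiota theorem then yields the KL property for the whole of $Q$. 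You also make explicit a hypothesis the paper never states: $C^{2}$ smoothness of the activations (the assumption of Proposition~\ref{proposition_1}) is not enough for the KL property, so one must additionally require that the $g^{(i)}$ be definable, which holds for the activations actually used (e.g.\ $\tanh$). In short, the paper's proof buys brevity by outsourcing to references but leaves the summation step unjustified; your proof buys rigor and a precise operative hypothesis at the cost of invoking o-minimality machinery.
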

\begin{proof}
    As stated in \eqref{eq_32}, $Q\big( \mathbf{h}^{(1:l)}_{k},\mathbf{V}^{(1:l+1)}_{k},\thetab_{k} \big)$ contains different terms. In \cite{amini2019}, it has been shown that         $g\big( \mathbf{h}^{(1:l)}_{k},\mathbf{V}^{(1:l+1)}_{k},\thetab_{k} \big)$ and $r_{\theta}(\thetab)$ have the KL property. The KL property for the $\ell_{1}$ norm and the nuclear norm was studied in \cite{KL_Norm1}. Therefore, $Q$ possesses the KL property.
\end{proof}

To show the whole sequence convergence, we borrow Theorem 2.7 from \cite{whole_sequence}, which will be presented in the next result.
\begin{theorem}[Global convergence \cite{whole_sequence}:]\label{theorem3}
    Consider problem \eqref{eq_32} where $Q\big( \mathbf{h}^{(1:l)},\mathbf{V}^{(l+1)},\thetab \big)$ is proper and bounded  in dom(F), $g\big( \mathbf{h}^{(1:l)},\mathbf{V}^{(1:l+1)},\thetab \big)$ is continuously differentiable, $\lbrace r_{h}(\mathbf{h}^{(i)}) \rbrace_{i=1}^{l}$, $\lbrace r_{v}(\mathbf{V}^{(j)}) \rbrace_{j=1}^{l+1}$ and $r_{\theta}(\thetab)$ are proper and l.s.c., $g\big( \mathbf{h}^{(1:l)},\mathbf{V}^{(1:l+1)},\thetab \big)$ has Lipschitz gradient with respect to all variables and the problem has a critical point $\mathbf{x}^{*}$, where $\mathbf{0} \in \partial F(\mathbf{x}^{*})$. Let $\big \lbrace \mathbf{h}^{(1:l)}(k), \mathbf{V}^{1:l+1}(k), \thetab(k) \big \rbrace _{k \ge 1}$ be generated from Algorithm 1. Assume:
    \begin{enumerate}
        \item[(a)] $\big \lbrace \mathbf{h}^{(1:l)}_{k}, \mathbf{V}^{1:l+1}_{k}, \thetab_{k} \big \rbrace _{k \ge 1}$ has a finite limit point $\lbrace \bar{\mathbf{h}}^{(1:l)}, \bar{\mathbf{V}}^{(1:l+1)}, \bar{\thetab} \rbrace$.
        \item[(b)] If $Q\big( \mathbf{h}^{(1:l)},\mathbf{V}^{(1:l+1)},\thetab \big)$ is a KL function, then
    \end{enumerate}
\begin{equation}
    \lim_{k \rightarrow \infty} \: \: \big \lbrace \mathbf{h}^{(1:l)}_{k}, \mathbf{V}^{(1:l+1)}_{k}, \thetab_{k} \big \rbrace = \big \lbrace \bar{\mathbf{h}}^{(1:l)}, \bar{\mathbf{V}}^{(1:l+1)}, \bar{\thetab} \big \rbrace.
\end{equation}
\end{theorem}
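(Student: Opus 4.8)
The plan is to recognize that Theorem~\ref{theorem3} is a direct application of Theorem~2.7 in \cite{whole_sequence}, so the entire task reduces to checking that the objective $Q$ in \eqref{eq_32} and the iterates produced by Algorithm~1 satisfy every hypothesis of that abstract result. I would therefore match our setting to the template of \cite{whole_sequence} and invoke the earlier propositions to discharge each requirement in turn, rather than re-deriving the Attouch--Bolte style machinery from scratch.

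First I would verify the structural hypotheses on the splitting $Q = g + \sum r_{h} + \sum r_{v} + r_{\theta}$. Properness and boundedness below of $Q$ follow because $g$ is a sum of squared Frobenius and Euclidean terms together with the nonnegative MSE loss, while the regularizers are nonnegative; continuous differentiability of $g$ is exactly the smoothness invoked in Proposition~\ref{proposition_1}; and the regularizers are proper and lower semicontinuous since $\alpha_{i}\lVert\cdot\rVert_{1}$ and $\beta_{j}\lVert\cdot\rVert_{*}$ are finite convex functions and $r_{\theta}$ is the indicator of the box $\{\thetab:\lVert\thetab\rVert_{\infty}\le M\}$ from \eqref{eq29}. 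The Lipschitz-gradient requirement on $g$ with respect to each block is precisely Proposition~\ref{proposition_1}, and the existence of a critical point is guaranteed by Theorem~\ref{theorem1}(c), which shows that every limit point satisfies $\mathbf{0}\in\partial Q$.

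Next I would discharge the two listed assumptions. Assumption~(a), the existence of a finite limit point, follows from Proposition~3: once $Q$ has bounded lower level sets the generated sequence is bounded, so by Bolzano--Weierstrass a convergent subsequence exists, which is exactly Theorem~\ref{theorem1}(a). Assumption~(b) is Proposition~4, which establishes that $Q$ is a KL function by combining the KL property of $g$ and $r_{\theta}$ from \cite{amini2019} with that of the $\ell_{1}$ and nuclear norms from \cite{KL_Norm1}. With the sufficient-decrease information packaged in the summable estimate of Proposition~2, the associated relative-error (subgradient) bound coming from the optimality conditions of the proximal updates in \eqref{eq38}, \eqref{eq42}, \eqref{eq_43_edited}, and the function-value continuity along the subsequence from Theorem~\ref{theorem1}(d), all ingredients required by Theorem~2.7 of \cite{whole_sequence} are in place, and whole-sequence convergence follows.

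The main obstacle, and the point requiring the most care, is that the Lipschitz-gradient property of $g$ with respect to $\thetab$ is \emph{not} global on $\mathbb{R}^{d}$; as emphasized after Proposition~\ref{proposition_1}, it holds only on the compact set $\mathcal{C}=\{\thetab:\lVert\thetab\rVert_{\infty}\le M\}$. The delicate step is therefore to ensure the iterates never leave a region on which a uniform Lipschitz constant is available, so that the per-block step-sizes $\mu_{\theta,k}\in(0,1/L_{\theta,k}]$ genuinely yield the descent and relative-error inequalities demanded by \cite{whole_sequence}. This is resolved by coupling the box constraint $r_{\theta}$, which confines $\thetab_{k}$ to $\mathcal{C}$ at every epoch, with the boundedness of $\{\mathbf{h}^{(i)}_{q,k}\}$ and $\{\mathbf{V}^{(j)}_{k}\}$ from Proposition~3; on the resulting compact product set the constants $L_{h^{(i)},k}$, $L_{v^{(j)},k}$, and $L_{\theta,k}$ are uniformly bounded, so the iterate-dependent constants can be replaced by global ones and the abstract theorem applies without modification.
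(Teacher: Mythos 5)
Your proposal is correct and takes essentially the same approach as the paper: the paper offers no proof of this theorem at all, importing it verbatim as Theorem~2.7 of \cite{whole_sequence}, and your argument likewise reduces the statement to that abstract result and invokes it. The additional hypothesis-checking you perform (properness and lower semicontinuity of the regularizers, block-Lipschitz gradients via Proposition~1, boundedness and limit points via Propositions~2--3 and Theorem~1, the KL property via Proposition~4, and the compactness fix for the merely local Lipschitz constant in $\thetab$) is not needed for this conditional statement itself, but it is exactly the verification the paper carries out when it applies this theorem to obtain its final whole-sequence convergence result.
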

Using Theorem \ref{theorem3}, we can easily show that the sequence generated by Algorithm 1 converges to a critical point of the problem $\mathcal{P}_{\mu}$.

\begin{theorem}
    The sequence $\lbrace \mathbf{h}_{k}^{(1:l)}, \mathbf{V}_{k}^{(1:l+1)}, \thetab_{k} \rbrace_{k \geq 1}$ generated by Algorithm 1 converges to a critical point of problem $\mathcal{P}_{\mu_{i}, \mu_{j}}$ defined in \eqref{eq_32}. 
\end{theorem}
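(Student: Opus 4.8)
The plan is to obtain the result as a direct consequence of the borrowed global convergence result in Theorem~\ref{theorem3}, whose conclusion is precisely whole-sequence convergence to a finite limit point. Accordingly, the entire proof reduces to verifying, one by one, that the generated sequence and the objective $Q$ satisfy the hypotheses of Theorem~\ref{theorem3}. First I would check the structural assumptions on $Q$: properness and boundedness from below follow because $\mathcal{L}(\thetab)$ is a sum of squared residuals (hence nonnegative), the quadratic penalty terms in $g$ are nonnegative, the $\ell_1$ and nuclear-norm regularizers $r_h, r_v$ are nonnegative, and $r_{\theta}$ is the indicator of the closed box $\mathcal{C}=\{\thetab : \|\thetab\|_{\infty}\le M\}$, which is feasible; thus $Q\ge 0$ and is proper. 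Continuous differentiability of $g$ and lower semicontinuity of $r_h, r_v, r_{\theta}$ are immediate, the latter because the $\ell_1$ and nuclear norms are continuous and an indicator of a closed set is l.s.c.

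Next, the Lipschitz-gradient requirement of Theorem~\ref{theorem3} is supplied directly by Proposition~\ref{proposition_1}, which gives Lipschitz continuity of the partial gradients of $g$ with respect to $\mathbf{h}^{(i)}_{q}$, $\mathbf{V}^{(j)}$, and $\thetab$; for the $\thetab$-block this holds on the compact set $\mathcal{C}$, and the constraint $r_{\theta}$ guarantees that every iterate $\thetab_k$ produced by \eqref{eq_43_edited} lies in $\mathcal{C}$, so the Lipschitz estimate is valid along the whole trajectory. The existence of a critical point, together with assumption (a) of Theorem~\ref{theorem3} (a finite limit point), is furnished by Theorem~\ref{theorem1}(a) and (c): once the sequence is bounded it admits a convergent subsequence whose limit is a critical point of $Q$. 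Assumption (b), that $Q$ is a KL function, is exactly the content of Proposition~4.

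The one hypothesis that is not automatic is the boundedness of the iterates, which Proposition~3 delivers only under the premise that $Q$ has bounded lower level sets; I expect this to be the main point requiring argument. I would establish it from the coercive block structure of $Q$: $\thetab$ is confined to the bounded box $\mathcal{C}$ by $r_{\theta}$; since each $\mathbf{W}^{(j)}$ is a subvector of $\thetab$ and hence bounded, the term $\frac{1}{2\mu_{v^{(j)}}}\|\mathbf{W}^{(j)}-\mathbf{V}^{(j)}\|_F^2 + \beta_j\|\mathbf{V}^{(j)}\|_{*}$ is coercive in $\mathbf{V}^{(j)}$, forcing the $\mathbf{V}$-blocks to remain bounded on any sublevel set; likewise, each hidden output $\mathbf{z}^{(i)}_{q}$ is a continuous function of the bounded $\thetab$ and the fixed data, so $\frac{1}{2\mu_{h^{(i)}}}\|\mathbf{z}^{(i)}_{q}-\mathbf{h}^{(i)}_{q}\|^2 + \alpha_i\|\mathbf{h}^{(i)}_{q}\|_1$ is coercive in $\mathbf{h}^{(i)}_{q}$ and bounds the $\mathbf{h}$-blocks. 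Hence $Q$ has bounded lower level sets, Proposition~3 applies, and the sequence is bounded.

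With all hypotheses verified, Theorem~\ref{theorem3} yields convergence of the whole sequence $\{\mathbf{h}^{(1:l)}_{k}, \mathbf{V}^{(1:l+1)}_{k}, \thetab_k\}$ to the finite limit point $\{\bar{\mathbf{h}}^{(1:l)}, \bar{\mathbf{V}}^{(1:l+1)}, \bar{\thetab}\}$ identified in Theorem~\ref{theorem1}(a). Finally, Theorem~\ref{theorem1}(c) identifies this limit point as a critical point of $Q$; since $\mathcal{P}_{\mu}$ is exactly the problem of minimizing $Q$, the whole sequence converges to a critical point of $\mathcal{P}_{\mu}$, which is the desired conclusion.
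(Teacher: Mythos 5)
Your proposal is correct and follows the same route the paper intends: it chains Propositions 1--4 and Theorem 1 into the borrowed global convergence result (Theorem \ref{theorem3}), exactly as the paper's discussion preceding the statement indicates (the paper's literal proof simply defers to \cite[Theorem~3]{amini2019}). Your explicit verification that $Q$ has bounded lower level sets --- via coercivity of the $\ell_1$ and nuclear norm terms together with the box constraint enforced by $r_{\theta}$ --- is a worthwhile addition, since the paper leaves that condition as an unverified hypothesis of Proposition 3.
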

\begin{proof}
    See \cite[Theorem~3]{amini2019}.
\end{proof}

\section{Experimental Results}
In this section, a series of experiments on both synthetic and real data are conducted to evaluate the performance of the proposed matrix completion algorithm. We compare the results of the proposed algorithm with five algorithms, IALM \cite{IALM}, NCARL \cite{NCARL}, AEMC \cite{AEMC_DLMC}, DLMC \cite{AEMC_DLMC}, and BiBNN \cite{bi_branch}. The first two algorithms are of LRMC type, while the others are based on DNNs and NLMC. In all simulations, we set $\alpha_{i} = \beta_{j} = 0.1$ in \eqref{eq23} and $\lambda = 10^{-3}$ in \eqref{eq12}.
The optimal value of the extrapolation weight, $\omega_{\theta, k}$, is different for various experiments and missing rates, so we set its optimal value for each experiment using search in $\big [\ 0, \ \omega_{\theta, k} \approxeq 0.5 \sqrt{\delta_{k} \frac{L_{\theta, k-1}}{L_{\theta, k}}} \big]$ interval with step-size $0.05$ , and the value with the highest convergence rate is chosen as the optimal value of $\omega_{\theta, k}$.
We choose a significant value for $\mu = \mu_{max}$ at the beginning epochs of training whether to satisfy $(3)$ (descent lemma) (please see the equation (3) in supplementary materials) and gradually reduce it using a cosine annealing schedule to $\mu = \mu_{\min}$ to increase the effect of the regularization terms. We have chosen $\mu_{\min} = 1$ in all experiments, but $\mu_{\max}$ is sensitive to the missing rate and datasets and is determined accordingly. If the hyperparameter values differ from those mentioned earlier, it will be stated in the experiment.

\begin{table*}[h!]
	\small
	\caption{PSNRs, SDs, and MSEs on synthetic matrix $\mathbf{X} \in \mathbb{R}^{300 \times 200}$  with $\rho = 10\%,30\%,50\%$, and $80\%$.}
	\begin{center}
		\begin{tabular}{| c | c | c | c | c |c| c| c |}
			\hline
			$\rho (\%)$ & Metric & IALM & NCARL & AEMC & DLMC & BiBNN & \scriptsize DNN-NSR \\
			\hline
			\hline
			\multirow{3}{*}{$10$} & PSNR & $26.1469$ & $30.3217$ & $30.5333$ &  $32.5475$ & $35.2654$ & $\mathbf{\textcolor{purple}{37.3565}}$ \\
			& SD & $0.6883$ & $0.6617$ & $0.6655$ & $0.6592$ & $0.7553$ & $\mathbf{\textcolor{purple}{0.5672}}$\\
			& MSE & $15.4390$ & $4.8132$ & $4.5421$ & $3.7843$ & $3.2200$ & $\mathbf{\textcolor{purple}{1.9130}}$\\
			\hline
			
			\multirow{3}{*}{$30$} & PSNR & $20.5052$ & $24.4008$ & $24.0187$ & $27.4959$ & $32.9584$ & $\mathbf{\textcolor{purple}{35.1072}}$ \\
			& SD & $0.8169$ & $0.6915$ & $0.6056$ & $0.7185$ & $0.7552$ & $\mathbf{\textcolor{purple}{0.6594}}$\\
			& MSE & $18.2639$ & $6.8739$ & $6.4113$ & $4.3082$ & $3.4340$ &  $\mathbf{\textcolor{purple}{2.7940}}$\\
			\hline
			
			\multirow{3}{*}{$50$} & PSNR & $17.8976$ & $21.5373$ & $18.1350$ & $25.7063$ & $27.1325$ & $\mathbf{\textcolor{purple}{29.8678}}$\\
			& SD & $0.9126$ & $0.9723$ & $0.6433$ & $0.6414$ & $\mathbf{\textcolor{purple}{0.6370}}$ &     $0.6749$\\
			& MSE &  $22.7209$ & $14.3484$ & $12.2660$ & $10.0870$ & $8.3160$ & $\mathbf{\textcolor{purple}{4.1823}}$ \\
			\hline
			
			\multirow{3}{*}{$80$} & PSNR &   $12.3161$ & $17.7434$ & $17.6322$ & $20.6940$ & $21.2342$ & $\mathbf{\textcolor{purple}{25.0967}}$\\
			& SD & $0.5247$ & $0.4662$ & $0.6706$ & $0.4890$ & $0.5081$ & $\mathbf{\textcolor{purple}{0.3929}}$\\
			& MSE & $48.8798$ & $27.1525$ & $27.9435$ & $24.9295$ & $24.4341$ & $\mathbf{\textcolor{purple}{18.5580}}$\\
			\hline
		\end{tabular}
	\end{center}
	\label{table2_sim}
\end{table*}

\subsection{Synthetic Matrix Completion}
To evaluate the performance of the proposed
method, we generate a synthetic data with an $m \times n$ matrix
of rank $r$ as follows:
\begin{equation}\label{eq102}
    \mathbf{X} = g \Bigg( 1.2 \Big( 0.5 g^{2} (\mathbf{A}\mathbf{B}) - g(\mathbf{A} \mathbf{B}) -1 \Big)  \Bigg) + \mathbf{A} \mathbf{B},
\end{equation}
where $\mathbf{A} \in \mathbb{R}^{m \times r}$ and $\mathbf{B} \in \mathbb{R}^{r \times n}$. Both matrices have independent and identically distributed (i.i.d.) standard Gaussian entries. $g(x) = 1.71 \tanh(\frac{2}{3} x)$ is an element-wise activation function as in \cite{AEMC_DLMC}. We use \eqref{eq102} to make $\mathbf{X} \in \mathbb{R}^{300 \times 200}$ matrices, $r=10$. Then randomly remove $10\%$, $30\%$, $50\%$, and $80\%$ of the entries.

The peak signal-to-noise ratio (PSNR) \cite{HTN} is employed to evaluate the performance of the proposed method for synthetic matrix and image inpainting, defined as:
 \begin{equation}\label{eq103}
    \mathrm{PSNR} = 10 \log_{10} \frac{mn(\mathrm{max}({\mathbf{X}}))^{2}}{\lVert\widehat{\mathbf{X}} - \mathbf{X}\rVert_{F}^{2}},
\end{equation}
where $\mathbf{X}$ and $\widehat{\mathbf{X}}$ are the observed and the recovered matrices, respectively, and $\max({\mathbf{X}})$ is the maximum element in $\mathbf{X}$. The following MSE is used to evaluate the performance of matrix completion \cite{AEMC_DLMC}:
\begin{equation}\label{eq104}
    \mathrm{MSE} = \frac{\sum_{i,j \in \bar{\Omega}} (\widehat{x}_{ij} - x_{ij})^{2}}{\sum_{i,j \in \bar{\Omega}} x_{ij}^{2}}.
\end{equation}
Larger PSNR and smaller MSE reflect better recovery results. 
In this experiment, the PSNRs and MSEs presented in Table  \ref{table2_sim} represents the mean values obtained from ten repeated trials each for LRMC and for training DNNs from scratch.
We use a partially observed matrix to train the neural network. The stability of the proposed method is
measured by the standard deviation (SD) of PSNRs. 

In the synthesis matrices, $\mu_{\max} = 10^{5}$ for the missing rate, $\rho = 80\%$, and $10^{6}$ for the rest of the missing rates. 

%Selected values for $\omega$ for the first and second synthesis matrices and with $\rho = 10\%$, $30\%$, and $80\%$ are $0.65$, $0.65$, and $0.9$, respectively, while for the first matrix and with the missing rate $50\%$,  it is $0.75$ and for the second matrix with the same missing rate, it is $0.8$.

MSE is an excellent criterion to monitor over-fitting because it only performs calculations based on unobserved entries. 
The results reported in Table \ref{table2_sim} shows that the DNN-NSR algorithm performs better than the other five. The PSNR criterion improvement for  $\mathbf{X}$, for $\rho = 10\%$ compared to the best performing BiBNN algorithm, is approximately $2.1$ dB, while for $\rho = 80\%$, it is roughly $3.86$ dB. 

MSE comparisons are also listed in Table \ref{table2_sim}. The highest improvement for the $\mathbf{X} \in \mathbb{R}^{300 \times 200}$  is $5.87$ for $\rho = 80\%$. The lowest improvement is for $\rho = 30\%$, equal to $0.64$. 

Controlling the over-fitting is the main reason for improving the results of the presented algorithm compared to BiBNN, DLMC, and AEMC algorithms. The nonlinearity of the matrix completion model is added for the sake of  IALM and NCARL. The low SD value of the DNN-NSR algorithm for most of the missing rates indicates better stability of this algorithm.

\subsection{Single Image Inpainting}
\begin{figure}[h!]
	\centering
		\includegraphics[height=2.5cm,width=2.5cm]{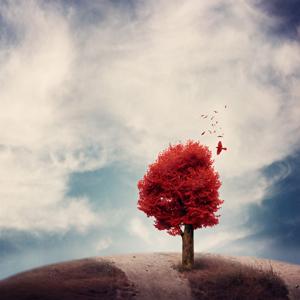}
	\caption{RGB image for the experiment (\ Image $\mathrm{I}$)\ .}
	\label{images}
\end{figure}

One of the applications of matrix completion problems is to recover the missing pixels of an image \cite{TNN}, \cite{HTN}. The pixel values of an RGB image are between 0 and 255, and each pixel is made up of three values red, green, and blue. Each channel of the image is a matrix with the dimensions of the image, and by concatenating the matrix of all three channels, a new matrix is obtained. We consider Image $\mathrm{I}$  in Fig.\ref{images} and randomly mask their pixels with rates of $\rho = 30\%$, $40\%$, and $50\%$ and $\mu_{\max} = 10^{6}$. %The optimal value of $\omega$ for both images and different $\rho$s is $0.9$.

Structure similarity (SSIM) \cite{HTN} is a measure used to evaluate the quality of the reconstructed image
\begin{equation}
    \mathrm{SSIM} = \frac{(2 \mu_{\widehat{\mathbf{X}}} \mu_{\mathbf{X}} + C_{1})(2 \sigma_{\widehat{\mathbf{X}} \mathbf{X}} + C_{2})}{(\mu_{\widehat{\mathbf{X}}}^{2} + \mu_{\mathbf{X}}^{2} + C_{1}) (\sigma_{\widehat{\mathbf{X}}}^{2} + \sigma_{\mathbf{X}}^{2} + C_{2})},
\end{equation}
where $\mu_{\widehat{\mathbf{X}}}$ and $\mu_{\mathbf{X}}$ are mean values, $\sigma_{\mathbf{X}}^{2}$ and $\sigma_{\widehat{\mathbf{X}}}^{2}$ are variances, and $\sigma_{\widehat{\mathbf{X}} \mathbf{X}}$ is the covariance. $C_{1} = 0.01$ and $C_{2} = 0.03$ are two constant values that prevent the denominator from becoming zero. A higher value of SSIM represents a better recovery. The SSIM criterion is a number between $0$ and $1$.

We have used PSNR and SSIM criteria to compare DNN-NSR with five other algorithms. Reported values for both metrics are the average of 10 iterations of the algorithm run. The results of Table \ref{table:im1} for Image $\mathrm{I}$ in Fig. \ref{images} shows that the higher the missing rate of image pixels, the higher the improvement of the presented algorithm with the two mentioned criteria. For example, for a $\rho = 30\%$, we have an improvement of $0.41$dB, while this improvement for a $\rho = 50\%$ is $1.1$dB. The increase of the SSIM criterion for the missing rates of $30\%$, $40\%$, and $50\%$ is $0.0038$, $0.0059$, and $0.018$, respectively.

%The results of Table \ref{table:im2} for Image $\mathrm{II}$ in Fig. \ref{images} show that the DNN-NSR algorithm has a weaker performance than the BiBNN algorithm with both criteria. By increasing the missing rate of pixels, the DNN-NSR algorithm has the best performance among all algorithms.

Fig. \ref{masked_images} represents the result of inpainting for Image $\mathrm{I}$ in Fig. \ref{images} when $50\%$ of the pixels are masked, where we can visually inspect the performance.

\begin{figure}[h!]
	\centering
	\begin{subfigure}{3cm}
		\centering
		\includegraphics[height=2.5cm,width=2.5cm]{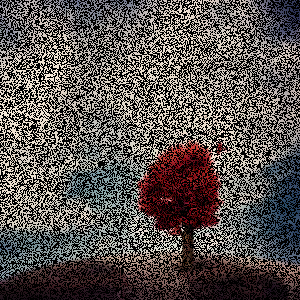}
		\caption{Image $\mathrm{I}$}
		\label{img3}
	\end{subfigure}
	\begin{subfigure}{3cm}
	\centering
	\includegraphics[height=2.5cm,width=2.5cm]{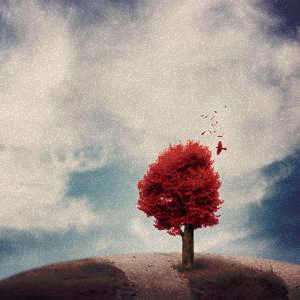}
	\caption{Image $\mathrm{II}$}
	\label{img5}
\end{subfigure}
	\caption{Masked and inpainted results for Image $\mathrm{I}$ in Fig. \ref{images} with $50\%$ random masked pixels, (a): masked image, (b) inpainted image.}
	\label{masked_images}
\end{figure}

\begin{table*}[h!]
	\caption{PSNRs and SSIMs on Image $\mathrm{I}$ in Fig. \ref{images} with $30\%$, $40\%$, and $50\%$ randomly masked pixels as examples.}\label{table:im1}
	\begin{center}
		\begin{tabular}{|c|c|c|c|c|c|c|c|}
			\hline
			$\rho (\%)$ & Metric & IALM & NCARL & AEMC & DLMC & BiBNN & DNN-NSR \\
			\hline
			\hline
			\multirow{2}{*}{$30$} & PSNR & $16.7313$ & $27.6908$ & $28.4982$ & $29.6249$ & $30.4575$ & $\mathbf{\textcolor{purple}{30.8723}}$\\
			& SSIM & $0.3434$ & $0.7838$ & $0.8269$ & $0.8629$ & $0.8736$ & $\mathbf{\textcolor{purple}{0.8774}}$\\
			\hline
			
			\multirow{2}{*}{$40$} & PSNR & $16.7342$ & $26.6660$ & $27.9642$ & $29.3363$ & $29.7862$ & $\mathbf{\textcolor{purple}{30.2301}}$\\
			& SSIM & $0.3310$ & $0.7606$ &  $0.7946$ & $0.8488$ & $0.8597$ & $\mathbf{\textcolor{purple}{0.8656}}$\\
			\hline
			
			\multirow{2}{*}{$50$} & PSNR & $16.7315$ & $25.9480$ & $27.4329$ & $28.7921$ & $28.9233$ & $\mathbf{\textcolor{purple}{30.0301}}$\\
			& SSIM & $0.3169$ & $0.7466$ & $0.7221$ & $0.8009$ & $0.8342$ & $\mathbf{\textcolor{purple}{0.8521}}$\\
			\hline
			
		\end{tabular}
	\end{center}
	\label{table3_sim}
\end{table*}

\subsection{Recommender System}
Another application of matrix completion is recommender systems \cite{recommender_system_not_random}, \cite{collaborative_filtering_factorization}. Two famous datasets used in recommender systems are MovieLens $100$k and MovieLens $1$M. In these datasets, users assign scores between $1$ and $5$ to the movies on the Netflix website. Table \ref{table5_sim} shows some information about the datasets.

\begin{table*}[h!]
    \caption{Information of MovieLens $100$k and $1$M.}
    \begin{center}
        \begin{tabular}{|c|c|c|c|c|}
            \hline
            Dataset & $\#$ Users & $\#$ Films & $\#$ Rating & $\#$ Rating range\\
            \hline
            \hline
            MovieLens $100$k & 943 & 1682 & $10^{5}$ & 1-5 \\
            \hline
            MovieLens $1$M & 6040 & 3952 & $10^{6}$ & 1-5\\
            \hline
        \end{tabular}
    \end{center}
\label{table5_sim}
\end{table*}    

A recommender system can be modeled as an incomplete matrix, where the rows and columns of the matrix represent users and films. Thus we have a matrix whose number of available entries is tiny because the number of movies that each user views is small. To train the FCNN, we use $70\%$ and $50\%$ of the available entries for training and the remaining $30\%$ and $50\%$ for testing the DNN-NSR algorithm
with $\mu_{\max} = 10^{5}$.

To measure the performance of the algorithms, we use the Normalized Mean Absolute Error (NMAE) criteria defined as \cite{AEMC_DLMC}:
\begin{equation}
    \mathrm{NMAE} = \frac{1}{(x_{\mathrm{max}} - x_{\mathrm{min}}) \rvert\overline{\Omega}\lvert} \sum_{i,j \in \overline{\Omega}} \lvert\widehat{x}_{ij} - x_{ij}\rvert,
\end{equation}
where $x_{\max}$ and $x_{\min}$ are the upper and lower bounds of ratings, respectively, $\lvert \overline{\Omega} \rvert$ is the number of missing entries, and $\widehat{x}_{ij}$ and $x_{ij}$ are the actual and recovered ratings of movie $j$ by user $i$, respectively. The declared NMAE is the average of $10$ iterations of the algorithm, and a small NMAE represents better performance.
\begin{table*}[h!]
    \caption{NMAE(\%) of recommender system on MovieLens $100$k and MovieLens $1$M datasets.}
    \begin{center}
        \begin{tabular}{|c|c|c|c|c|c|c|c|}
            \hline
            Dataset & $\rho (\%)$ & IALM & NCARL & AEMC & DLMC & BiBNN & DNN-NSR \\
            \hline
            \hline
            \multirow{2}{*}{MovieLens $100$k} & $30\%$ & $19.69$ &  $19.01$ & $18.87$ & $18.39$ & $17.23$ & $\mathbf{\textcolor{purple}{15.54}}$\\
             & $50 \%$ & $20.22$ & $19.61$ & $19.06$ & $18.75$ & $18.12$ & $\mathbf{\textcolor{purple}{17.07}}$  \\
            \hline
            \multirow{2}{*}{MovieLens $1$M} & $30\%$ & $18.51$ & $18.25$ & $18.18$ & $18.03$ & $17.02$ & $\mathbf{\textcolor{purple}{16.82}}$\\
             & $50 \%$ & $18.95$ & $18.78$ & $18.36$ & $18.31$ & $17.78$ & $\mathbf{\textcolor{purple}{17.16}}$\\
            \hline
            
        \end{tabular}
    \end{center}
    \label{table6_sim}
\end{table*}

Table \ref{table6_sim} shows that the results of the DNN-NSR algorithm are superior. The reason is that the gradual nonsmooth regularization terms have controlled over-fitting.

\subsection{Gradual Learning Effect}
The main property of the proposed algorithm is its gradual complexity addition. The following experiment shows how gradual complexity addition helps the training. In this experiment, we set $\mu_{\min} = 1$ and changed the exponent of $\mu_{max}$ from $10^{6}$ to $\mu_{min}$. We have set the PSNR value for both synthetic data with $\rho = 50\%$ and $80\%$ and Image $\mathrm{I}$ in Fig. \ref{images} with random masked pixels $\rho = 40\%$ and $50\%$. Each number reported in this experiment is the average of over ten runs of the neural network training from scratch.
\begin{table*}[h!]
    \caption{PSNRs for synthetic matrix $\mathbf{X} \in \mathbb{R}^{300 \times 200}$ with $\rho = 50\%$} and $80\%$ and Images $\mathrm{I}$ in Fig. \ref{images} with random masked pixels $40\%$ and $50\%$ for different values of $\mu_{\max}$ and $\mu_{\min} = 1$.
    \begin{center}
        \begin{tabular}{|c|c|c|c|c|c|c|c|c|c|}
            \hline
            index & \diagbox{$\rho (\%)$}{$\mu_{\mathrm{max}}$} & $10^{6}$ & $10^{5}$ & $10^{4}$ & $10^{3}$ &  $10^{2}$ & $10$ & $1$ \\
            \hline
            \hline
            \multirow{2}{*}{$300 \times 200$} & $50$ &  $\mathbf{\textcolor{purple}{29.8678}}$ & $27.5959$ & $25.3325$ &  $20.5376$ & $19.4572$ & $17.4001$ & $16.9405$\\
             & $80$ & $-$ &  $\mathbf{\textcolor{purple}{25.0967}}$ & $22.1119$ & $18.9528$ &  $16.1477$ &   $15.5734$ & $14.1351$\\
            \hline
            \multirow{2}{*}{Image $\mathrm{I}$} & $40$ &  $\mathbf{\textcolor{purple}{30.0741}}$ & $27.7275$ & $26.2802$ &  $24.2606$ & $21.5966$ & $20.2467$ & $20.0874$\\
             & $50$ &  $\mathbf{\textcolor{purple}{30.0301}}$ & $27.0758$ & $25.8667$ & $24.1605$ & $21.1786$ &  $20.1080$ & $20.6373$\\
            \hline
        \end{tabular}
    \end{center}
    \label{table7_sim}
\end{table*} 

The result of Table \ref{table7_sim} indicates the effect of gradual complexity addition. Choosing $\mu_{\max}$ larger than $\mu_{\min}$ leads to better results in all the cases. On the other hand, increasing $\mu_{\max}$ will not always work. In the two synthetic data matrices, if $\rho$ increases, we get better performance for smaller $\mu_{\max}$. It originates from the fact that if we have little training data and a considerable value is chosen for $\mu_{\max}$, then $\mu_{\max}$ is decreased slowly. Thus the effect of regularization is imposed slowly, which leads to over-fitting. Similar results can be seen for Image $\mathrm{I}$ in Fig. \ref{images}.

\subsection{Extrapolation Weight Effect}
Another important hyper-parameter affecting the proposed algorithm's convergence rate is the extrapolation weight. In this experiment, we inspect this hyper-parameter over the following datasets:
\begin{enumerate}
	\item $300 \times 200$ synthetic matrix with $\rho = 80\%$ and $\mu_{\max} = 10^{5}$.
	\item Image $\mathrm{I}$ in Fig. \ref{images} with $50\%$ random masked pixels and $\mu_{\max} = 10^{6}$.
\end{enumerate}

%\begin{figure}[h!]
%    \centering
%    \begin{subfigure}[b]{0.2\textwidth}
%        \centering
%        \includegraphics[width=\textwidth]{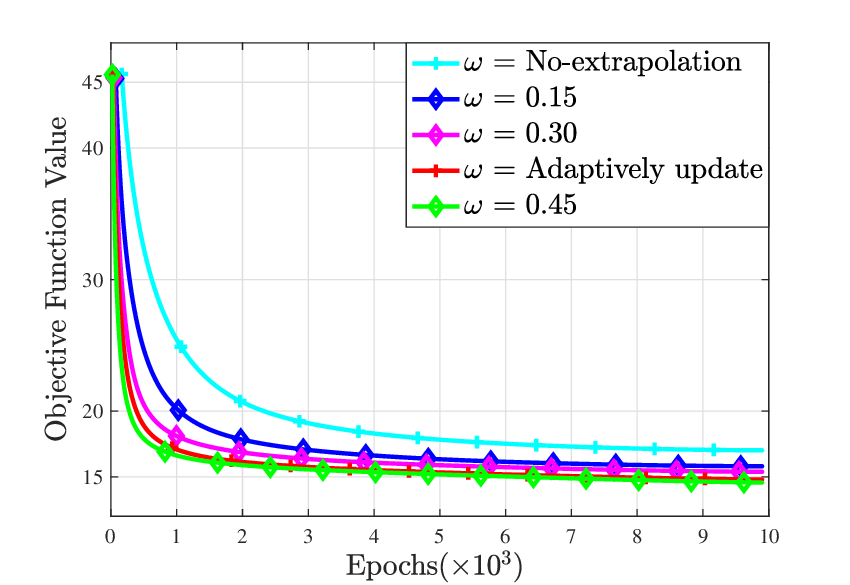}
%        \caption{Synthesis matrix $300 \times 200$}
%        \label{extrapolation1}
%    \end{subfigure}
%    \hspace*{0.25cm}
%    \begin{subfigure}[b]{0.2\textwidth}
%        \centering
%        \includegraphics[width=\textwidth]{synthesis_300_200_30.eps}
%        \caption{Synthesis matrix $300 \times 200$}
%        \label{extrapolation2}
%    \end{subfigure}
%    \caption{Variation of loss function for (a) synthetic matrix $300 \times 200$ with $\rho = 80\%$; (b) synthetic matrix $300 \times 200$ with $\rho = 30\%$, and (c) image \ref{img1} in \ref{images} with $\rho = 50\%$.}
%    \label{extrapolation weight}
%\end{figure}

Fig. \ref{extrapolation weight} shows the value of loss function versus training epoch for different datasets.
Values of $\omega_{\theta, k}$ other than zero generally lead to better performance. This is due to the fact that extrapolated variables carry information from the previous iteration, which can be utilized during the current update.  The optimal values for $\omega_{\theta, k}$ in datasets $1$, and $2$ are $0.45$ and $0.5 \sqrt{ \delta_{k} \frac{L_{\theta, k-1}}{L_{\theta, k}}}$ in order, where $\omega_{\theta, k} = \sqrt{\delta_{k} \frac{L_{\theta, k-1}}{L_{\theta, k}}}$ is called \textbf{adaptively update}.
 % if have a single appendix:
%\appendix[Proof of the Zonklar Equations]
% or
%\appendix % for no appendix heading
% do not use \section anymore after \appendix, only \section*
% is possibly needed

% use appendices with more than one appendix
% then use \section to start each appendix
% you must declare a \section before using any
% \subsection or using \label (\appendices by itself
% starts a section numbered zero.)
%

%\appendices
%\section{Proof of the First Zonklar Equation}
%Appendix one text goes here.

% you can choose not to have a title for an appendix
% if you want by leaving the argument blank
%\section{}
%Appendix two text goes here.

% use section* for acknowledgment

\begin{figure}[h!]
	\centering
	\begin{subfigure}[b]{0.2\textwidth}
		\centering
		\includegraphics[width=\textwidth]{synthesis300_200_80.eps}
		\caption{$300 \times 200$ synthesis matrix}
		\label{extrapolation1}
	\end{subfigure}
	\hspace*{0.25cm}
	\begin{subfigure}[b]{0.2\textwidth}
	\centering
	\includegraphics[width=\textwidth]{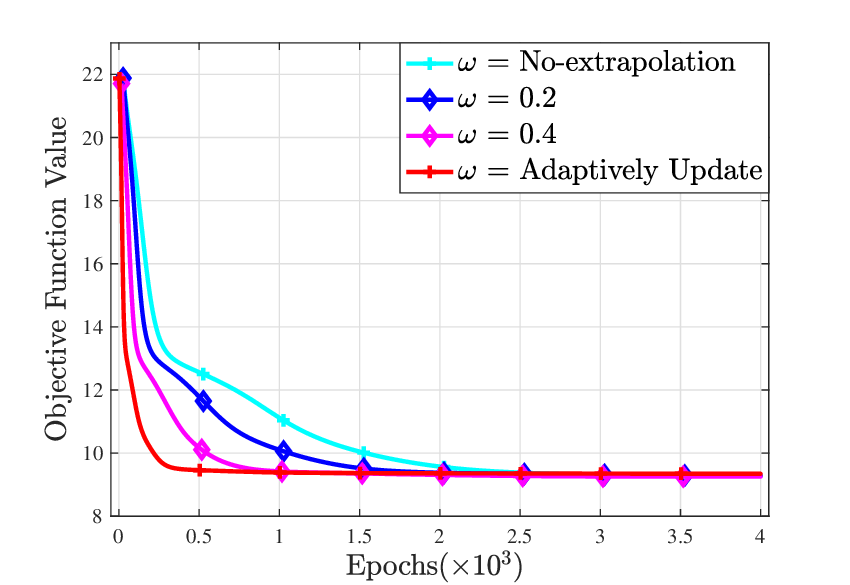}
	\caption{Image inpainting for Image $\mathrm{I}$}
	\label{extrapolation3}
\end{subfigure}
	\caption{Variation of loss function for (a) $300 \times 200$ synthetic matrix with $\rho = 80\%$ and (b) Image $\mathrm{I}$ in Fig. \ref{images} with $\rho = 50\%$.}
	\label{extrapolation weight}
\end{figure}

\section{Conclusion}
To solve the NLMC problem, we suggested the DNN-NSR algorithm. Over-fitting control of the neural network is the most critical issue from which the idea of using regularization originated. We have shown that the proximal operator algorithm can be used to train an FCNN with several nonsmooth regularization terms. We proved that the presented algorithm converges to critical points. We introduced the gradual addition of nonsmooth regularization terms and observed that it significantly affects performance. Simulation results on synthetic datasets, image inpainting, and recommender systems demonstrate the superiority of the proposed algorithm in comparison to previous methods.

\section{ACKNOWLEDGMENT}
The last author was partially supported by the Research Foundation Flanders (FWO) grant G$081222$N and by the BOF DocPRO$4$ projects with ID $46929$ and $48996$.

%The authors would like to thank...

% Can use something like this to put references on a page
% by themselves when using endfloat and the captionsoff option.
\ifCLASSOPTIONcaptionsoff
  \newpage
\fi

\newpage
\section*{APPENDIX A}\label{appendixA}
\vspace*{-0.2em}
\section*{Proof of Proposition 1}
By taking the partial derivatives of $g\big(\mathbf{h}^{(1:l)}_{1:n}, \mathbf{V}^{(1:l+1)}, \thetab\big)$ with respect to $\mathbf{h}^{(i)}_{q}$ and $\mathbf{V}^{(j)}$, we get
\vspace*{-0.5em}
\begin{equation}\label{eq50}
\begin{split}
&     \nabla_{h^{(i)}_{q}}  g\big(\mathbf{h}^{(<i)}_{q}, \mathbf{h}^{(i)}_{q}, \mathbf{h}^{(>i)}_{q},  \mathbf{V}^{(1:l+1)}, \thetab\big) = \frac{1}{\mu_{ h^{(i)}  , k}} (\mathbf{h}^{(i)}_{q} - \mathbf{z}^{(i)}_{q}) , \\
& \small \nabla_{v^{(j)}}  g\big(\mathbf{h}^{(1:l)}_{1:n},\mathbf{V}^{(<j)}, \mathbf{V}^{(j)}, \mathbf{V}^{(>j)}, \thetab\big) =  \frac{1}{\mu_{ v^{(j)}  , k}} (\mathbf{V}^{(j)} - \mathbf{W}^{(j)}) ,
\end{split}
\end{equation}
\vspace*{-0.1em}
\normalsize
which are clearly Lipshitz continuous with modulii $L_{h^{(i)}_{q},k} = \frac{1}{\mu_{ h^{(i)}  , k}}$ and $L_{v^{(j)},k} = \frac{1}{\mu_{ v^{(j)}  , k}}$, respectively.

\vspace*{1em}
Since all activation functions are twice continuously differentiable, $g\big(\mathbf{h}^{(1:l)}_{1:n}, \mathbf{V}^{(1:l+1)}, \thetab\big)$ is twice continuously differentiable with respect to $\thetab$. In addition, the set $\mathcal{C}$ is compact. Therefore, invoking Lemma $4$, there exists $L>0$ such that the partial derivative of $g\big(\mathbf{h}^{(1:l)}_{1:n}, \mathbf{V}^{(1:l+1)}, \thetab\big)$ with respect to $\thetab$ is $L$-Lipschitz continuous.

\section*{APPENDIX B}\label{appendixB}
\vspace*{-0.2em}
\section*{Proof of Proposition 2}
	By fixing $\mathbf{h}^{(1:l)}_{1:n,k}$ and $\mathbf{V}^{(1:l+1)}_{k}$ and applying Lemma 5, we can write
\begin{equation}\label{eq54-ap}
\begin{split}
& g \big ( \mathbf{h}^{(1:l)}_{1:n,k},\mathbf{V}^{(1:l+1)}_{k}, \thetab_{k}\big) \leq g \big ( \mathbf{h}^{(1:l)}_{1:n,k},\mathbf{V}^{(1:l+1)}_{k}, \thetab_{k-1}\big) + \\
& \small \nabla_{\theta}^{T}{g \big ( \mathbf{h}^{(1:l)}_{1:n,k},\mathbf{V}^{(1:l+1)}_{k}, \thetab_{k-1}\Big)} \big(\thetab_{k} - \thetab_{k-1}\big) + \cfrac{L_{\theta, k}}{2} \lVert \thetab_{k} - \thetab_{k-1} \rVert^{2}.
\end{split}
\end{equation}

Since $\thetab_{k}$ is the minimizer of \eqref{eq_39_edited}, we obtain
\begin{equation}\label{eq55-ap}
\begin{split}
& \nabla_{\theta}^{T}{g \big ( \mathbf{h}^{(1:l)}_{1:n,k},\mathbf{V}^{(1:l+1)}_{k}, \widehat{\thetab}_{k-1}\big)} \big(\thetab_{k} - \widehat{\thetab}_{k-1}\big) 
+ \cfrac{1}{2 \mu_{\theta,k}} 
\lVert \thetab_{k}- \\ 
& \widehat{\thetab}_{k-1} \rVert^{2} + r_{\theta}(\thetab_{k}) \leq \nabla_{\theta}^{T}{g \big ( \mathbf{h}^{(1:l)}_{1:n,k},\mathbf{V}^{(1:l+1)}_{k}, \widehat{\thetab}_{k-1}\big)} \big(\thetab_{k-1} \\
&  - \widehat{\thetab}_{k-1}\big) + \cfrac{1}{2 \mu_{\theta,k}} \lVert \thetab_{k-1} - \widehat{\thetab}_{k-1} \rVert^{2} + r_{\theta}(\thetab_{k-1}) \ .
\end{split}
\end{equation}

Summing the inequalities in \eqref{eq54-ap} and \eqref{eq55-ap}  and adding and subtracting $\sum_{q=1}^{n} \sum_{i=1}^{l}  r_{h}(\mathbf{h}_{q,k}^{(i)}) $ and $\sum_{j=1}^{l+1} r_{v}(\mathbf{V}_{k}^{(j)})$ to the equation, we get

\begin{equation}\label{proof_path_1_app}
\begin{split}
& \Big[g \big ( \mathbf{h}^{(1:l)}_{1:n,k},\mathbf{V}^{(1:l+1)}_{k}, \thetab_{k-1}\big) + r_{\theta}(\thetab_{k-1}) + \sum_{q=1}^{n} \sum_{i=1}^{l}  r_{h}(\mathbf{h}_{q,k}^{(i)}) + \\ 
& \sum_{j=1}^{l+1} r_{v}(\mathbf{V}_{k}^{(j)})\Big]  - \Big[g \big ( \mathbf{h}^{(1:l)}_{1:n, k},\mathbf{V}^{(1:l+1)}_{k}, \thetab_{k}\big) + r_{\theta}(\thetab_{k})  \\
&   + \sum_{q=1}^{n} \sum_{i=1}^{l}  r_{h}(\mathbf{h}_{q,k}^{(i)})  + \sum_{j=1}^{l+1} r_{v}(\mathbf{V}_{k}^{(j)})\Big] 
\geq \nonumber
\end{split}
\end{equation}
\begin{equation}
\begin{split}
& \underbrace{\nabla_{\theta}^{T}{g \big ( \mathbf{h}^{(1:l)}_{1:n,k},\mathbf{V}^{(1:l+1)}_{k}, \widehat{\thetab}_{k-1}\big)} \big(\thetab_{k} - \widehat{\thetab}_{k-1}\big)}_{\mathrm{I}} \\
& - \underbrace{\nabla_{\theta}^{T}{g \big ( \mathbf{h}^{(1:l)}_{1:n,k},\mathbf{V}^{(1:l+1)}_{k}, \widehat{\thetab}_{k-1}\big)} \big(\thetab_{k-1} - \widehat{\thetab}_{k-1}\big)}_{\mathrm{II}} + \\
&  \underbrace{\cfrac{1}{2 \mu_{\theta,k}} \lVert \thetab_{k} - \widehat{\thetab}_{k-1} \rVert^{2} - \cfrac{1}{2 \mu_{\theta,k}} \lVert \thetab_{k-1} - \widehat{\thetab}_{k-1} \rVert^{2}}_{\mathrm{III}} - \nabla_{\theta}^{T} g \big ( \mathbf{h}^{(1:l)}_{1:n,k},\\
& \mathbf{V}^{(1:l+1)}_{k}, \thetab_{k-1}\big)
\big(\thetab_{k} - \thetab_{k-1}\big) - \cfrac{L_{\theta, k}}{2} \lVert \thetab_{k} - \thetab_{k-1} \rVert^{2}.
\end{split}
\end{equation}

\noindent Using \eqref{eq_32}, simplifying parts $\mathrm{I}$ and $\mathrm{II}$, and invoking Lemma $6$, $\mathrm{III}$, it holds that
\begin{equation}\label{proof_path_2_app}
\begin{split}
&  Q\Big (\mathbf{h}^{(1:l)}_{1:n,k},\mathbf{V}^{(1:l+1)}_{k},\thetab_{k-1} \Big) -
Q\Big(\mathbf{h}^{(1:l)}_{1:n,k},\mathbf{V}^{(1:l+1)}_{k},\thetab_{k} \Big) \geq  \\
& 
\small \underbrace{\Big[\nabla_{\theta}{g \big ( \mathbf{h}^{(1:l)}_{1:n,k},\mathbf{V}^{(1:l+1)}_{k}, \widehat{\thetab}_{k-1}\big)} - \nabla_{\theta}{g \big ( \mathbf{h}^{(1:l)}_{1:n,k},\mathbf{V}^{(1:l+1)}_{k}, \thetab_{k-1}\big)}\Big]^{T}}_{\mathbf{u1}} \\ &\underbrace{\big(\thetab_{k} - \thetab_{k-1}\big)}_{\mathbf{v1}} 
+ \underbrace{\cfrac{1}{\mu_{\theta,k}} \big( \thetab_{k-1} - \widehat{\thetab}_{k-1} \big)^{T}}_{\mathbf{u2}} \underbrace{\big(\thetab_{k} - \thetab_{k-1}\big)}_{\mathbf{v2}} \\
& + \cfrac{1}{2 \mu_{\theta,k}} \lVert \thetab_{k} - \thetab_{k-1} \rVert^{2}  
- \cfrac{L_{\theta, k}}{2} \lVert \thetab_{k} - \thetab_{k-1} \rVert^{2}.
\end{split}
\end{equation}
Next, we apply the Cauchy-Schwartz inequality to the parts $\mathbf{u1}-\mathbf{v1}$ and $\mathbf{u2}-\mathbf{v2}$ in \eqref{proof_path_2_app}, i.e.,

\begin{equation}
\begin{split}
&  \textrm{LHS} \geq  -\\
& \small \Bigg[\underbrace{\Big \lVert \nabla_{\theta}{g \big ( \mathbf{h}^{(1:l)}_{1:n,k},\mathbf{V}^{(1:l+1)}_{k}, \widehat{\thetab}_{k-1}\big)} - \nabla_{\theta}{g \big ( \mathbf{h}^{(1:l)}_{1:n,k},\mathbf{V}^{(1:l+1)}_{k}, \thetab_{k-1}\big)}\Big \rVert_{2}}_{\mathrm{IV}}
\end{split} \nonumber
\end{equation}
\begin{equation}\label{proof_path_3}
\begin{split}
& \small + \cfrac{1}{\mu_{\theta,k}} \Big \lVert \thetab_{k-1} - \widehat{\thetab}_{k-1} \Big \rVert \Bigg]
\Big \lVert \thetab_{k} - \thetab_{k-1} \Big \rVert + \Big(\frac{1}{2 \mu_{\theta, k}} - \frac{L_{\theta,k}}{2}\Big) \Big \lVert \thetab_{k} - \\
& \normalsize  \thetab_{k-1} \Big \rVert^{2} \geq  - \Bigg[\underbrace{L_{\theta, k} \Big \lVert \thetab_{k-1} - \widehat{\thetab}_{k-1} \Big \rVert}_{V} + \underbrace{\cfrac{1}{\mu_{\theta,k}} \Big \lVert \thetab_{k-1} - \widehat{\thetab}_{k-1} \Big \rVert}_{VI} \Bigg]\\ 
&~~~~     \Big \lVert \thetab_{k} - \thetab_{k-1} \Big \rVert +  \Big(\frac{1}{2 \mu_{\theta, k}} - \frac{L_{\theta,k}}{2}\Big)  \Big \lVert \thetab_{k} - \thetab_{k-1} \Big \rVert^{2}.
\end{split}
\end{equation}
From \eqref{eq_40}, we obtain $\lVert \thetab_{k-1} - \widehat{\thetab}_{k-1} \rVert = \omega_{\theta, k} \lVert \thetab_{k-1} - \thetab_{k-2} \rVert$, which consequently leads to

\begin{equation}\label{proof_path_5}
\begin{split}
\textrm{LHS} \geq & - \Big(L_{\theta, k} + \frac{1}{\mu_{\theta, k}}\Big)  \omega_{\theta, k} \Big \lVert \thetab_{k-1} - \thetab_{k-2} \Big \rVert 
\Big \lVert \thetab_{k} - \thetab_{k-1} \Big \rVert\\
& + \Big(\frac{1}{2 \mu_{\theta, k}} - \frac{L_{\theta,k}}{2}\Big) \Big \lVert \thetab_{k} - \thetab_{k-1} \Big \rVert^{2}.
\end{split}
\end{equation}
Then, it follows from $\mu_{\theta, k} = \cfrac{1}{\gamma L_{\theta, k}}$ for $\gamma > 1$ that
\begin{equation}\label{first_part_of_edition}
\begin{split}
\textrm{LHS} \geq & - (\gamma + 1) L_{\theta, k}  \omega_{\theta, k} \Big \lVert \thetab_{k-1} - \thetab_{k-2} \Big \rVert
\Big \lVert \thetab_{k} - \thetab_{k-1} \Big \rVert\\
& + \frac{(\gamma - 1) L_{\theta, k}}{2} \Big \lVert \thetab_{k} - \thetab_{k-1} \Big \rVert^{2}.
\end{split}
\end{equation}
On the other hand, setting $a = (\gamma + 1) L_{\theta, k}\omega_{\theta, k} \lVert \thetab_{k-1} - \thetab_{k-2} \lVert$ and $b = \lVert \thetab_{k} - \thetab_{k-1} \lVert$ and $\epsilon = \frac{(\gamma - 1) L_{\theta, k}}{2}$ and applying Lemma $7$ imply
\begin{equation}\label{second_part_of_edition}
\begin{split}
& Q\Big (\mathbf{h}^{(1:l)}_{1:n,k},\mathbf{V}^{(1:l+1)}_{k},\thetab_{k-1} \Big) -
Q\Big(\mathbf{h}^{(1:l)}_{1:n,k},\mathbf{V}^{(1:l+1)}_{k},\thetab_{k} \Big) \geq - \\
& \small \cfrac{(\gamma + 1)^{2}L_{\theta, k} }{\gamma - 1}  \omega^{2}_{\theta, k} \Big \lVert \thetab_{k-1} - \thetab_{k-2} \Big \rVert^{2} 
+ \frac{(\gamma - 1) L_{\theta, k}}{4} \Big \lVert \thetab_{k} - \thetab_{k-1} \Big \rVert^{2}.
\end{split}
\end{equation}
Following the same procedure as in \eqref{eq54-ap} and \eqref{eq55-ap} for the subproblems \eqref{eq36} and \eqref{eq37}, we end up with
\begin{equation}\label{eq54-3}
\begin{split}
&  g \big ( \mathbf{h}^{(1:l)}_{1:q-1,k},  \mathbf{h}^{(<i)}_{q,k},\mathbf{h}^{(i)}_{q,k},\mathbf{h}^{(>i)}_{q,k-1},\mathbf{h}^{(1:l)}_{q+1:n,k-1}, \mathbf{V}^{(1:l+1)}_{k-1},\thetab_{k-1}\big) \\[6pt]
& + r_{h^{(i)}}(\mathbf{h}_{q,k}^{(i)}) + a_i \lVert \mathbf{h}_{q,k}^{(i)} - \mathbf{h}_{q,k-1}^{(i)} \rVert^{2} \leq \\[6pt]
& g \big ( \mathbf{h}^{(1:l)}_{1:q-1,k},  \mathbf{h}^{(<i)}_{q,k},\mathbf{h}^{(i)}_{q,k-1},\mathbf{h}^{(>i)}_{q,k-1},\mathbf{h}^{(1:l)}_{q+1:n,k-1}, \mathbf{V}^{(1:l+1)}_{k-1},\thetab_{k-1}\big) \\[6pt]
& + r_{h^{(i)}}(\mathbf{h}_{q,k-1}^{(i)})
\end{split}
\end{equation}
and
	\begin{equation}\label{eq54-4}
\begin{split}
& g\big( \mathbf{h}_{1:n, k}^{(1:l)}, \mathbf{V}^{(<j)}_{k}, \mathbf{V}_{k}^{(j)}, \mathbf{V}_{k-1}^{(>j)}, \thetab_{k-1}  \big) + r_{v^{(j)}}(\mathbf{V}^{(j)}_{k})\\[6pt]
& b_i \lVert \mathbf{V}^{(j)}_{k} - \mathbf{V}^{(j)}_{k-1} \rVert_{\text{F}}^{2} \leq \\[6pt]
&  g \big ( \mathbf{h}^{(1:l)}_{1:n,k}, \mathbf{V}^{(<j)}_{k}, \mathbf{V}^{(j)}_{k-1}, \mathbf{V}^{(>j)}_{k-1},\thetab_{k-1}\big) + r_{v^{(j)}}(\mathbf{V}^{(j)}_{k-1}).
\end{split}
\end{equation}
where the identities
\begin{equation}
\begin{split}
& a_{i} = \big( \cfrac{1}{2 \mu _{h^{(i)}, k}} - \cfrac{ L_{h^{(i)}, k} }{2} \big) = \cfrac{(\gamma -1) L_{h^{(i)},k}}{2}>0, \ \  \gamma > 1, \nonumber
\end{split}
\end{equation}
\begin{equation}
\begin{split}
& b_{j} = \big( \cfrac{1}{2 \mu _{v^{(j)}, k}} - \cfrac{ L_{v^{(j)}, k} }{2} \big) = \cfrac{(\gamma - 1) L_{v^{(j)},k}}{2}>0, \ \  \gamma > 1,
\end{split} 
\end{equation}
since $\mu_{ h^{(i)}  , k} = \cfrac{1}{\gamma L_{h^{(i)},k}}$ (for $i = 1, \dots, l$) and $\mu_{ v^{(j)}  , k} = \cfrac{1}{\gamma L_{v^{(j)},k}}$ (for $j = 1, \dots , l+1$).
Now, fixing $\theta$, writing \eqref{eq54-3} and \eqref{eq54-4} for $\mathbf{h}_{1}^{(1)}, \dots, \mathbf{h}_{1}^{(l)}, \dots, \mathbf{h}_{n}^{(1)}, \dots, \mathbf{h}_{n}^{(l)}, \mathbf{V}^{(1)}, \dots, \mathbf{V}^{(l+1)}$ respectively, and adding all the resulting inequalities, we consequently come to
\begin{equation}\label{eq56}
\begin{split}
& g(\mathbf{h}^{(1:l)}_{1:n,k}, \mathbf{V}^{(1:l+1)}_{k}, \thetab_{k-1}) + \sum_{q=1}^{n} \sum_{i=1}^{l}  r_{h^{(i)}}(\mathbf{h}^{(i)}_{q,k})  \\ & 
+ \sum_{j=1}^{l+1}  r_{v^{(j)}} (\mathbf{V}^{(j)}_{k}) + \sum_{q=1}^{n} \sum_{i=1}^{l} a_{i} \lVert \mathbf{h}^{(i)}_{q,k} - \mathbf{h}^{(i)}_{q,k-1} \rVert^{2}  \\ 
& + \sum_{j=1}^{l+1}  b_{j} \lVert \mathbf{V}^{(j)}_{k}- \mathbf{V}^{(j)}_{k-1} \rVert_{F}^{2} + \color{black} r_{\theta} (\thetab_{k-1}) \color{black}  \leq  \\ & g(\mathbf{h}^{(1:l)}_{1:n,k-1}, \mathbf{V}^{(1:l+1)}_{k-1}, \thetab_{k-1}) +  \sum_{q=1}^{n} \sum_{i=1}^{l}  r_{h^{(i)}}(\mathbf{h}^{(i)}_{q,k-1})  \\
& + \sum_{j=1}^{l+1}  r_{v^{(j)}}(\mathbf{V}^{(j)}_{k-1}) + \color{black} r_{\theta} (\thetab_{k-1}).
\end{split}
\end{equation}
Added $r_{\theta} (\thetab_{k-1})$ to the both sides of \eqref{eq56} and using \eqref{eq_32}, the equation \eqref{eq56} can be rewritten as
\begin{equation}\label{eq56-1}
\begin{split}
&Q(\mathbf{h}^{(1:l)}_{1:n,k-1}, \mathbf{V}^{(1:l+1)}_{k-1}, \thetab_{k-1}) - Q(\mathbf{h}^{(1:l)}_{1:n,k}, \mathbf{V}^{(1:l+1)}_{k}, \thetab_{k-1}) \geq \\
& \sum_{q=1}^{n} \sum_{i=1}^{l} a_{i} \lVert \mathbf{h}^{(i)}_{q,k} - \mathbf{h}^{(i)}_{q,k-1} \rVert^{2} + \sum_{j=1}^{l+1}  b_{j} \lVert \mathbf{V}^{(j)}_{k} - \mathbf{V}^{(j)}_{k-1} \rVert_{\mathrm{F}}^{2}.
\end{split}
\end{equation}
Summing up both sides of \eqref{second_part_of_edition} and \eqref{eq56-1} leads to
\begin{equation}\label{eq57}
\begin{split}
& Q(\mathbf{h}^{(1:l)}_{1:n,k-1}, \mathbf{V}^{(1:l+1)}_{k-1}, \thetab_{k-1}) - Q(\mathbf{h}^{(1:l)}_{1:n,k}, \mathbf{V}^{(1:l+1)}_{k}, \thetab_{k}) \geq -\\
& \small \cfrac{(\gamma + 1)^{2}L_{\theta, k} }{\gamma - 1}  \omega^{2}_{\theta, k} \Big \lVert \thetab_{k-1} - \thetab_{k-2} \Big \rVert^{2} 
+ \frac{(\gamma - 1) L_{\theta, k}}{4} \Big \lVert \thetab_{k} - \thetab_{k-1} \Big \rVert^{2}  \\
& \normalsize + \sum_{q=1}^{n} \sum_{i=1}^{l} a_{i} \lVert \mathbf{h}^{(i)}_{q,k} - \mathbf{h}^{(i)}_{q,k-1} \rVert^{2} + \sum_{j=1}^{l+1}  b_{j} \lVert \mathbf{V}^{(j)}_{k} - \mathbf{V}^{(j)}_{k-1} \rVert_{\mathrm{F}}^{2}.
\end{split}
\end{equation}
Using 
\small $\omega_{\theta, k} \leq \frac{\gamma - 1}{2(\gamma + 1)} \sqrt{\delta_{k} \frac{L_{\theta, k-1}}{L_{\theta, k}}}$ \normalsize
for $\delta_{k}< 1$, inequality \eqref{eq57} can be rewritten as
\begin{equation}\label{eq57_edited}
\begin{split}
& Q(\mathbf{h}^{(1:l)}_{1:n,k-1}, \mathbf{V}^{(1:l+1)}_{k-1}, \thetab_{k-1}) - Q(\mathbf{h}^{(1:l)}_{1:n,k}, \mathbf{V}^{(1:l+1)}_{k}, \thetab_{k}) \geq - \\
& \cfrac{ \delta_{k} (\gamma - 1) L_{\theta, k-1} }{4} \Big \lVert \thetab_{k-1} - \thetab_{k-2} \Big \rVert^{2} 
+ \frac{(\gamma - 1) L_{\theta, k}}{4} \Big \lVert \thetab_{k} - \thetab_{k-1} \Big \rVert^{2}  \\
& + \sum_{q=1}^{n} \sum_{i=1}^{l} a_{i} \lVert \mathbf{h}^{(i)}_{q,k} - \mathbf{h}^{(i)}_{q,k-1} \rVert^{2} + \sum_{j=1}^{l+1}  b_{j} \lVert \mathbf{V}^{(j)}_{k} - \mathbf{V}^{(j)}_{k-1} \rVert_{\mathrm{F}}^{2}.
\end{split}
\end{equation}
Summing up both sides of this inequality from $1$ to $K$, it holds that
\begin{equation}\label{firstpart}
\begin{split}
& Q(\mathbf{h}_{1:n,0}^{(1:l)}, \mathbf{V}_{0}^{(1:l+1)}, \mathbf{\theta}_{0}) - Q(\mathbf{h}_{1:n,K}^{(1:l)}, \mathbf{V}_{K}^{(1:l+1)}, \mathbf{\theta}_{K}) \geq\\[6pt]
& \sum_{k=1}^{K-1} \cfrac{(1-\delta_{k+1})(\gamma - 1) L_{\theta, k}}{4} \lVert \thetab_{k} - \thetab_{k-1} \rVert^{2} + \cfrac{(\gamma - 1) L_{\theta, k}}{4} \\
& \lVert \thetab_{K} - \thetab_{K-1} \rVert^{2} - \cfrac{\delta_{1} (\gamma - 1) L_{\theta, 0}}{4} \lVert \thetab_{0} - \theta_{-1} \rVert^{2} + \sum_{k=1}^{K} \sum_{q=1}^{n} \sum_{i=1}^{l} \\ &  a_{i} \lVert \mathbf{h}^{(i)}_{q,k} - \mathbf{h}^{(i)}_{q,k-1} \rVert^{2} + \sum_{k=1}^{K}\sum_{j=1}^{l+1}  b_{j} \lVert \mathbf{V}^{(j)}_{k} - \mathbf{V}^{(j)}_{k-1} \rVert_{\mathrm{F}}^{2}.
\end{split}
\end{equation}
Since $\thetab_{0} = \thetab_{-1} $, we end up with
\begin{equation}\label{firstpart-1}
\begin{split}
& Q(\mathbf{h}_{1:n,0}^{(1:l)}, \mathbf{V}_{0}^{(1:l+1)}, \thetab_{0}) - Q(\mathbf{h}_{1:n,K}^{(1:l)}, \mathbf{V}_{K}^{(1:l+1)}, \thetab_{K}) \geq\\
&~~~
\sum_{k=1}^{K}
\cfrac{(1-\delta_{k+1})(\gamma - 1) L_{\theta, k}}{4} \lVert \thetab_{k} - \thetab_{k-1} \rVert^{2} +  \sum_{k=1}^{K} \sum_{q=1}^{n} \sum_{i=1}^{l} \\
&~~~  a_{i} \lVert \mathbf{h}^{(i)}_{q,k} - \mathbf{h}^{(i)}_{q,k-1} \rVert^{2} + \sum_{k=1}^{K}\sum_{j=1}^{l+1}  b_{j} \lVert \mathbf{V}^{(j)}_{k} - \mathbf{V}^{(j)}_{k-1} \rVert_{\mathrm{F}}^{2},
\end{split}
\end{equation}
which guarantees
\begin{equation}\label{finalformula2}
\begin{split}
Q(\mathbf{h}_{1:n,K}^{(1:l)}, \mathbf{V}_{K}^{(1:l+1)}, \thetab_{K}) \leq Q(\mathbf{h}_{1:n,0}^{(1:l)}, \mathbf{V}_{0}^{(1:l+1)}, \thetab_{0}).
\end{split}
\end{equation}
Letting $K\to\infty$ in this inequality, it can be deduced

\begin{equation}\label{eq64}
\begin{split}
&  \sum_{k=1}^{K} \sum_{q=1}^{n} \sum_{i=1}^{l} a_{i} \lVert \mathbf{h}^{(i)}_{q,k} - \mathbf{h}^{(i)}_{q,k-1} \rVert^{2} + \sum_{k=1}^{K}\sum_{j=1}^{l+1}  b_{j} \lVert \mathbf{V}^{(j)}_{k} - \mathbf{V}^{(j)}_{k-1} \rVert_{\mathrm{F}}^{2} \\
& + \sum_{k=1}^{K} \cfrac{(1-\delta_{k+1})(\gamma - 1) L_{\theta, k}}{4} \lVert \thetab_{k} - \thetab_{k-1} \rVert^{2} \leq \infty,
\end{split}
\end{equation}
which leads to our desired results.
\section*{APPENDIX C}\label{appendixC}
\vspace*{-0.2em}
\section*{Proof of Proposition 3}
	It follows from the inequality \eqref{finalformula2} that
\begin{equation}
\left(\lbrace \mathbf{h}^{(i)}_{k} \rbrace_{i=1}^{l} , \lbrace \mathbf{V}^{(j)}_{k}\rbrace _{j=1}^{l+1},\thetab_{k}\right) \in \mathcal{L}_Q\left(\lbrace \mathbf{h}^{(i)}_{0} \rbrace_{i=1}^{l} , \lbrace \mathbf{V}^{(j)}_{0}\rbrace _{j=1}^{l+1},\thetab_{0}\right),
\end{equation}
for the sublevel set
\small
\begin{equation}
\begin{split}
&\mathcal{L}_Q\left(\lbrace \mathbf{h}^{(i)}_{0} \rbrace_{i=1}^{l} , \lbrace \mathbf{V}^{(j)}_{0}\rbrace _{j=1}^{l+1},\thetab_{0}\right)= \big\{(\lbrace \mathbf{h}^{(i)}_{k} \rbrace_{i=1}^{l} , \lbrace \mathbf{V}^{(j)}_{k}\rbrace _{j=1}^{l+1},\thetab_{k}):\\ 
& Q(\lbrace \mathbf{h}^{(i)}_{k} \rbrace_{i=1}^{l} , \lbrace \mathbf{V}^{(j)}_{k}\rbrace _{j=1}^{l+1},\thetab_{k})\leq Q(\lbrace \mathbf{h}^{(i)}_{0} \rbrace_{i=1}^{l} , \lbrace \mathbf{V}^{(j)}_{0}\rbrace _{j=1}^{l+1},\thetab_{0})\big\}.
\end{split}
\end{equation}
\normalsize
This clearly implies
\begin{align}
\left\{\lbrace \mathbf{h}^{(i)}_{k} \rbrace_{i=1}^{l} , \lbrace \mathbf{V}^{(j)}_{k}\rbrace _{j=1}^{l+1},\thetab_{k}\right\} \subseteq \mathcal{L}_Q\left(\lbrace \mathbf{h}^{(i)}_{0} \rbrace_{i=1}^{l} , \lbrace \mathbf{V}^{(j)}_{0}\rbrace _{j=1}^{l+1},\thetab_{0}\right).
\end{align}
Therefore, its boundedness is followed by the boundedness of the lower level set.

\section*{APPENDIX D}\label{appendixC}
\vspace*{-0.2em}
\section*{Proof of Theorem 1}
    \begin{enumerate}
	\item[(a)] In Proposition 3, we showed that the sequence $\Big\lbrace  \lbrace \mathbf{h}^{(i)}_{k} \rbrace_{i=1}^{l} , \lbrace \mathbf{V}^{(j)}_{k}\rbrace _{j=1}^{l+1},\thetab_{k} \Big\rbrace_{k \in \mathcal{I}}$ is bounded. The Bolzano-Weierstrass theorem~\cite{Bolzano} guarantees that a bounded sequence contains finite limit points.
	\item[(b)]  If $\bar{\thetab}$ is a limit point of $\lbrace \thetab_{k} \rbrace$, then there exists a
	subsequence  $\lbrace \thetab_{k} \rbrace_{k \in \mathcal{I}}$ converging to $\bar{\thetab}$. From Proposition 2, we have $\lVert\thetab_{k+1} - \thetab_{k}\rVert_{2} \rightarrow 0$, i.e., for any $\gamma \geq 0$, $\lbrace \thetab_{k+\gamma} \rbrace  \rightarrow \bar{\thetab}$. It follows from  \eqref{eq_39_edited} that
	\begin{equation}\label{eq93}
	\begin{split}
	& \small \underbrace{\nabla_{\theta}^{T}{g \Big ( \mathbf{h}^{(1:l)}_{k},\mathbf{V}^{(1:l+1)}_{k},\thetab_{k}\Big)} \big(\thetab_{k} - \widehat{\thetab}_{k-1}\big)}_{\mathrm{I}}  + \underbrace{\cfrac{1}{2 \mu_{\theta,k}} \lVert  \thetab_{k}  - \widehat{\thetab}_{k-1}  \rVert^{2}}_\mathrm{II} \\
	& \normalsize+ r_{\theta}(\thetab_{k}) 
	\le
	\nabla_{\theta}^{T}{g \Big ( \mathbf{h}^{(1:l)}_{k},\mathbf{V}^{(1:l+1)}_{k},\widehat{\thetab}_{k-1}\Big)} \big(\thetab - \widehat{\thetab}_{k-1}\big) \: 
	\\& + \cfrac{1}{2 \mu_{\theta,k}} \lVert  \thetab  - \widehat{\thetab}_{k-1} \rVert^{2} + r_{\theta}(\thetab) .
	\end{split} 
	\end{equation}
	Taking the limit from both sides of this inequality with $\mathcal{I} \ni k\rightarrow\infty$, the terms $\mathrm{I}$ and $\mathrm{II}$ tend to zero, i.e.,
	
	\begin{equation}\label{eq94}
	\begin{split}
	&  \lim_{\mathcal{I} \in k \rightarrow \infty} \: \: r_{\theta}(\thetab_{k}) \le
	\nabla_{\theta}^{T}{g \big (\bar{\mathbf{h}}^{(1:l)}, \bar{\mathbf{V}}^{(1:l+1)}, \bar{\thetab}\big)} 
	\big(\thetab - \bar{\thetab}\big) \\
	& + \cfrac{1}{2 \bar{\mu}_{\theta}} \lVert \thetab -\bar{\thetab} \rVert^{2} + 
	r_{\theta}(\thetab) ,
	\end{split}
	\end{equation}
	\textcolor{black}{where $\bar{\mu}_{\theta}$ is a constant in the interval $(0, $\(1/\bar{L}_{\theta}\)$]$, and $\bar{L}_{\theta}$ is the Lipschitz constant of $\nabla_{\theta}^{T}{g \big (\mathbf{h}^{(1:l)}, \mathbf{V}^{(1:l+1)}, \thetab\big)}$ at $\thetab = \bar{\thetab}$}. Setting $\thetab = \bar{\thetab}$ in  \eqref{eq94} leads to
	\begin{equation}\label{eq95}
	\lim_{\mathcal{I} \ni k \rightarrow \infty} \:  \: r_{\theta}(  \thetab_{k}  ) \le          r_{\theta}(\bar{\thetab}).
	\end{equation}
	On the other hand, $r_{\theta}(\thetab)$ is lower semi-continuous~\cite{amini2019}, i.e.,
	\begin{equation}\label{eq96}
	\lim_{\mathcal{I} \ni k \rightarrow \infty} \: \: r_{\theta}(  \thetab_{k}  ) \ge          r_{\theta}(\bar{\thetab}),
	\end{equation}
	leading to
	\begin{equation}\label{eq97}
	\lim_{\mathcal{I} \ni k \rightarrow \infty} \: \: r_{\theta}(  \thetab_{k}  )  =  r_{\theta}(\bar{\thetab}) \ .
	\end{equation}
	Since $\ell_{1}$ and nuclear norm regularizers ($ r_{h}(\mathbf{h}_{q}^{(i)})$ and $r_{v}(\mathbf{V}^{(j)})$) are lower semi-continuous~\cite{norm1}, the first $2l+1$ regularization terms in $(35a)$ and $(35b)$ are lower semi-continuous. As such, the proof is similar for all other regularization terms.
	
	\item[(c)] By \eqref{eq94} and Assertion (b), we can write
	\begin{equation}\label{eq98}
	\begin{split}
	\bar{\thetab} &= \mathrm{arg} \ \underset{\thetab}{\mathrm{min}} \: \:
	\nabla_{\theta}^{T}{g \big (\bar{\mathbf{h}}^{(1:l)}, \bar{\mathbf{V}}^{(1:l+1)}, \bar{\thetab}\big)} 
	\big(\thetab - \bar{\thetab}\big) \\[6pt]
	& + \cfrac{1}{2 \bar{\mu}_{\theta}} \lVert \thetab-\bar{\thetab} \rVert^{2} +  r_{\theta}(\thetab) .
	\end{split}
	\end{equation}
	
	Thus, the first-order optimality condition is met for the objective function in \eqref{eq98} at $\thetab = \bar{\thetab}$, and we have
	
	\begin{equation}\label{eq99}
	\mathbf{0} \in \nabla_{\theta}{g \big (\bar{\thetab}\big)} + \partial r_{\theta}(\bar{\thetab}) .
	\end{equation}
	
	Using \eqref{eq98} and \eqref{eq99}, it is readily proved that $\lbrace \bar{\thetab} \rbrace$ is a critical point of the objective function defined in \eqref{eq_32}. This process is similar for both $\mathbf{h}^{(1:l)}$ and $\mathbf{V}^{(1:l+1)}$.
	
	\item[(d)] Using Assertion (b), along with the continuity of $g\big(\mathbf{h}^{(1:l)}_{1:n}, \mathbf{V}^{(1:l+1)}, \thetab\big)$, it can be deduced
	\begin{equation}\label{eq100}
	\begin{split}
	&\lim_{\mathcal{I} \ni k \rightarrow \infty} Q\big( \mathbf{h}^{(1:l)}_{k},\mathbf{V}^{(1:l+1)}_{k},\thetab_{k} \big) =     \\
	&\lim_{\mathcal{I} \ni k \rightarrow \infty} g\big( \mathbf{h}^{(1:l)}_{k}, \mathbf{V}^{(1:l+1)}_{k},\thetab_{k} \big) 
	+ \lim_{\mathcal{I} \ni k \rightarrow \infty} r_{\theta}(\thetab_{k}) = \\
	& g\big( \bar{\mathbf{h}}^{(1:l)},\bar{\mathbf{V}}^{(l+1)},\bar{\thetab} \big)  + r_{\theta}(\bar{\thetab}) 
	= Q\big( \bar{\mathbf{h}}^{(1:l)},\bar{\mathbf{V}}^{(1:l+1)},\bar{\thetab} \big),
	\end{split}
	\end{equation}
\end{enumerate}
which completes the proof.

% trigger a \newpage just before the given reference
% number - used to balance the columns on the last page
% adjust value as needed - may need to be readjusted if
% the document is modified later
%\IEEEtriggeratref{8}
% The "triggered" command can be changed if desired:
%\IEEEtriggercmd{\enlargethispage{-5in}}

% references section

% can use a bibliography generated by BibTeX as a .bbl file
% BibTeX documentation can be easily obtained at:
% http://mirror.ctan.org/biblio/bibtex/contrib/doc/
% The IEEEtran BibTeX style support page is at:
% http://www.michaelshell.org/tex/ieeetran/bibtex/
\bibliographystyle{IEEEtran}
% argument is your BibTeX string definitions and bibliography database(s)
\bibliography{myref_MC_usual_ref}
\end{document}